\theoremstyle{plain}
\newtheorem{theorem}{Theorem}
\newtheorem{corollary}[theorem]{Corollary}
\newtheorem{proposition}[theorem]{Proposition}
\newtheorem{lemma}[theorem]{Lemma}
\theoremstyle{definition}
\newtheorem{definition}[theorem]{Definition}
\newtheorem{example}[theorem]{Example}
\newtheorem{remark}[theorem]{Remark}
\numberwithin{equation}{section}
\numberwithin{theorem}{section}
\begin{document}

\centerline{\Large \bf $k$--symplectic Lie systems: theory and applications}
\centerline{J. de Lucas$^\dagger$ and S. Vilari\~no$^\ddagger$}
\vskip 0,25cm
\centerline{$^\dagger$Department of Mathematical Methods in Physics. University of Warsaw.}

\centerline{ul. Pasteura 5, 02-093, Warszawa, Poland.}
\vskip 0,25cm
\centerline{$^\ddagger$Centro Universitario de la Defensa Zaragoza. IUMA. Universidad de Zaragoza.}

\centerline{Academia General Militar. C. de Huesca s/n, E-50090, Zaragoza, Spain. }


\date{Received: ---- / Accepted: ----}

\begin{abstract}
A {\it Lie system} is a system of first-order ordinary differential
equations describing the integral curves of a $t$-dependent vector field taking values in a finite-dimensional real Lie algebra of vector fields: a so-called {\it Vessiot--Guldberg Lie algebra}. We suggest the definition of a particular class of Lie systems,  the $k$--symplectic Lie systems, admitting a Vessiot--Guldberg Lie algebra of Hamiltonian vector fields with respect to the presymplectic forms of a $k$--symplectic structure. We devise new $k$--symplectic geometric methods to study their
superposition rules, time independent constants of motion and general properties. Our results are
illustrated by examples of physical and mathematical interest. As a byproduct, we find a new interesting setting of application of the $k$--symplectic geometry: systems of first-order ordinary differential equations.
\end{abstract}

\noindent\textbf{PACS}: 02.40.Hw,\, 02.40.Yy\,.

\noindent\textbf{MSC 2000}: 34A26, 34A34, 53Z05.

\noindent\textbf{Keywords:} $k$--sym\-plectic structure, Lie system, Poisson structure, superposition rule,   Vessiot--Guldberg Lie algebra.

\section{Introduction}
The interest of geometric techniques for investigating systems of differential equations is undeniable \cite{CRC94,A89,Be84,Ol91,WTC83}. For instance, symplectic and Poisson geometry techniques have been employed to uncover interesting structures of many dynamical systems \cite{Ol91}. Further, other more recently discovered geometric structures, e.g. Dirac or Jacobi structures, have also proved their usefulness for studying differential equations and related topics \cite{Gu-1987,LZ11,YM06}. In this work, we focus upon the study of a particular class of differential equations, the Lie systems \cite{ADR12,CGM07,Dissertations,CV13,Pi13,Va13}, by means of the referred to as $k$--symplectic structures \cite{Aw-1992,Gu-1987,LV-2012}.

A {\it Lie system} is a system of first-order ordinary differential
equations  whose general solution can be expressed as a function, the {\it superposition rule}, of  a generic finite collection of particular solutions and a set of constants. In contemporary geometric terms, the {\it Lie--Scheffers Theorem} \cite{CGM07} asserts that a Lie
system is equivalent to a $t$-dependent vector field taking
values in a finite-dimensional Lie algebra of vector fields: a  {\it
Vessiot--Guldberg Lie algebra} \cite{Dissertations,Ib00,Ib09,RA}. This condition is so
stringent that just few systems of differential equations can be regarded as Lie
systems \cite{Dissertations}. Nevertheless, Lie systems appear in relevant physical and
mathematical problems and enjoy relevant geometric properties \cite{ADR12,Dissertations,
Clem06,Ru08,Ru10,Ib00,WintSecond,RA}, which strongly
prompt their analysis.

Some attention has lately been paid to Lie systems admitting
a Vessiot--Guldberg Lie algebra of Hamiltonian vector fields with respect to several geometric structures \cite{BBHLS13,BCHLS13,CLS13}. Surprisingly, studying these particular types of Lie systems led to investigate much more Lie systems and applications than before. The first attempt in this direction was performed by
Marmo, Cari\~nena and Grabowski \cite{CGM00}, who briefly studied Lie systems with Vessiot--Guldberg Lie algebras of Hamiltonian
vector fields relative to a symplectic structure. This line of research was posteriorly followed by several researchers \cite{ADR12,Ru10}.

In \cite{CLS13}, it was fully established the general theory of Lie systems admitting a Vessiot--Guldberg Lie algebra of Hamiltonian vector fields with respect to a Poisson structure, the {\it Lie--Hamilton systems}. For instance, this approach allows one to prove that the well-known invariant for Riccati equations \cite{PW}
$$
k=\frac{(x_1-x_3)(x_2-x_4)}{(x_1-x_4)(x_2-x_3)}
$$
can be retrieved as a Casimir element of a real Lie algebra of Hamiltonian functions \cite{BCHLS13}. Moreover, this work introduced the study of Poisson co-algebra techniques to obtain superposition rules for these systems \cite{BCHLS13}.

The {\it no-go Theorem for Lie--Hamilton systems}  \cite{CGLS13} is a useful tool to establish when Lie systems are not Lie--Hamilton ones. It has been employed to prove that relevant Lie systems are not Lie--Hamilton systems. Meanwhile, many such systems admit Vessiot--Guldberg Lie algebras of Hamiltonian vector fields with respect to a Dirac structure. This can be employed to
generalize the techniques employed for Lie--Hamilton systems to a larger class of Lie systems: the Dirac--Lie systems \cite{CGLS13}.

As a byproduct of studying Dirac--Lie systems, they appeared some Lie systems that admit a Vessiot--Guldberg Lie algebra with respect to several presymplectic structures. We here discover a new characteristic of many of these systems: the kernel of these presymplectic structures have trivial intersection, i.e. they form a $k$--symplectic structure \cite{Gu-1987}. Such systems are relevant as they describe Schwarzian equations \cite{CGLS13} and coupled Riccati equations \cite{BCHLS13}, which have applications in the theory of Lie systems, classical mechanics and other fields \cite{CGM07}.

In this work, we show that the above mentioned property can also be found in many other Lie systems, e.g. in Lie systems for studying diffusion equations or control systems \cite{Ni00,Ra06}. This suggests us to define a new type of Lie systems, the {\it $k$--symplectic Lie systems}, admitting Vessiot--Guldberg Lie algebras of Hamiltonian vector fields relative to the presymplectic forms of a $k$--symplectic structure.

The development of our new techniques to study $k$--symplectic Lie systems leads us to the  definition and analysis of new geometric structures for $k$--symplectic manifolds. We show that it is relevant, at least for our methods, to define generalisations of the usual structures of the symplectic geometry to the realm of $k$--symplectic structures. In this way, we define the here called $k$--symplectic Hamiltonian functions and $k$--symplectic vector fields. We construct certain Poisson algebras related to $k$--symplectic structures, the derived Poisson algebras, which are a key to obtain superposition rules for $k$--symplectic Lie systems.
This significantly improve and generalise a very few results given in \cite{Aw-1992} and \cite{merino} where analogous of our new structures, e.g. $k$--symplectic Hamiltonian functions, very briefly appear under other denominations. We have kept our terminology as we think that it reflects the fact that we are generalising presymplectic notions. Moreover, our results are more general since, for instance, the $k$--symplectic Hamiltonian functions appearing in \cite{Aw-1992}, the referred to as {\it Hamiltonian maps}, are defined only for a certain type of $k$--symplectic structures. More importantly, we show that these geometric structures play a relevant r\^ole in the theory of Lie systems, which fully justify their analysis.

At present, the $k$--symplectic geometry is mainly applied  to the study of first-order classical field theories. In particular, it allows us to give a geometric description of the Euler--Lagrange and the Hamilton--De Donder--Weyl field equations, 
as well as the study of properties of these systems such as the study of constraints, symmetries, conservation laws, reduction, etcetera \cite{Aw-1992,Gu-1987,LV-2012,MRSV13, MRS04,RSV07}. Meanwhile, we consider $k$--symplectic structures for studying systems of differential equations, which opens a new setting of application of these geometrical structures.

We demonstrate that $k$--symplectic Lie systems can be considered as Dirac--Lie systems in several non-equivalent ways. This does not mean that $k$--symplectic Lie systems must be consider simply as Dirac--Lie systems. Indeed, the techniques devised for $k$--symplectic Lie systems are more powerful since, roughly speaking, they permit us to use all these non-equivalent Dirac--Lie systems at the same time.
For instance, we illustrate that a Schwarzian equation \cite{Be07,OT09} can be studied as a $k$--symplectic Lie system or as a Dirac--Lie system in different manners. The $k$--symplectic structure allows us to obtain simultaneously several constants of motion giving rise to a superposition rule for these differential equations. Meanwhile, if we consider Schwarzian equations as Dirac--Lie system, these constants of motion must be obtained separately using different geometric arguments.

The structure of the paper goes as follows. Section \ref{LSLS} concerns the description
of the most basic notions to be used
 throughout our paper: $t$--dependent vector
  fields, Lie systems and $k$--symplectic structures. In Section \ref{NkSLS} the analysis of several remarkable Lie
systems leads us to introduce the concept of a $k$--symplectic Lie system, which encompasses such systems as
particular cases. As it can be difficult to determine whether a Lie system is a $k$--symplectic Lie system; we provide a no-go theorem to determine necessary conditions to be a $k$--symplectic Lie system in Section \ref{nogo}. Sections \ref{OHam} and \ref{DPA} are devoted to introducing some geometric structures which are employed to study $k$--symplectic Lie systems. In particular, in Section \ref{OHam} we introduce the notion of $\Omega$--Hamiltonian function as a generalisation of the Hamiltonian function notion, and in Section \ref{DPA} we relate $k$--symplectic structures to various Poisson algebras: its derived Poisson algebras. Subsequently, the $k$--symplectic Lie--Hamiltonian     structures are introduced and analyzed in Section \ref{HLS} and we analyse general properties of $k$--symplectic Lie systems in Section \ref{KSLS}. Section 9 is devoted to devising a method to calculate superposition rules for $k$--symplectic Lie systems. Finally, Section \ref{Outlook} summarizes our main results and present an
outlook of our future research on these systems.

\section{Fundamentals}\label{LSLS}

Unless otherwise stated, we assume all mathematical objects to be real, smooth,
and globally defined. This enables us to skip minor technical problems so as to stress the main points of our theory.

Given a linear space $V$ and a subset $\{v_1,\ldots,v_k\}\subset V$, we write $\langle v_1,\ldots,v_k\rangle$ for the linear hull of the vectors $v_1,\ldots,v_k$. We denote Lie algebras by pairs $(V,[\cdot,\cdot])$, where $V$ is endowed with a Lie
 bracket $[\cdot\,,\cdot]:V\times V\rightarrow V$. Given two subsets
$\mathcal{A}, \mathcal{B}\subset V$, we write $[\mathcal{A},\mathcal{B}]$ for the real linear space spanned by the Lie brackets between elements
 of $\mathcal{A}$ and $\mathcal{B}$. We define ${\rm
Lie}(\mathcal{B},V,[\cdot,\cdot])$ to be  the smallest Lie subalgebra
  of $V$ containing $\mathcal{B}$. For simplicity, we use ${\rm Lie}(\mathcal{B})$ and $V$ to represent ${\rm
Lie}(\mathcal{B},V,[\cdot,\cdot])$ and $(V,[\cdot,\cdot])$,
correspondingly, when their meaning is clear
 from context.

Let ${\rm pr}:P\rightarrow N$ be a fibre vector bundle and let $\Gamma({\rm pr})$ be
its $C^\infty(N)$--module of
  smooth sections.  If $\tau_N:TN\rightarrow N$ and $\pi_N:
T^*N\rightarrow N$ are the canonical projections
 associated with the tangent and cotangent bundle to $N$, respectively, then
$\Gamma(\tau_N)$ and $\Gamma(\pi_N)$
  designate the $C^\infty(N)$--modules of vector fields and one-forms on $N$,
correspondingly.

A {\it generalised distribution} $\mathcal{D}$ on a
manifold $N$ is a function that maps each $x\in N$ to  a linear
 subspace $\mathcal{D}_x\subset T_xN$. We say that $\mathcal{D}$ is
{\it regular at } $x'\in N$ when the function
  ${\rm r}:x\in N\mapsto \dim\mathcal{D}_x\in \mathbb{N}\cup \{0\}$  is locally constant
around $x'$. Similarly, $\mathcal{D}$ is regular on an open $U\subset N$ when ${\rm r}$ is constant on $U$. A vector field  $Y\in\Gamma(\tau_N)$ is said to
   take values in $\mathcal{D}$, in short $Y\in\mathcal{D}$, when
$Y_x\in\mathcal{D}_x$ for all $x\in N$. Likewise, similar
   notions can be defined for a {\it generalised codistribution}, namely a
mapping relating each $x\in N$ to a linear subspace of $T_x^*N$.

We call {\it $t$-dependent vector field} on $N$ a function $X:(t,x)\in\mathbb{R}\times
N\mapsto X(t,x)\in TN$
satisfying that $\tau_N\circ X=\pi_2$, for $\pi_2:(t,x)\in\mathbb{R}\times N\mapsto
x\in N$. This implies that every $t$-dependent vector field is equivalent to
a family of vector fields $\{X_t\}_{t\in\mathbb{R}}$ with $X_t:x\in N\mapsto
X(t,x)\in TN$ for all $t\in\mathbb{R}$  \cite{Dissertations}.

An {\it integral curve} of $X$ is an integral curve
$\gamma:\mathbb{R}\mapsto \mathbb{R}\times N$ of the {\it suspension} of $X$,
i.e. the vector field $\partial/\partial t+X(t,x)$ on $\mathbb{R}\times N$ \cite{FM}. Every
integral curve $\gamma$ of $X$ admits a parametrization in terms of a parameter $\bar
t$ such that $\gamma(\bar t)=(\bar t,x(\bar t))$ and
$$
\frac{{\rm d}(\pi_2 \circ \gamma)}{{\rm d}\bar t}(\bar t)=(X\circ \gamma)(\bar t).
$$
We call this system the {\it associated system} of $X$. Conversely,
every system of first-order differential equations in normal form describes the
integral curves $\gamma(\bar t)=(\bar t,x(\bar t))$ of a  unique $t$-dependent vector field. This gives rise to a
bijection between $t$-dependent vector fields and systems of first-order
differential equations in normal form, which
 justifies to use $X$ to represent both a $t$-dependent vector field and its
associated system.

\begin{definition}  The  {\it minimal Lie algebra} of a $t$-dependent vector
field $X$ on $N$ is the smallest real Lie algebra,
$V^X$, containing the vector fields $\{X_t\}_{t\in\mathbb{R}}$, namely $V^X={\rm
Lie}(\{X_t\}_{t\in\mathbb{R}})$.
\end{definition}

\begin{definition} Given a $t$-dependent vector field $X$ on $N$, its {\it
associated distribution},
$\mathcal{D}^X,$ is the generalised distribution on $N$ spanned by the vector
fields of $V^X$, i.e.
$$
\mathcal{D}^X_x=\{Y_x\mid Y\in V^X\}\subset T_xN,\qquad x\in N,
$$
and its {\it associated co-distribution}, $\mathcal{V}^X$, is the generalised
 co-distribution on $N$ of the form
 $$
\mathcal{V}^X_x=\{\theta\in T_x^*N\mid \theta(Z_x)=0,\forall
 \,\,Z_x\in \mathcal{D}_x^X\}=(\mathcal{D}^X_x)^\circ\subset T_x^*N,
 $$
 where $(\mathcal{D}^X_x)^\circ$ is the {\it annihilator} of $\mathcal{D}_x^X$.
\end{definition}

It can be proved that ${\rm r}^X:x\in N\mapsto
\dim\mathcal{D}^X_x\in\mathbb{N}\cup \{0\}$ only must be
 constant on the connected components of an open and dense subset $U^X$ of $N$
 (see \cite{CLS13}), where $\mathcal{D}^X$
  becomes a regular, involutive and integrable distribution. Since
    $\dim \mathcal{V}^X_x=\dim\, N- {\rm r}^X(x)$, then $\mathcal{V}^X$ becomes a
 regular co-distribution on each connected component of $U^X$ also.
The most relevant instance for us is when $\mathcal{D}^X$ is determined by a
finite-dimensional $V^X$. In this case, $\mathcal{D}^X$
becomes integrable on $N$ in the sense of Stefan-Sussmann \cite[p. 63]{JPOT}.
 Note that even in this case, $\mathcal{V}^X$ does not need to be a {\it
 differentiable distribution}, i.e. given
 $\theta\in\mathcal{V}^X_x$, it does not generally exist a
 locally defined one-form $\vartheta\in\mathcal{V}^X$ such that $\vartheta_x=\theta$.

Among other results, the associated distribution is important to study superposition rules for Lie systems \cite{GL13}.  Meanwhile, the associated co-distribution appears in the study of constants of motion for Lie systems \cite{CLS13}. For instance, the following proposition described in \cite{CLS13} shows that (locally defined) $t$-independent
constants of motion of $t$-dependent vector fields are determined
 by (locally defined) exact one-forms taking values in its associated
co-distribution. Then, $\mathcal{V}^X$ is what really matters in the
 calculation of such constants of motion for a system $X$.

\begin{proposition}\label{NuX} A function $f:U\rightarrow \mathbb{R}$
is a $t$-independent constant of
motion for a system $X$ on an open $U$ if and only if ${\rm d}f\in \mathcal V^X|_U$.
\end{proposition}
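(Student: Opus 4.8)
The plan is to prove the equivalence by unwinding the definition of a $t$-independent constant of motion and comparing it with the definition of the associated co-distribution $\mathcal{V}^X$. Recall that $f:U\to\mathbb{R}$ is a $t$-independent constant of motion for $X$ when $f$ is constant along every integral curve of $X$ lying in $U$; equivalently, since the parametrised integral curves satisfy $\tfrac{\rm d}{{\rm d}\bar t}(\pi_2\circ\gamma)=X\circ\gamma$, this amounts to $X_t f=0$ on $U$ for all $t\in\mathbb{R}$.

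First I would show the "only if" direction. Assuming $X_t f = 0$ on $U$ for all $t$, the set of vector fields annihilating ${\rm d}f$ at each point of $U$ contains $\{X_t\}_{t\in\mathbb{R}}$; since the condition ${\rm d}f_x(Y_x)=0$ is linear in $Y_x$ and closed under the Lie bracket (because $Y f = Z f = 0$ implies $[Y,Z]f = Y(Zf)-Z(Yf)=0$), it follows that ${\rm d}f_x$ annihilates every element of the minimal Lie algebra $V^X$, hence ${\rm d}f_x\in(\mathcal{D}^X_x)^\circ=\mathcal{V}^X_x$ for all $x\in U$, i.e. ${\rm d}f\in\mathcal{V}^X|_U$.

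Conversely, for the "if" direction, suppose ${\rm d}f\in\mathcal{V}^X|_U$. Then ${\rm d}f_x$ annihilates $\mathcal{D}^X_x$ for every $x\in U$, and in particular ${\rm d}f_x((X_t)_x)=0$, that is $X_t f = 0$ on $U$ for every $t\in\mathbb{R}$. Consequently $\tfrac{\rm d}{{\rm d}\bar t}(f\circ\pi_2\circ\gamma)(\bar t)={\rm d}f_{x(\bar t)}\big((X\circ\gamma)(\bar t)\big)=0$ along any integral curve $\gamma(\bar t)=(\bar t,x(\bar t))$ of $X$ contained in $U$, so $f$ is constant on each such curve; therefore $f$ is a $t$-independent constant of motion for $X$ on $U$.

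The only delicate point is the passage, in the "only if" direction, from "${\rm d}f$ annihilates each $X_t$" to "${\rm d}f$ annihilates all of $V^X$"; this is handled by the Leibniz-type computation $[Y,Z]f=Y(Zf)-Z(Yf)$ showing that the pointwise annihilator of an exact one-form is a Lie subalgebra of $\Gamma(\tau_N)$ containing $\{X_t\}_{t\in\mathbb{R}}$, hence containing ${\rm Lie}(\{X_t\}_{t\in\mathbb{R}})=V^X$ and a fortiori spanning $\mathcal{D}^X_x$ at each point. Everything else is a direct translation between the integral-curve characterisation of constants of motion and the pointwise annihilator description of $\mathcal{V}^X$, so no real obstacle arises beyond this observation.
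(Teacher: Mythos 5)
Your proof is correct. Note that the paper itself does not prove this proposition (it is quoted from the reference [CLS13]), so there is no in-text argument to compare against; your reasoning --- reducing the statement to $X_tf=0$ on $U$ for all $t$, and then observing that $\{Y\in\Gamma(\tau_N)\mid Yf=0\ \text{on}\ U\}$ is a real Lie subalgebra containing $\{X_t\}_{t\in\mathbb{R}}$ and hence all of $V^X$, so that ${\rm d}f$ annihilates the whole of $\mathcal{D}^X_x$ and not merely the span of the $(X_t)_x$ --- is exactly the standard argument, and you correctly single out the Lie-closure step as the only nontrivial point.
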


\begin{definition} A {\it superposition rule} depending on $m$ particular
solutions for a system $X$ on $N$
 is a mapping $\Phi:N^{m}\times N\rightarrow
N$, $x=\Phi(x_{(1)}, \ldots,x_{(m)};\lambda)$, such that the general
 solution $x(t)$ of $X$ can be cast into the form
  $x(t)=\Phi(x_{(1)}(t), \ldots,x_{(m)}(t);\lambda),$
where $x_{(1)}(t),\ldots,x_{(m)}(t)$ is any generic collection of
particular solutions and $\lambda$ is a point of $N$ to be related to initial
conditions.
 \end{definition}

The conditions ensuring that a system $X$ possesses a superposition rule are
established
 by the {\it Lie--Scheffers Theorem} \cite{CGM07,LS}.

\begin{theorem} A system $X$ admits a superposition rule if and only if $X$ can be written as
 $X={{\sum_{\alpha=1}^r}}b_\alpha(t)X_\alpha$
for a certain family $b_1(t),\ldots,b_r(t)$  of $t$-dependent functions and a
collection  $X_1,\ldots,X_r$ of vector fields  spanning
an $r$-dimensional real Lie algebra. In other words, a
system $X$ admits a superposition rule if and only if $V^X$ is
finite-dimensional.
\end{theorem}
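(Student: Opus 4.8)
\emph{Proof proposal.} Both implications rest on the diagonal prolongation of vector fields and $t$-dependent vector fields to $N^{m+1}$ together with Proposition \ref{NuX}. Write $\widetilde Y^{[m+1]}$ for the diagonal prolongation of a vector field $Y$ on $N$ to $N^{m+1}$, recall that $Y\mapsto \widetilde Y^{[m+1]}$ is an injective Lie algebra morphism, and set $\widetilde X^{[m+1]}_t:=\sum_\alpha b_\alpha(t)\widetilde X^{[m+1]}_\alpha$ whenever $X_t=\sum_\alpha b_\alpha(t)X_\alpha$. The crucial elementary observation to be used repeatedly is that, for a generic point $(p_1,\dots,p_k)\in N^k$, the evaluation map $V^X\ni Y\mapsto (Y_{p_1},\dots,Y_{p_k})$ is injective \emph{iff} the generic rank on $N^k$ of the distribution spanned by $\{\widetilde Y^{[k]}:Y\in V^X\}$ equals $\dim V^X$, and that this generic rank is non-decreasing in $k$.

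\textbf{From a finite-dimensional $V^X$ to a superposition rule.} Assume $V^X=\langle X_1,\dots,X_r\rangle$. First I would fix $m$ large enough that (i) the generic rank on $N^{m+1}$ of $\widetilde{\mathcal D}:=\langle \widetilde X_1^{[m+1]},\dots,\widetilde X_r^{[m+1]}\rangle$ is exactly $r$, which is possible because that rank stabilises as $m$ grows and cannot stabilise below $r$ (otherwise a nonzero element of $V^X$ would vanish at all generic points), and (ii) $mn\ge r$. By the Lie--Scheffers setting $\widetilde{\mathcal D}$ is involutive and, being spanned by a finite-dimensional Lie algebra, Stefan--Sussmann integrable; it is regular on a dense open subset of $N^{m+1}$. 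Condition (i) forces its leaves to meet the fibres $\{(p_1,\dots,p_m)\}\times N$ transversally, since a leaf-tangent vector with vanishing first $m$ components is $\widetilde Y^{[m+1]}$ for some $Y\in V^X$ with $Y_{p_1}=\dots=Y_{p_m}=0$, hence with $Y_{p_{m+1}}=0$. Consequently, near a generic point one can choose $n$ common first integrals $F_1,\dots,F_n$ of $\widetilde{\mathcal D}$ whose differentials restrict to a basis of $T^*_{p_{m+1}}N$; by (ii) enough such first integrals are available. Setting $\lambda_j=F_j$ and solving $F_j(x_{(1)},\dots,x_{(m)},x)=\lambda_j$ for $x$ via the implicit function theorem yields $x=\Phi(x_{(1)},\dots,x_{(m)};\lambda)$. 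This $\Phi$ is a superposition rule: if $x_{(1)}(t),\dots,x_{(m)}(t)$ and $x(t)$ solve $X$, then $t\mapsto(x_{(1)}(t),\dots,x_{(m)}(t),x(t))$ is an integral curve of $\widetilde X^{[m+1]}$, which takes values in $\widetilde{\mathcal D}$; hence the $F_j$ are constant along it and $x(t)=\Phi(x_{(1)}(t),\dots,x_{(m)}(t);\lambda)$ with $\lambda$ fixed, while surjectivity of the initial-value map shows every such expression is a solution.

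\textbf{From a superposition rule to a finite-dimensional $V^X$.} Assume $\Phi:N^m\times N\to N$ is a superposition rule. Since the general solution $x(t)=\Phi(x_{(1)}(t),\dots,x_{(m)}(t);\lambda)$ must recover arbitrary initial data $x(t_0)$ for generic fixed $x_{(i)}(t_0)$, the map $\lambda\mapsto\Phi(x_{(1)},\dots,x_{(m)};\lambda)$ is a local diffeomorphism for generic fixed $x_{(i)}$; write its inverse as $\lambda=\Psi(x_{(1)},\dots,x_{(m)},x)$, so $\Psi$ has maximal rank $n=\dim N$ in $x$. As the arguments run over solutions of $X$, the $n$ components of $\Psi$ are $t$-independent constants of motion of $\widetilde X^{[m+1]}$ on $N^{m+1}$; equivalently, by Proposition \ref{NuX}, $\mathrm d\Psi_1,\dots,\mathrm d\Psi_n\in\mathcal V^{\widetilde X^{[m+1]}}$, so every element of $V^{\widetilde X^{[m+1]}}$ (the diagonal prolongation of $V^X$) is tangent to the fibres of $\Psi:N^{m+1}\to N$. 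At a generic point, $\ker\mathrm d\Psi$ meets $0\oplus\cdots\oplus 0\oplus T_{p_{m+1}}N$ trivially (as $\Psi$ is a local diffeomorphism in $x$), so the projection $\ker\mathrm d\Psi\to T_{p_1}N\oplus\cdots\oplus T_{p_m}N$ is injective; hence $Y_{p_{m+1}}$ is determined by $(Y_{p_1},\dots,Y_{p_m})$ for every $Y\in V^X$. It follows that $V^X\ni Y\mapsto(Y_{p_1},\dots,Y_{p_m})$ is injective for generic $(p_1,\dots,p_m)$, whence $\dim V^X\le mn$; taking a basis $X_1,\dots,X_r$ of $V^X$ and writing $X_t=\sum_\alpha b_\alpha(t)X_\alpha$ (with $b_\alpha$ smooth in $t$, by smoothness of $X$) gives the desired form.

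\textbf{Main obstacle.} The routine parts are the Lie-algebra and tangency bookkeeping; the delicate points are, in the first implication, establishing stabilisation of the generic rank at $r$ and upgrading the foliation of $N^{m+1}$ to an honest, \emph{uniform} local graph $x=\Phi(x_{(1)},\dots,x_{(m)};\lambda)$ (a transversality-plus-implicit-function argument confined to a dense open subset), and, in the second, the inference ``generic injectivity of the evaluation map implies $\dim V^X\le mn$'', which ultimately uses that a vector field in $V^X$ vanishing on a dense open set vanishes identically; both are handled cleanly by working over the open dense subset $U^X$ on which $\mathcal D^X$ is regular and invoking Stefan--Sussmann integrability there.
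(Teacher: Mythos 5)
The paper does not actually prove this statement: it is the classical Lie--Scheffers theorem, stated with references to \cite{CGM07,LS}, so there is no in-paper argument to compare yours against. Your sketch reproduces, in its essentials, the standard geometric proof from the cited reference \cite{CGM07} --- encoding a superposition rule as an $n$-codimensional foliation of $N^{m+1}$ invariant under the diagonal prolongations and transversal to the last factor, and passing back and forth between $V^X$ and that foliation --- and it is sound. One small slip: in the forward direction your condition (i) imposes generic rank $r$ of the prolonged distribution on $N^{m+1}$, but the transversality step ($Y_{p_1}=\dots=Y_{p_m}=0\Rightarrow Y=0\Rightarrow Y_{p_{m+1}}=0$) requires the evaluation map $Y\mapsto(Y_{p_1},\dots,Y_{p_m})$ to be injective already at the first $m$ points, i.e.\ generic rank $r$ on $N^m$; since the generic rank is non-decreasing in the number of copies and stabilises at $r$, this is repaired simply by choosing $m$ so that the rank equals $r$ already on $N^m$. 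The remaining technicalities you flag (genericity and density bookkeeping, the merely local validity of $\Phi$ and of the first integrals $F_j$) are precisely those the paper suppresses by its blanket smoothness and globality convention, and are treated in full in \cite{CGM07}.
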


The Lie--Scheffers Theorem may be utilised to reduce the integration  of a Lie
system to solving a special type
of Lie systems on a Lie group \cite{CGM00}. More exactly, every Lie system $X$ on a
manifold $N$ possessing a Vessiot--Guldberg
 Lie algebra $V$, let us say
$X={{\sum_{\alpha=1}^r}}b_\alpha(t)X_\alpha$, where
$X_1,\ldots,X_r$ form a basis of $V$,
  can be related to a (generally local) Lie group action $\varphi:G\times
N\rightarrow N$ whose fundamental
   vector fields coincide with those of $V$ and $\dim G=\dim V$ \cite[Theorem XI]{Palais}. This
action enables us to bring the general solution $x(t)$ of $X$ into the
   form $x(t)=\varphi(g(t),x_0)$, where $x_0\in N$ and $g(t)$ is the particular solution
with $g(0)=e$ of the Lie system
\begin{equation}\label{EquLie}
\frac{{\rm d}g}{{\rm d}t}=-\sum_{\alpha=1}^rb_\alpha(t)X_\alpha^R(g),\qquad g\in G,
\end{equation}
where $X^R_1,\ldots,X^R_r$ are a basis of the linear space of right-invariant vector fields on $G$ admitting the same structure constants as $-X_1,\ldots,-X_r$ (see \cite{CGM00}
for details). In this manner, the explicit integration of a Lie system $X$ reduces
to obtaining one particular solution of (\ref{EquLie}) if $\varphi$ is explicitly
known. Conversely, the general solution of $X$ enables us to construct the
solution
 for (\ref{EquLie}) with $g(0)=e$ by solving an algebraic system of equations obtained through $\varphi$
\cite{AHW81}.

A \textit{presymplectic manifold} is a pair $(N,\omega)$, where $\omega$ is a closed two-form on $N$.  We say that a vector field $X$ on $N$ is  \textit{Hamiltonian} with respect to $(N,\omega)$ if there exists a function $h\in \mathcal{C}^\infty(N)$ such that
$$
\iota_X\omega={\rm d}h.
$$
In this case, we call $h$ is a \textit{Hamiltonian function} for $X$. We write Adm$(\omega)$ for the space of \textit{Hamiltonian functions} relative to $(N,\omega)$. We also call these functions {\it admissible functions} of $(N,\omega)$. We hereafter denote by $X_h$, with $h\in C^\infty(N)$, a Hamiltonian vector field of $h$ relative to $\omega$. Since $\ker \omega$ may be degenerate, every function $h$ may have different Hamiltonian vector fields. It is well known that ${\rm Adm}(\omega)$ is a linear space that become a Poisson algebra when endowed with the Poisson bracket $\{\cdot,\cdot\}\colon {\rm Adm}(\omega)\times  {\rm Adm}(\omega)\to  {\rm Adm}(\omega)$ of the form
$$
\{f,g\}=X_gf,
$$
where $X_g$ is any Hamiltonian vector field of $g$. It can be proved that this definiton is independent of the chosen $X_g$ \cite{IV}.

Since $\omega$ may be degenerate, there may exist Hamiltonian vector fields related to a zero function. We call these vector fields {\it gauge vector fields} of $\omega$ and we write $G(\omega)$ for the space of such vector fields. It is immediate that $G(\omega)$ is an ideal of ${\rm Ham}(\omega)$. Hence, the space ${\rm Ham}(\omega)/G(\omega)$ is also a Lie algebra and the quotient projection $\pi:X\in {\rm Ham}(\omega)\mapsto [X]\in {\rm Ham}(\omega)/G(\omega)$ is a Lie algebra morphism. Moreover, we can define the following exact sequence of Lie algebras
$$
0\hookrightarrow {\rm H}_{\rm dR}^0(N)\hookrightarrow {\rm Adm}(\omega)\stackrel{\Lambda}{\rightarrow}\frac{{\rm Ham}(\omega)}{G(\omega)}\rightarrow0,\\
$$
where ${\rm H}_{\rm dR}^0(N)$ is the zero cohomology de Rham group of $N$ and $\Lambda:f\in {\rm Adm}(\omega)\mapsto [-X_f]\in{\rm Ham}(\omega)/G(\omega)$ (see \cite{CGLS13} for details).

\begin{definition} Let $N$ be an $n(k + 1)$-dimensional manifold and $\omega_1,\ldots,\omega_k$ a set of $k$ closed two-forms  on $N$. We say that $(\omega_1,\ldots,\omega_k)$ is a \textit{$k$--symplectic structure} if
\begin{equation}\label{k-symp_cond}
\bigcap_{i=1}^k\ker \omega_i(x)=\{0\}\,,
\end{equation}
for all $x\in N$.
We call $(N,\omega_1,\ldots,\omega_k)$ a \textit{$k$--symplectic manifold}.
\end{definition}
\begin{definition}\label{polysymplectic} A {\it $k$--polysymplectic form} on an $n(k+1)$-dimensional manifold $N$ is an $\mathbb{R}^k$-valued closed nondegenerated two-form on $N$ of the form
$$
\Omega=\sum_{i=1}^k\eta_i\otimes e^i,
$$ where
$\{e^1,\ldots,e^k\}$ is any basis for $\mathbb{R}^k$. The pair $(N,\Omega)$ is called a {\it $k$--polysymplectic manifold}.
\end{definition}
\begin{remark}
 Historically, the \textit{polysymplectic structures} (see Definition \ref{polysymplectic}) were introduced by G\"{u}nther in \cite{Gu-1987}, while the notion of \textit{$k$--symplectic manifold} was introduced by Awane \cite{Aw-1992} and independently by de Le\'{o}n \textit{et al} \cite{LMS-1988, LMS-1993} under the name of \textit{$k$--cotangent structures}. Note that the notion of $k$--symplectic structure considered in this paper is not exactly the definition given by Awane, because in Awane's definition a $k$--symplectic structure on a manifold is a family of $k$ closed two-forms such that (\ref{k-symp_cond}) holds and  there exists also an integrable distribution $V$ of dimension $nk$ such that $\omega_r\vert_{V\times V}=0$ for all $r=1,\ldots, k$. Observe that when $k=1$, Awane's definition reduces to the notion of polarized symplectic manifold, that is a symplectic manifold with a Lagrangian submanifold. For that, in \cite{LV-2012} we distinguish between \textit{$k$--symplectic} and \textit{polarized $k$--symplectic manifolds} and in this paper we follows the definition of $k$--symplectic manifold considered in \cite{LV-2012}.
\end{remark}

By taking a basis $\{e^1,\ldots,e^k\}$ of $\mathbb{R}^k$, every $k$--symplectic manifold $(N,\omega_1,\ldots,\omega_k)$ gives rise to a polysymplectic manifold $(N,\Omega=\sum_{i=1}^k\omega_i\otimes e^i)$. As $\Omega$ depends on the chosen  basis, the polysymplectic manifold $(N,\Omega)$ is not canonically constructed. Nevertheless, two polysymplectic forms $\Omega_1$ and $\Omega_2$ induced by the same $k$--symplectic manifold and different bases for $\mathbb{R}^k$ are the same up to a change of basis on $\mathbb{R}^k$. In this case, we say that $\Omega_1$ and $\Omega_2$ are {\it gauge equivalent}. In a similar way, we say that $(N,\omega_1,\ldots,\omega_k)$ and $(N,\omega'_1,\ldots,\omega'_k)$ are gauge equivalent if they give rise to gauge equivalent polysymplectic forms. We can summarize these results as follows.
\begin{proposition} Let ${\rm Sym}_k(N)$ and ${\rm Pol}_k(N)$ be the spaces of $k$--symplectic and $k$--polysymplectic structures on $N$, correspondingly. The relation $(N,\omega_1,\ldots, \omega_k) \mathcal{R}_1 (N,\omega'_1,\ldots, \omega'_k)$ $(\Omega_1 \mathcal{R}_2 \Omega_2)$ if and only if the $k$--symplectic structures ($k$--polysymplectic manifolds) are gauge equivalent is an equivalence relation. Moreover,
$$
\phi: [(\omega_1,\ldots,\omega_k)]\in {\rm Sym}_k(N)/\mathcal{R}_1\mapsto \left[\sum_{i=1}^k\omega_i\otimes e^i\right]\in {\rm Pol}_k(N)/\mathcal{R}_2
$$
is a bijection.
\end{proposition}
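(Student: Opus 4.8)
The plan is to verify the two claims separately: first that $\mathcal{R}_1$ and $\mathcal{R}_2$ are genuine equivalence relations, and then that $\phi$ is a well-defined bijection.

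For the equivalence relation part, recall that two $k$--polysymplectic forms $\Omega_1=\sum_i\eta_i\otimes e^i$ and $\Omega_2=\sum_i\eta_i'\otimes e^i$ on $N$ are declared gauge equivalent when they differ by a change of basis on $\mathbb{R}^k$; concretely, this means there is $A=(a^i_{\ j})\in GL(k,\mathbb{R})$ with $\eta_i'=\sum_j a^j_{\ i}\eta_j$ for all $i$. Reflexivity follows by taking $A=\mathrm{Id}$, symmetry by taking $A^{-1}$, and transitivity by composing the two linear maps; since $GL(k,\mathbb{R})$ is a group, all three axioms transfer immediately. The same argument, carried out coefficient-wise on the $k$ two-forms $\omega_i$, shows $\mathcal{R}_1$ is an equivalence relation — indeed, by the very definition given just above the proposition, $(N,\omega_1,\ldots,\omega_k)\,\mathcal{R}_1\,(N,\omega_1',\ldots,\omega_k')$ iff the associated polysymplectic forms $\sum_i\omega_i\otimes e^i$ and $\sum_i\omega_i'\otimes e^i$ are $\mathcal{R}_2$-related, so $\mathcal{R}_1$ is literally the pullback of $\mathcal{R}_2$ along the assignment $(\omega_1,\ldots,\omega_k)\mapsto\sum_i\omega_i\otimes e^i$, and a pullback of an equivalence relation along any map is an equivalence relation.

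Next I would show $\phi$ is well defined. Given a $k$--symplectic structure $(\omega_1,\ldots,\omega_k)$, the object $\Omega=\sum_i\omega_i\otimes e^i$ is closed because each $\omega_i$ is, and it is nondegenerate precisely because $\bigcap_i\ker\omega_i(x)=\{0\}$ for all $x$ — this is the matching of conditions \eqref{k-symp_cond} with Definition \ref{polysymplectic}, so $\Omega\in\mathrm{Pol}_k(N)$. If I replace the basis $\{e^i\}$ by $\{\tilde e^i\}$ with $e^i=\sum_j b^i_{\ j}\tilde e^j$ for $B=(b^i_{\ j})\in GL(k,\mathbb{R})$, then $\sum_i\omega_i\otimes e^i=\sum_j\big(\sum_i b^i_{\ j}\omega_i\big)\otimes\tilde e^j$, which is gauge equivalent to $\Omega$; hence the class $[\Omega]\in\mathrm{Pol}_k(N)/\mathcal{R}_2$ does not depend on the chosen basis. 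Moreover, if $(\omega_1,\ldots,\omega_k)\,\mathcal{R}_1\,(\omega_1',\ldots,\omega_k')$, then by definition their associated polysymplectic forms are $\mathcal{R}_2$-equivalent, so $\phi$ respects the equivalence classes and descends to a map on $\mathrm{Sym}_k(N)/\mathcal{R}_1$.

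It remains to prove $\phi$ is bijective. Surjectivity: given any $\Omega\in\mathrm{Pol}_k(N)$, fix a basis $\{e^i\}$ and write $\Omega=\sum_i\eta_i\otimes e^i$; the two-forms $\eta_i$ are closed, and nondegeneracy of $\Omega$ forces $\bigcap_i\ker\eta_i(x)=\{0\}$, so $(\eta_1,\ldots,\eta_k)$ is a $k$--symplectic structure with $\phi([(\eta_1,\ldots,\eta_k)])=[\Omega]$. Injectivity: suppose $\phi([(\omega_1,\ldots,\omega_k)])=\phi([(\omega_1',\ldots,\omega_k')])$, i.e. $\sum_i\omega_i\otimes e^i$ and $\sum_i\omega_i'\otimes e^i$ are $\mathcal{R}_2$-equivalent; but that is exactly the definition of $(\omega_1,\ldots,\omega_k)\,\mathcal{R}_1\,(\omega_1',\ldots,\omega_k')$, so the two classes in $\mathrm{Sym}_k(N)/\mathcal{R}_1$ coincide. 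I do not anticipate a serious obstacle here; the only point requiring a little care is keeping the bookkeeping of the $GL(k,\mathbb{R})$-action consistent (acting on the basis of $\mathbb{R}^k$ versus acting on the tuple of two-forms is the transpose-inverse of the other), and making explicit that nondegeneracy of $\Omega$ as an $\mathbb{R}^k$-valued form is equivalent to the triviality of the common kernel in \eqref{k-symp_cond} — a linear-algebra fact at each point $x\in N$ that underlies the whole correspondence.
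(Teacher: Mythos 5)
Your proof is correct and follows the only natural route: the paper actually states this proposition without proof, presenting it as a summary of the preceding paragraph on gauge equivalence, and your argument (the $GL(k,\mathbb{R})$-action makes both relations equivalence relations, $\mathcal{R}_1$ is the pullback of $\mathcal{R}_2$ along $(\omega_1,\ldots,\omega_k)\mapsto\sum_i\omega_i\otimes e^i$, and the pointwise identification of nondegeneracy of $\Omega$ with $\bigcap_i\ker\omega_i(x)=\{0\}$ gives surjectivity) is exactly the formalization the authors leave implicit. No gaps; the one point you flag about the $GL(k,\mathbb{R})$ bookkeeping is indeed the only place requiring care, and you handle it correctly.
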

So, we can  say that, up to gauge equivalence, $k$--symplectic and $k$--polysymplectic manifolds are essentially the same.

\begin{corollary}\label{EquMan}
Two $k$--symplectic manifolds $(N,\omega_1,\ldots,\omega_k)$ and $(N,\omega'_1,\ldots,\omega'_k)$ are equivalent if and only if $\langle \omega_1,\ldots,\omega_k\rangle=\langle \omega'_1,\ldots,\omega'_k\rangle$.
\end{corollary}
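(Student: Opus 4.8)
The plan is to reduce the statement to elementary linear algebra inside the space of two-forms on $N$, using the characterisation of gauge equivalence established above. Unwinding the definitions, a change of basis of $\mathbb{R}^k$ affects $\Omega=\sum_{i=1}^k\omega_i\otimes e^i$ only through its $\mathbb{R}^k$--component, in a linear and invertible fashion, so that $(N,\omega_1,\ldots,\omega_k)$ and $(N,\omega'_1,\ldots,\omega'_k)$ are gauge equivalent precisely when there is a matrix $A=(a_{ij})\in GL(k,\mathbb{R})$ with $\omega'_i=\sum_{j=1}^k a_{ij}\omega_j$ for all $i$. I would record this reformulation first, since both implications follow from it at once.

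For the implication from gauge equivalence to equality of linear hulls, I would simply observe that the relations $\omega'_i=\sum_j a_{ij}\omega_j$ give $\langle\omega'_1,\ldots,\omega'_k\rangle\subseteq\langle\omega_1,\ldots,\omega_k\rangle$, and that multiplying by $A^{-1}$ yields $\omega_i=\sum_j (A^{-1})_{ij}\omega'_j$, hence the reverse inclusion; so the two hulls coincide.

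For the converse, set $W=\langle\omega_1,\ldots,\omega_k\rangle=\langle\omega'_1,\ldots,\omega'_k\rangle$ and consider the two surjective linear maps $\phi,\phi'\colon\mathbb{R}^k\to W$ given by $\phi(c)=\sum_i c_i\omega_i$ and $\phi'(c)=\sum_i c_i\omega'_i$. Since $\dim\ker\phi=\dim\ker\phi'=k-\dim W$, I would construct the required $A\in GL(k,\mathbb{R})$ by fixing complements of $\ker\phi$ and of $\ker\phi'$ in $\mathbb{R}^k$, defining $A$ on the complement of $\ker\phi'$ so that $\phi\circ A=\phi'$ there (possible because $\phi$ restricts to an isomorphism of the complement of $\ker\phi$ onto $W$), and defining $A$ on $\ker\phi'$ as any linear isomorphism onto $\ker\phi$. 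This $A$ is invertible and satisfies $\phi\circ A=\phi'$, which is exactly the coefficient identity $\omega'_i=\sum_j a_{ij}\omega_j$, so the two structures are gauge equivalent.

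I expect the only real subtlety to lie in this last construction when the families $\omega_1,\ldots,\omega_k$ are linearly dependent: then the coefficients expressing the $\omega'_i$ in terms of the $\omega_j$ are not unique and a naive choice of coefficient matrix need not be invertible, which is precisely why I would argue through the maps $\phi,\phi'$ rather than with a fixed matrix. If one knows that the two-forms of a $k$--symplectic structure are linearly independent, the argument shortens considerably, as then the two coefficient matrices relating the families are forced to be mutually inverse. Apart from this, everything is routine, and the $k$--symplectic condition plays no further role beyond ensuring that we are comparing genuine $k$--symplectic structures.
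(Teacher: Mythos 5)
Your proof is correct, and it follows the route the paper clearly intends: the corollary is stated there without proof as an immediate consequence of the preceding discussion of gauge equivalence, which amounts exactly to your reformulation $\omega'_i=\sum_j a_{ij}\omega_j$ with $A\in GL(k,\mathbb{R})$. Your extra care in the converse direction --- arguing through the surjections $\phi,\phi'$ onto the common span rather than through a fixed coefficient matrix --- is a worthwhile refinement, since the paper's definition of a $k$--symplectic structure does not force the forms $\omega_1,\ldots,\omega_k$ to be linearly independent, so the degenerate case you flag can genuinely occur.
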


\section{On the need of $k$--symplectic Lie systems}\label{NkSLS}
In this section we show for the first time that several Lie systems of physical and mathematical interest admit Vessiot--Guldberg Lie algebras of Hamiltonian vector fields with respect to the presymplectic forms of a certain $k$--symplectic structure. This suggests us to propose a definition of $k$--symplectic Lie systems and, in following sections, to study their properties.

Consider a Schwarzian equation \cite{Be07,OT09}
\begin{equation}\label{Schwarz}
\{x,t\}=\frac{{\rm d}^3x}{{\rm d}t^3}\left(\frac{{\rm d}x}{{\rm d}t}\right)^{-1}-\frac 32\left(\frac{{\rm d}^2x}{{\rm d}t^2}\right)\left(\frac{{\rm d}x}{{\rm d}t}\right)^{-2}=2b_1(t),
\end{equation}
where $\{x,t\}$ is the refereed to as {\it Schwarzian derivative} of the function $x(t)$ in terms of the variable $t$ and $b_1(t)$ is an arbitrary $t$-dependent function. This equation is a particular case of a third-order Kummer--Schwarz equation  \cite{CGL11} and it appears in the study of iterative differential \cite{NM13}, Riccati and second-order Kummer--Schwarz equations \cite{LS13}. For simplicity, we hereafter assume $b_1(t)$ to be non-constant.

Let us analyse the properties of Schwarzian equations through a Lie system by following the exposition given in \cite{CGL11}. The first-order system of differential equations obtained by adding the variables $v\equiv {\rm d}x/{\rm d}t$ and $a\equiv {\rm d}^2x/{\rm d}t^2$ to (\ref{Schwarz}), i.e.
\begin{equation}\label{firstKS3}
\left\{\begin{aligned}
\frac{{\rm d}x}{{\rm d}t}&=v,\\
\frac{{\rm d}v}{{\rm d}t}&=a,\\
\frac{{\rm d}a}{{\rm d}t}&=\frac 32 \frac{a^2}v+2b_1(t)v,
\end{aligned}\right.
\end{equation}	
 is a Lie system. Indeed, it is the associated system to the $t$-dependent vector field
$$
X^{3KS}_t=v\frac{\partial}{\partial x}+a\frac{\partial}{\partial v}+\left(\frac 32
\frac{a^2}v+2b_1(t)v\right)\frac{\partial}{\partial
a}=Y_3+b_1(t)Y_1,
$$
where the vector fields on $\mathcal{O}_2=\{(x,v,a)\in{\rm T}^2\mathbb{R}\,|\,v\neq 0\}$, with ${\rm T^2}\mathbb{R}$ being the second tangent bundle to $\mathbb{R}$ \cite{LM87}, given by
\begin{equation}\label{VFKS1}
Y_1=2v\frac{\partial}{\partial a},\qquad Y_2=v\frac{\partial}{\partial v}+2a\frac{\partial}{\partial a},\qquad Y_3=v\frac{\partial}{\partial x}+a\frac{\partial}{\partial v}+\frac 32
\frac{a^2}v\frac{\partial}{\partial a},
\end{equation}
 satisfy
the commutation relations
\begin{equation}
 [Y_1,Y_2]=Y_1,\quad[Y_1,Y_3]=2Y_2,\quad [Y_2,Y_3]=Y_3.
\end{equation}
In consequence, $Y_1, Y_2$ and $Y_3$ span  a Lie algebra of vector fields $V^{3KS}$ isomorphic
to $\mathfrak{sl}(2,\mathbb{R})$ and $X^{3KS}$ becomes a $t$-dependent vector field taking values in $V^{3KS}$, i.e. $X^{3KS}$ is a Lie system.

Let us prove that $V^{3KS}$ is a finite-dimensional Lie algebra of Hamiltonian vector fields with respect to the presymplectic forms of a two-symplectic manifold $(\mathcal{O}_2,\omega_1,\omega_2)$. To do so, we look for presymplectic forms $\omega$ satisfying that $Y_1,Y_2$ and $Y_3$ are Hamiltonian vector fields with respect to it, i.e. $\mathcal{L}_{Y_\alpha}\omega=0$ for $\alpha=1,2,3$ and ${\rm d}\omega=0$. By solving the latter system of partial differential equations for $\omega$, we find the presymplectic forms
\begin{equation}\label{forms}
\omega_{1}\equiv \frac{{\rm d}v\wedge {\rm d}a}{v^3}, \qquad \omega_2\equiv-\frac{2}{v^3}(x\,{\rm d}v\wedge {\rm d}a+v\,{\rm d}a\wedge {\rm d}x+a\,{\rm d}x\wedge {\rm d}v).
\end{equation}

Observe that
$$
\ker\omega_1=\left\langle \frac{\partial}{\partial x}\right\rangle,\qquad \ker\omega_2=\left\langle x\frac{\partial}{\partial x}+v\frac{\partial}{\partial v}+a\frac{\partial}{\partial a}\right\rangle.
$$
Since $v\neq 0$ for every point of $\mathcal{O}_2$, then $\omega_1$ and $\omega_2$ have constant rank equal to two and $\ker \omega_1\cap\,\ker \omega_2=\{0\}$ on $\mathcal{O}_2$.
So,
$(\omega_1,\omega_2)$ forms a two-symplectic structure.

In addition, $Y_1$, $Y_2$ and $Y_3$ are Hamiltonian vector fields with respect to both presymplectic forms:
\begin{equation}\label{3KSHamFun}
\iota_{Y_1}\omega_{1}={\rm d}\left(\frac{2}{v}\right),\qquad \iota_{Y_2}\omega_{1}={\rm d}\left(\frac{a}{v^2}\right), \qquad \iota_{Y_3}\omega_{1}={\rm d}\left(\frac{a^2}{2v^3}\right),
\end{equation}
and
\begin{equation}\label{3KSHamFunOmega2}
\iota_{Y_1}\omega_2=-{\rm d}\left(\frac{4x}{v}\right),\quad
\iota_{Y_2}\omega_2={\rm d}\left(2-\frac{2ax}{v^2}\right), \quad \iota_{Y_3}\omega_2={\rm d}\left(\frac{2a}{v}-\frac{a^2x}{v^3}\right).
\end{equation}
The interest of the two-symplectic structure $(\omega_1,\omega_2)$ relies on the fact that system (\ref{firstKS3}) cannot be studied through a Lie--Hamilton system (see \cite{CGLS13} for details). Nevertheless, the use of the above presymplectic structures will allows us to
study such systems through similar techniques to those developed for Lie--Hamilton systems \cite{CGLS13}.

Let us now turn to showing that the system of Riccati equations
\begin{equation}\label{3Ricc}
\frac{{\rm d}x_{i}}{{\rm d}t}=a(t)+b(t)x_{i}+c(t)x_{i}^2, \qquad i=1,2,3,4,
\end{equation}
where $a(t),b(t)$ are arbitrary $t$-dependent functions and $c(t)$ is assumed to be different from zero,
can also be related as before to a two-symplectic structure. This system is important due to the fact that their $t$-independent constants of motion are employed to
obtain a superposition rule for $t$-dependent harmonic oscillators \cite{CLR08}. Let us show first that this system is a Lie system on $\mathcal{O}=\{(x_1,x_2,x_3,x_4)\mid \prod_{i<j}(x_i-x_j)\neq 0\}$. The system (\ref{3Ricc})
is associated to the $t$-dependent vector field
$$
X^{\rm Ric}_t=a(t)X_1+b(t)X_2+c(t)X_3,
$$
where
$$
X_1=\sum_{i=1}^4\frac{\partial}{\partial x_i},\qquad X_2=\sum_{i=1}^4x_i\frac{\partial}{\partial x_i},\qquad  X_3=\sum_{i=1}^4x_i^2\frac{\partial}{\partial x_i}.
$$
These vector fields satisfy
the commutation relations
\begin{equation}
 [X_1,X_2]=X_1,\quad[X_1,X_3]=2X_2,\quad [X_2,X_3]=X_3.
\end{equation}
Let us define the symplectic forms
$$
\omega_1=\frac{{\rm d}x_1\wedge {\rm d}x_2}{(x_1-x_2)^2}+\frac{{\rm d}x_3\wedge {\rm d}x_4}{(x_3-x_4)^2},\qquad
\omega_2=\sum_{\stackrel{i,j=1}{i< j}}^4 \frac{{\rm d}x_i\wedge {\rm d}x_j}{(x_i-x_j)^2}.
$$
Hence, $(\omega_1,\omega_2)$ becomes a two-symplectic structure.
Additionally, the vector fields $X_1,X_2$ and $X_3$ are Hamiltonian relative to $\omega_1$ and $\omega_2$:
\begin{equation}\label{RHamFun}\begin{gathered}
\iota_{X_1}\omega_1\!=\!{\rm d}\Big(\frac{1}{x_1-x_2}+\frac{1}{x_3-x_4}\Big), \qquad
\iota_{X_2}\omega_1\!=\!
\frac 12{\rm d}\Big(\frac{x_1+x_2}{x_1-x_2}\!+\!\frac{x_3+x_4}{x_3-x_4}\Big), \,\,\\\noalign{\medskip}
\iota_{X_3}\omega_1\!=\!{\rm d}\Big(\frac{x_1 x_2}{x_1-x_2}\!+\!\frac{x_3 x_4}{x_3-x_4}\Big)
\end{gathered}
\end{equation}
and
\begin{equation}\label{RHamFunOmega2}
\iota_{X_1}\omega_2={\rm d}\left(\sum_{\stackrel{i,j=1}{i<j}}^4\frac{1}{x_i-x_j}\right),\qquad \iota_{X_2}\omega_2=\frac 12{\rm d}\left(\sum_{\stackrel{i,j=1}{i< j}}^4\frac{x_i+x_j}{x_i-x_j}\right),\qquad \iota_{X_3}\omega_2=\!{\rm d}\left(\sum_{\stackrel{i,j=1}{i< j}}^4\frac{x_i x_j}{x_i-x_j}\right).
\end{equation}

Let us now turn to the system of differential equations
$$
\begin{gathered}
\frac{{\rm d}x_1}{{\rm d}t}=b_1(t),\qquad \frac{{\rm d}x_2}{{\rm d}t}=b_2(t),\qquad
\frac{{\rm d}x_3}{{\rm d}t}=b_2(t)x_1,\qquad \frac{{\rm d}x_4}{{\rm d}t}=b_2(t)x_1^2, \qquad \frac{{\rm d}x_5}{{\rm d}t}=2b_2(t)x_1x_2,\
\end{gathered}
$$
where $b_1(t)$ and $b_2(t)$ are arbitrary $t$-dependent functions and whose interest is due to its relation to certain control problems \cite{Ni00,Ra06}. This system is associated to the $t$-dependent vector field $X_t=b_1(t)X_1+b_2(t)X_2,$
with
$$
X_1=\frac{\partial}{\partial x_1},\quad X_2=\frac{\partial}{\partial x_2}+x_1\frac{\partial}{\partial x_3}+x_1^2\frac{\partial}{\partial x_4}+2x_1x_2\frac{\partial}{\partial x_5}.
$$
These vector fields span a Lie algebra $V$ of vector fields along with
$$
X_3=\frac{\partial}{\partial x_3}+2x_1\frac{\partial}{\partial x_4}+2x_2\frac{\partial}{\partial x_5},\qquad X_4=\frac{\partial}{\partial x_4},\qquad X_5=\frac{\partial}{\partial x_5}.
$$
Indeed, the only non-vanishing commutation relations between the previous vector fields read
\[
[X_1,X_2]=X_3,\qquad [X_1,X_3]=2X_4,\qquad [X_2,X_3]=2X_5.
\]
Consequently, $X$ is a Lie system as indicated in \cite{Ra06}. Additionally to this, we can add that the Lie algebra $V$ consists of Hamiltonian vector fields relative to the presymplectic forms
$$
\omega_1={\rm d}x_1\wedge {\rm d}x_2,\qquad \omega_2={\rm d}x_1\wedge {\rm d}x_3,\qquad \omega_3={\rm d}x_1\wedge {\rm d}x_4,\qquad \omega_4={\rm d}x_2\wedge {\rm d}x_5+x_2^2{\rm d}x_1\wedge {\rm d}x_2.
$$
The kernels of the above presymplectic forms are
$$\begin{gathered}
\ker \omega_{1}=\left\langle \frac{\partial}{\partial x_3},\frac{\partial}{\partial x_4},\frac{\partial}{\partial x_5}\right\rangle,\qquad \ker\omega_2=\left\langle \frac{\partial}{\partial x_2},\frac{\partial}{\partial x_4},\frac{\partial}{\partial x_5}\right\rangle,\qquad
\ker\omega_3=\left\langle \frac{\partial}{\partial x_2},\frac{\partial}{\partial x_3},\frac{\partial}{\partial x_5}\right\rangle,\\
\ker \omega_{4}=\left\langle \frac{\partial}{\partial x_3},\frac{\partial}{\partial x_4},\frac{\partial}{\partial x_1}+x_2^2\frac{\partial}{\partial x_5}\right\rangle.
\end{gathered}
$$
Obviously, $\cap_{i=1}^4\ker\omega_i=\{0\}$ and $(\omega_1,\ldots,\omega_4)$ become a 4-symplectic structure. In addition, $X_1,X_2,X_3,$ $X_4$ and $X_5$ are Hamiltonian vector fields with respect to these presymplectic forms. In fact,
\[
\begin{array}{lcllcllcllcl}
\iota_{X_1}\omega_1 & \!=\! & {\rm d}x_ 2, & \iota_{X_1}\omega_2& \!=\! & {\rm d}x_3 ,& \iota_{X_1}\omega_3& \!=\! & {\rm d}x_4 ,& \iota_{X_1}\omega_4& \!=\! & \frac{1}{3}{\rm d}x_2^3, \\\noalign{\medskip}
\iota_{X_2}\omega_1 & \!=\! & -{\rm d}x_ 1, & \iota_{X_2}\omega_2& \!=\! & -\frac{1}{2}{\rm d}x_1^2 ,& \iota_{X_2}\omega_3& \!=\! & -\frac{1}{3}{\rm d}x_1^3 ,& \iota_{X_2}\omega_4& \!=\! & {\rm d}(x_5-x_1x_2^2), \\\noalign{\medskip}
\iota_{X_3}\omega_1 & \!=\! & 0, & \iota_{X_3}\omega_2& \!=\! & {\rm d}x_1 ,& \iota_{X_3}\omega_3& \!=\! & -{\rm d}x_1^2 ,& \iota_{X_3}\omega_4& \!=\! & -{\rm d}x_2^2, \\\noalign{\medskip}
\iota_{X_4}\omega_1 & \!=\! & 0, & \iota_{X_4}\omega_2& \!=\! & 0 ,& \iota_{X_4}\omega_3& \!=\! & -{\rm d}x_1 ,& \iota_{X_4}\omega_4& \!=\! & 0, \\\noalign{\medskip}
\iota_{X_5}\omega_1 & \!=\! & 0, & \iota_{X_5}\omega_2& \!=\! & 0 ,& \iota_{X_5}\omega_3& \!=\! &0 ,& \iota_{X_5}\omega_4& \!=\! & -{\rm d}x_2\,. \\\noalign{\medskip}
\end{array}
\]

Let us now consider the control system in $\mathbb{R}^5$ \cite{Ra06}
$$
\begin{gathered}
\frac{{\rm d}x_1}{{\rm d}t}=b_1(t),\quad
\frac{{\rm d}x_2}{{\rm d}t}=b_2(t),\quad
\frac{{\rm d}x_3}{{\rm d}t}=b_2(t)x_1-b_1(t)x_2 ,\quad
\frac{{\rm d}x_4}{{\rm d}t}=b_2(t)x_1^2,\quad
\frac{{\rm d}x_5}{{\rm d}t}=b_1(t)x_2^2.
\end{gathered}
$$
 This system is associated to the $t$-dependent vector field
$$
X_t=b_1(t)X_1+b_2(t)X_2,
$$
with
$$
X_1=\frac{\partial}{\partial x_1}-x_2\frac{\partial}{\partial x_3}+ x_2^2\frac{\partial}{\partial x_5},\qquad X_2=\frac{\partial}{\partial x_2}+x_1\frac{\partial}{\partial x_3}+x_1^2\frac{\partial}{\partial x_4}.
$$
These vector fields span a Lie algebra $V$ of vector fields along with
$$
X_3=\frac{\partial}{\partial x_3}+x_1\frac{\partial}{\partial x_4}-x_2\frac{\partial}{\partial x_5},\qquad X_4=\frac{\partial}{\partial x_4},\qquad X_5=\frac{\partial}{\partial x_5}.
$$
Indeed, the only non-vanishing commutation relations between the previous vector fields read
$$
[X_1,X_2]=2X_3,\qquad [X_1,X_3]=X_4,\qquad [X_2,X_3]=-X_5.
$$
Consequently, $X$ is a Lie system. Additionally, the Lie algebra $V$ consists of Hamiltonian vector fields relative to the presymplectic forms
$$
\omega_1={\rm d}x_1\wedge {\rm d}x_2,\qquad \omega_2={\rm d}x_2\wedge {\rm d}x_5,\qquad \omega_3={\rm d}x_1\wedge {\rm d}x_4,\qquad \omega_4={\rm d}x_1\wedge {\rm d}x_3+x_1{\rm d}x_1\wedge {\rm d}x_2.
$$
The kernels of the above presymplectic forms read
$$\begin{gathered}
\ker \omega_{1}=\left\langle \frac{\partial}{\partial x_3},\frac{\partial}{\partial x_4},\frac{\partial}{\partial x_5}\right\rangle,\qquad \ker\omega_2=\left\langle \frac{\partial}{\partial x_1},\frac{\partial}{\partial x_3},\frac{\partial}{\partial x_4}\right\rangle,\qquad
\ker\omega_3=\left\langle \frac{\partial}{\partial x_2},\frac{\partial}{\partial x_3},\frac{\partial}{\partial x_5}\right\rangle,\\
\ker \omega_{4}=\left\langle \frac{\partial}{\partial x_4},\frac{\partial}{\partial x_5},\frac{\partial}{\partial x_2}-x_1\frac{\partial}{\partial x_3}\right\rangle.
\end{gathered}
$$
Obviously, $\cap_{i=1}^4\ker\omega_i=\{0\}$ and $(\omega_1,\ldots,\omega_4)$ become a 4-symplectic structure. In addition,  $X_1,X_2,X_3,$ $X_4$ and $X_5$ are Hamiltonian vector fields with respect to the four presymplectic forms. In fact,
\[
\begin{array}{lcllcllcllcl}
\iota_{X_1}\omega_1 & = & {\rm d}x_ 2, & \iota_{X_1}\omega_2& = & -\frac{1}{3}{\rm d}x_2^3 ,& \iota_{X_1}\omega_3& = & {\rm d}x_4 ,& \iota_{X_1}\omega_4& = & {\rm d}(x_1x_2+x_3) , \\\noalign{\medskip}
\iota_{X_2}\omega_1 & = & -{\rm d}x_ 1, & \iota_{X_2}\omega_2& = & {\rm d}x_5 ,& \iota_{X_2}\omega_3& = & -\frac{1}{3}{\rm d}x_1^3 ,& \iota_{X_2}\omega_4& = & -{\rm d}x_1^2, \\\noalign{\medskip}
\iota_{X_3}\omega_1 & = & 0, & \iota_{X_3}\omega_2& = & \frac{1}{2} {\rm d}x_2^2 ,& \iota_{X_3}\omega_3& = & -\frac 12{\rm d}x_1^2 ,& \iota_{X_3}\omega_4& = & -{\rm d}x_1, \\\noalign{\medskip}
\iota_{X_4}\omega_1 & = & 0, & \iota_{X_4}\omega_2& = & 0 ,& \iota_{X_4}\omega_3& = & -{\rm d}x_1 ,& \iota_{X_4}\omega_4& = & 0, \\\noalign{\medskip}
\iota_{X_5}\omega_1 & = & 0, & \iota_{X_5}\omega_2& = & -{\rm d}x_2 ,& \iota_{X_5}\omega_3& = &0 ,& \iota_{X_5}\omega_4& = & 0. \\\noalign{\medskip}
\end{array}
\]

It was recently proved that diffusion equations and other PDEs can be approached through the Lie system
\begin{equation*}
\left\{
\begin{aligned}
\frac{{\rm d}s}{{\rm d}t}&=-4a(t)u s-2d(t)s,\qquad &\frac{{\rm d}x}{{\rm d}t}&=(c(t)+4a(t)u)x+f(t)-2u g(t), \\
\frac{{\rm d}u}{{\rm d}t}&=-b(t)+2c(t) u+4a(t)u^2,\qquad &\frac{{\rm d}y}{{\rm d}t}&=(2a(t)x-g(t))v,\\
\frac{{\rm d}v}{{\rm d}t}&=(c(t)+4a(t)u)v,\qquad &\frac{{\rm d}z}{{\rm d}t}&=a(t)x^2-g(t)x, \\
\frac{{\rm d}w}{{\rm d}t}&=a(t)v^2, &\\
\end{aligned}\right.
\end{equation*}
where $a(t),b(t),c(t),d(t),f(t),g(t)$ are certain $t$-dependent functions (see \cite{CGLS13,LLS11,SSV11} for details).
Its general solution can be obtained by integrating
\begin{equation}\label{Partial}
\left\{
\begin{aligned}
\frac{{\rm d}u}{{\rm d}t}&=-b(t)+2c(t) u+4a(t)u^2,\\
\frac{{\rm d}v}{{\rm d}t}&=(c(t)+4a(t)u)v,\\
\frac{{\rm d}w}{{\rm d}t}&=a(t)v^2.\\
\end{aligned}\right.
\end{equation}
This is a Lie system \cite{CGLS13}. In fact, it describes the integral curves of the $t$-dependent vector field
$$
X^{RS}_t=a(t)X_1-b(t)X_2+c(t)X_3,
$$
where
$$
\begin{gathered}
 X_1=4u^2\frac{\partial}{\partial u}+4uv\frac{\partial}{\partial v}+v^2\frac{\partial}{\partial w},\quad  X_2=\frac{\partial}{\partial u},\quad X_3=2u\frac{\partial}{\partial u}+v\frac{\partial }{\partial  v}
\end{gathered}
$$
close on the commutation relations
$$
\begin{gathered}
\left[X_1,X_2\right]=-4X_3, \qquad [X_1,X_3]=-2X_1,\qquad [X_2,X_3]=2X_2.
\end{gathered}
$$
Consider the presymplectic forms
\begin{equation}\label{RS-1Sym}
\omega_{RS-1}=-\frac{4w{\rm d}u\wedge {\rm d}w}{v^2}+\frac{{\rm d}v\wedge {\rm d}w}{v}+\frac{4w^2{\rm d}u\wedge {\rm d}v}{v^3},\qquad \omega_{RS-2}=-\frac{4{\rm d}u\wedge {\rm d}w}{v^2}+\frac{8w{\rm d}u\wedge {\rm d}v}{v^3}.
\end{equation}
Their kernels read
$$
\ker \omega_{RS-1}=\left\langle v^2\frac{\partial}{\partial u}+4wv\frac{\partial}{\partial v}+4w^2\frac{\partial}{\partial w}\right\rangle,\qquad
\ker \omega_{RS-2}=\left\langle v\frac{\partial}{\partial v}+2w\frac{\partial}{\partial w}\right\rangle.
$$
Note that $\ker \omega_{RS-1}\cap \ker \omega_{RS-2}=\{0\}$
. So,
$(\omega_{RS-1},\omega_{RS-2})$ is a two-symplectic structure. Moreover, $X_1$, $X_2$ and $X_3$ are Hamiltonian vector fields with respect to $\omega_{RS-1}$, $\omega_{RS-2}$:
\begin{equation}\label{eq1}
\iota_{X_1}\omega_{RS-1}=\!{\rm d}\Big(4uw-8\frac{u^2w^2}{v^2}-\frac{v^2}{2}\Big), \
\iota_{X_2}\omega_{RS-1}=\! -2
{\rm d}\Big(\frac{w^2}{v^2}\Big), \
\iota_{X_3}\omega_{RS-1}=\!{\rm d}\Big(w-4\frac{uw^2}{v^2}\Big),
\end{equation}
and
\begin{equation}\label{eq2}
\iota_{X_1}\omega_{RS-2}=4\,{\rm d}\Big(u-4\frac{u^2w}{v^2}\Big), \
\iota_{X_2}\omega_{RS-2}= -4\,
{\rm d}\Big(\frac{w}{v^2}\Big), \
\iota_{X_3}\omega_{RS-2}=-8\,{\rm d}\Big(\frac{uw}{v^2}\Big).
\end{equation}

 Let us consider a type of Lotka--Volterra systems, i.e. a system of the form
$$
\frac{{\rm d}x_i}{{\rm d}t}=x_i\Big(b_i(t)+\sum_{j=1}^nc_{ij}(t)x_j\Big),\qquad i=1,\ldots,n,
$$
for certain functions $b_i(t)$ and $c_{ij}(t)$, that can be studied as a Lie system, namely its minimal Lie algebra is finite-dimensional. Systems of this type have already been studied by one  of the authors of this work in \cite{BBHLS13}. We hereafter call these systems {\it Lie--Lotka--Volterra systems}. More specifically, consider the system
\begin{equation}\label{LVsystem}
    \begin{gathered}
\frac{{\rm d}x_1}{{\rm d}t}=a(t)x_1+b(t)x_1^2,\qquad
\frac{{\rm d}x_2}{{\rm d}t}=a(t)x_2+b(t)x_2^2,\qquad
\frac{{\rm d}x_3}{{\rm d}t}=a(t)x_3+b(t)x_3^2,\\
\frac{{\rm d}x_4}{{\rm d}t}=a(t)x_4+b(t)x_4^2,\qquad
\frac{{\rm d}x_5}{{\rm d}t}=a(t)x_5+b(t)x_5^2.
    \end{gathered}
\end{equation}
Observe that its associated $t$-dependent vector field is of the form
$$
X=a(t)X_1+b(t)X_2,
$$
where
$$
X_1=\sum_{i=1}^5x_i\frac{\partial}{\partial x_i},\qquad X_2=\sum_{i=1}^5x_i^2\frac{\partial}{\partial x_i}
$$
satisfy $[X_1,X_2]=X_2$. So, $X$ admits a Vessiot--Guldberg Lie algebra isomorphic to
the Lie algebra of affine transformations on the real line. Let us show that this system is a four-symplectic Lie system. Consider the presymplectic forms
\begin{equation}\label{LV_4_symp}
    \begin{gathered}
    \omega_1=\frac{{\rm d}x_1\wedge {\rm d}x_2}{(x_1-x_2)^2} +\frac{{\rm d}x_3\wedge {\rm d}x_4}{(x_3-x_4)^2},\qquad
    \omega_2=\frac{{\rm d}x_1\wedge {\rm d}x_2}{(x_1-x_2)^2} +\frac{{\rm d}x_3\wedge {\rm d}x_5}{(x_3-x_5)^2},\\
    \omega_3=\frac{{\rm d}x_1\wedge {\rm d}x_2}{(x_1-x_2)^2} +\frac{{\rm d}x_4\wedge {\rm d}x_5}{(x_4-x_5)^2},\qquad
    \omega_4=\frac{{\rm d}x_1\wedge {\rm d}x_3}{(x_1-x_3)^2} +\frac{{\rm d}x_4\wedge {\rm d}x_5}{(x_4-x_5)^2}.
    \end{gathered}
\end{equation}

Note that $\cap_{i=1}^4\ker\omega_i=\{0\}$. So $(\omega_1,\omega_2,\omega_3,\omega_4)$ is a four-symplectic structure. Moreover $X_1$ and $X_2$ are Hamiltonian vector fields with respect to $\omega_i$ with $ i=1,\ldots, 4$:
\begin{equation}\label{LVs_Ham1}
    \begin{gathered}
        \iota_{X_1}\omega_1=\frac{1}{2}{\rm d}\Big(\frac{x_1+x_2}{x_1-x_2} + \frac{x_3+x_4}{x_3-x_4} \Big),\qquad \iota_{X_2}\omega_1={\rm d}\Big(\frac{x_1x_2}{x_1-x_2}+ \frac{x_3x_4}{x_3-x_4} \Big),\\
        \iota_{X_1}\omega_2=\frac{1}{2}{\rm d}\Big(\frac{x_1+x_2}{x_1-x_2} + \frac{x_3+x_5}{x_3-x_5} \Big),\qquad \iota_{X_2}\omega_2={\rm d}\Big(\frac{x_1x_2}{x_1-x_2}+ \frac{x_3x_5}{x_3-x_5} \Big),\\
        \iota_{X_1}\omega_3=\frac{1}{2}{\rm d}\Big(\frac{x_1+x_2}{x_1-x_2} + \frac{x_4+x_5}{x_4-x_5} \Big),\qquad \iota_{X_2}\omega_3={\rm d}\Big(\frac{x_1x_2}{x_1-x_2}+ \frac{x_4x_5}{x_4-x_5} \Big),\\
        \iota_{X_1}\omega_4=\frac{1}{2}{\rm d}\Big(\frac{x_1+x_3}{x_1-x_3} + \frac{x_4+x_5}{x_4-x_5} \Big),\qquad \iota_{X_2}\omega_4={\rm d}\Big(\frac{x_1x_3}{x_1-x_3}+ \frac{x_4x_5}{x_4-x_5} \Big)\,.
    \end{gathered}
\end{equation}

All above systems posses a Lie algebra of vector fields that are Hamiltonian with respect to all the presymplectic forms belonging to a $k$--symplectic structure, this suggests us the following definition.

\begin{definition}
Given a $k$--symplectic structure $(\omega_1,\ldots,\omega_k)$ on an $n(k+1)$ dimensional manifold $N$, we say that a vector field $Y$ on $N$ is {\it $k$--Hamiltonian} if it is a Hamiltonian vector field with respect to the presymplectic forms $\omega_1,\ldots,\omega_k$.
\end{definition}

Note that $X$ is a $k$--Hamiltonian vector field if and only if it is Hamiltonian for all the presymplectic forms of the space $\langle \omega_1,\ldots,\omega_k\rangle$. In view of Theorem \ref{EquMan}, two $k$--symplectic structures $(\omega_1,\ldots,\omega_k)$ and $(\omega'_1,\ldots,\omega_k')$ are equivalent if and only if $\langle \omega_1,\ldots,\omega_k\rangle=\langle \omega'_1,\ldots,\omega'_k\rangle$. So, if $X$ is $k$--Hamiltonian for a $k$--symplectic manifold, it is $k$--Hamiltonian for all equivalent $k$--symplectic manifolds. In addition, it also makes sense to say that $X$ is $\Omega$--Hamiltonian for a polysymplectic form $\Omega$ if $X$ is $k$--Hamiltonian for a $k$--symplectic manifold possessing $\Omega$ as associated polysymplectic form. From now on, we will talk about \textit{$k$--Hamiltonian} and/or \textit{$\Omega$--Hamiltonian vector fields} indistinctly.  We write Ham$(\Omega)$, where $\Omega$ is a polysymplectic form induced by $(\omega_1,\ldots,\omega_k)$, for the space of $k$--Hamiltonian vector fields.

Now, it makes sense to define the following notion of $k$--symplectic Lie systems.
\begin{definition}
We say that a system $X$ is a {\it $k$--symplectic Lie system} if $V^X$ is a
finite-dimensional real Lie algebra of $k$--Hamiltonian vector fields with respect to
a $k$--symplectic structure $(\omega_1,\ldots,\omega_k)$. We call $(\omega_1,\ldots,\omega_k)$ a {\it compatible} $k$--symplectic structure.
\end{definition}

Note that the above can be restated by saying that a system  $X$ on a manifold $N$ is a $k$--symplectic Lie system if and only if it admits a Vessiot--Guldberg Lie algebra of $k$--Hamiltonian vector fields with respect to a certain $k$--symplectic structure on $N$. Observe that Lie--Hamilton systems related to symplectic structures \cite{CLS13} are a particular type of $k$--symplectic Lie systems. Nevertheless, we already commented that not every $k$--symplectic Lie system is a Lie--Hamilton system (for more details see description of the system (\ref{firstKS3}) and \cite{CGLS13}).

Every $k$--symplectic Lie system can be considered as a Dirac--Lie system \cite{CGLS13}.  More specifically, if $X$ is a $k$--symplectic Lie system relative to the $k$--symplectic structure $(\omega_1,\ldots,\omega_k)$, then $V^X$ is a family of Hamiltonian vector fields with respect to each one of the presymplectic forms $\omega_1,\ldots,\omega_k$. So, $V^X$ is a Lie algebra of Hamiltonian vector fields relative to each Dirac structure $L^{\omega_r}$ induced by the presymplectic form $\omega_r$ (see \cite{CGLS13,Co90} for details). Following the notation of \cite{CGLS13}, we say that the triple $(N,L^{\omega_r},X)$ is a Dirac--Lie system. Moreover, it can be proved that every Lie system can be considered as a Dirac--Lie system \cite{CGLS13}.
Meanwhile, not every Dirac--Lie system can be considered as a $k$--symplectic Lie system, e.g. a Lie system given by an autonomous vector field $X\neq 0$ on the real line. Nevertheless, the main advantage of $k$--symplectic Lie systems is that they can be considered as Dirac--Lie systems in
different ways. This suggests us to find a natural approach to the study of these systems, which is given by $k$--symplectic structures.

\section{A no-go theorem for $k$--symplectic Lie systems}\label{nogo}
Determining whether a Lie system is a $k$--symplectic Lie system generally requires  solving a system of PDEs to find a compatible $k$--symplectic structure. In many cases, it can be difficult to establish whether this system of PDEs has enough solutions giving rise to a compatible $k$--symplectic structure. That is why it is important to find simple necessary and/or sufficient conditions to ensure or to discard that a Lie system is a  $k$--symplectic Lie system.

In this section we provide a no-go theorem giving conditions ensuring that a Lie system is not a $k$--symplectic Lie system. The main idea is that the minimal Lie algebra of the Lie system under study must leave stable, in the sense given next, the kernels of the presymplectic forms of any $k$--symplectic structure compatible with the Lie system. This condition is easier to verify than finding a compatible $k$--symplectic structure. Although we here provide only one main result, it is easy to develop further no-go theorems from our ideas.

\begin{definition} We say that a distribution $\mathcal{D}$ is {\it stable} under the action of a Lie algebra $V$ of vector fields when $[X,Y]\in \mathcal{D}$ for every $Y\in\mathcal{D}$ and $X\in V$.
\end{definition}
\begin{definition} Given a finite-dimensional real Lie algebra $V$ of vector fields on $N$, we say that  $V$ is $s$--{\it primitive} when there exists no distribution $\mathcal{D}$ of rank $s$ stable under the action of $V$.  We call $V$ {\it odd--primitive} when $V$ is $s$--primitive for every odd value of $s<\dim N$.

\end{definition}
\begin{remark} Observe that the above definition of $s$-primitive Lie algebra of vector fields is a generalisation of the notion of a primitive Lie algebra of vector fields on the plane given in \cite{GKO92}.
\end{remark}

\begin{theorem}\label{Nogo} {\bf (No-go $k$--symplectic Lie systems theorem)} If $X$ is a Lie system on an odd dimensional manifold $N$
and $V^X$ is odd-primitive, then $X$ is not
a $k$--symplectic Lie system.
\end{theorem}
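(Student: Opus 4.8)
The plan is to argue by contradiction, exploiting that a $k$--Hamiltonian vector field preserves all the presymplectic forms of the $k$--symplectic structure, together with the fact that on an odd-dimensional manifold every closed two-form has a kernel of odd dimension. So suppose $X$ were a $k$--symplectic Lie system, with compatible $k$--symplectic structure $(\omega_1,\ldots,\omega_k)$ on $N$. Every $Y\in V^X$ is then Hamiltonian with respect to each $\omega_r$, say $\iota_Y\omega_r={\rm d}h^r_Y$ for some $h^r_Y\in C^\infty(N)$, so Cartan's formula and ${\rm d}\omega_r=0$ give $\mathcal{L}_Y\omega_r={\rm d}\iota_Y\omega_r+\iota_Y{\rm d}\omega_r={\rm d}({\rm d}h^r_Y)=0$ for every $Y\in V^X$ and $r=1,\ldots,k$; that is, the whole Lie algebra $V^X$ preserves each presymplectic form $\omega_r$.

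The first step is to show that each $\ker\omega_r$ is stable under the action of $V^X$. I would take a vector field $Z$ taking values in $\ker\omega_r$, so that $\iota_Z\omega_r=0$, and an arbitrary $Y\in V^X$; then from the identity $\mathcal{L}_Y(\iota_Z\omega_r)=\iota_{[Y,Z]}\omega_r+\iota_Z\mathcal{L}_Y\omega_r$ and the vanishing above one obtains $\iota_{[Y,Z]}\omega_r=0$, that is, $[Y,Z]$ takes values in $\ker\omega_r$. Hence each $\ker\omega_r$ is a (generalised) distribution stable under $V^X$.

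Next I would carry out the parity count that produces the contradiction. At each $x\in N$ the form $\omega_r(x)$ is skew-symmetric, so its rank is even and $\dim\ker\omega_r(x)=\dim N-{\rm rank}\,\omega_r(x)$ is odd, since $\dim N$ is odd. By the $k$--symplectic condition $\bigcap_{i=1}^k\ker\omega_i(x)=\{0\}$ and $\dim N\geq 1$, not all the $\omega_r$ can vanish identically; fix one that does not. A nonzero skew-symmetric form on an odd-dimensional space has rank $\rho$ with $2\leq\rho\leq\dim N-1$, so, restricting if necessary to the open subset of $N$ where $\omega_r$ attains its maximal rank, we get that $\ker\omega_r$ is a regular distribution of constant rank $s=\dim N-\rho$, and this $s$ is odd with $0<s<\dim N$. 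This stable distribution of odd rank contradicts the $s$--primitivity, hence the odd-primitivity, of $V^X$, which completes the argument.

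The step I expect to demand the most care is the last one: $\ker\omega_r$ need not have constant rank on all of $N$, so some bookkeeping is required to exhibit a genuine regular stable distribution of odd rank $s$ with $0<s<\dim N$ for which the odd-primitivity hypothesis can be invoked. This is resolved by observing that the maximal-rank locus of $\omega_r$ is open and invariant under the local flows of $V^X$, because these flows preserve $\omega_r$; on that locus the rank $s$ of $\ker\omega_r$ is automatically odd and constrained to $0<s<\dim N$ by the linear algebra above (in particular the $k$--symplectic condition rules out the degenerate case in which every $\omega_r$ vanishes identically, which would force $\dim N=0$). Everything else reduces to Cartan calculus and elementary facts about skew-symmetric forms.
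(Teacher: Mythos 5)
Your proof is correct and follows essentially the same route as the paper's: show that each $\ker\omega_r$ is stable under $V^X$ via $\mathcal{L}_Y\omega_r=0$ and the identity $\mathcal{L}_Y\iota_Z\omega_r=\iota_Z\mathcal{L}_Y\omega_r+\iota_{[Y,Z]}\omega_r$, then contradict odd-primitivity using the odd dimensionality of the kernel. Your additional bookkeeping about restricting to the maximal-rank locus and excluding an identically vanishing $\omega_r$ is in fact more careful than the paper's own argument, which simply asserts that the kernel is a non-trivial odd-dimensional stable distribution and invokes odd-primitivity directly.
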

\begin{proof} Let us suppose that there exists a compatible $k$--symplectic structure $(\omega_1,\ldots,\omega_k)$ for $X$. On an odd-dimensional manifold, every two-form of the $k$--symplectic structure has non-trivial odd-dimensional kernel. Let $Z\neq 0$ be a vector field $Z\in\ker \omega_i$. As the elements of $V^X$ are Hamiltonian with respect to each one of the two-forms of the $k$--symplectic structure, we have that $\mathcal{L}_Y\omega_i=0$ for every $Y\in V^X$ and
$$
0=\mathcal{L}_Y\iota_Z\omega_i=\iota_Z\mathcal{L}_Y\omega_i+\iota_{[Y,Z]}\omega_i=\iota_{[Y,Z]}\omega_i.
$$
So $[Y,Z]\in \ker \omega_i$ and the kernel of $\omega_i$ is stable under the action of the elements of $V^X$. As $V^X$ is odd-primitive and $\ker\,\omega_i$ is odd-dimensional, this is a contradiction. Then, the compatible $k$--symplectic structure cannot exist.
\end{proof}

\begin{example}({\bf Lie systems on Lie groups}) Let us consider the following type of Lie systems on Lie groups  of the form
\begin{equation}\label{Lie}
\frac{{\rm d}g}{{\rm d}t}=\sum_{\alpha=1}^rb^R_\alpha(t)X^R_\alpha(g)+\sum_{\alpha=1}^rb^L_\alpha(t)X^L_\alpha(g), \qquad g\in G,
\end{equation}
where $G$ is a Lie group, $X^R_1,\ldots,X^R_r$  and $X^L_1,\ldots,X^L_r$ form basis of right and left-invariant vector fields on $G$ respectively,  and $b^L_1(t),\ldots,b^L_r(t),b^R_1(t),\ldots,b^R_r(t)$ are arbitrary $t$-dependent functions. Additionally, we assume $G$ to be connected. Systems of the type (\ref{Lie}) appear when searching for transformations mapping a Lie system into a new one, e.g. in a reduction process \cite{IntRic11}. Additionally, each Lie system on a manifold can be solved by means of a particular solution of systems like (\ref{Lie}) where only right-invariant or left-invariant vector fields appear. Moreover,  such systems  appear in Control Theory and Darboux integrable systems \cite{CCR03,Va13}.
An interesting question is to determine if such systems can be endowed with a compatible $k$--symplectic structure. As proved next, the answer is negative for a large family of systems (\ref{Lie}).

Assume that (\ref{Lie}) is such that $G$ is odd-dimensional and its Lie algebra of left-invariant vector fields, $\mathfrak{g}$, has no odd-dimensional ideals, e.g. $\mathfrak{g}$ is simple. Suppose also that the minimal Lie algebra $V^X$ of the system (\ref{Lie}) is isomorphic (as a Lie algebra) to $\mathfrak{g}\oplus\mathfrak{g}$, namely $V^X=\langle X^R_1,\ldots,X^R_r,X^L_1,\ldots,X^L_r\rangle$. Let us prove that the $V^X$ is odd-primitive by reduction to absurd.

Consider an odd-dimensional distribution $\mathcal{D}$ on $G$ invariant under the action of $V^X$. Since $G$ is connected, the invariance of $\mathcal{D}$ implies that the vector fields taking values in $\mathcal{D}$ are invariant under the diffeomorphisms given by the left-hand multiplications on the group, namely the mappings $L_g:g'\in G\mapsto g\cdot g'\in G$ with $g\in G$. So, given $\mathcal{D}_e\subset T_eG$, i.e. the subspace of the distribution $\mathcal{D}$ at the neutral element $e$ of $G$, we obtain that $(L_{g})_*Y\in \mathcal{D}_g$ for every $Y\in \mathcal{D}_e$. Indeed, since $L_g$ is a diffeomorphism, then  $\mathcal{D}_e\simeq \mathcal{D}_g$.

If $Y^L$ is a left-invariant vector field on $G$ with $Y^L(e)=Y$, we have that
$$
Y^L(g)=(L_g)_*Y^L(e)\in \mathcal{D}_g$$
for every $g\in G$. So, $Y^L$ takes values in $\mathcal{D}$. Using that each $(L_g)_*$ is a diffeomorphism, we obtain that given a set of vector fields $Y^L_1,\ldots, Y^L_s$ whose values $Y^L_1(e),\ldots, Y^L_s(e)$ form a basis for $\mathcal{D}_e$, then $(Y^L_1)(g),\ldots, (Y^L_s)(g)$ form a basis of $\mathcal{D}_g$ for every $g\in G$. Consequently, $\mathcal{D}$ admits a global basis of left-invariant vector fields $Y^L_1,\ldots,Y^L_s$. As $\mathcal{D}$ is invariant under $V^X$, then $\mathcal{L}_{X^L_\alpha}Y^L_j$, with $\alpha=1,\ldots,\dim G$ and $j=1,\ldots,s$, is a right-invariant vector field taking values in $\mathcal{D}$. Hence,
$$[X_\alpha^L,Y_j^L]\in \langle Y_1^L\ldots,Y^L_s\rangle$$
and $\langle Y^L_1,\ldots,Y^L_s\rangle$ is an odd-dimensional ideal of $\mathfrak{g}$. By assumption, $\mathfrak{g}$ has no odd-dimensional ideals. This is a contradiction and we have that $\mathcal{D}$ is not invariant under $V^X$. By Theorem \ref{Nogo}, we obtain that (\ref{Lie}) is not a $k$--symplectic Lie system.
\end{example}
\section{On {\rm $\Omega$}--Hamiltonian functions}\label{OHam}

Every $k$--Hamiltonian vector field can be associated to a family $h_1,\ldots,h_k$ of Hamiltonian functions (each one relative to a different presymplectic form of a $k$--symplectic structure).
It is convenient for the study $k$--symplectic Lie systems to introduce some generalisation of the Hamiltonian function notion for presymplectic forms to deal simultaneously with all $h_1,\ldots,h_k$. In this section, we propose and analyse the properties of such a generalisation. Some of our results extend to our $k$--symplectic structures several theorems devised by Awane in \cite{Aw-1992} for a more particular type of $k$--symplectic structures.

\begin{definition} Given a polysymplectic structure $\Omega=\sum_{i=1}^k\omega_i\otimes e^i$ on $N$, we say that $h=h_1\otimes e^1+\ldots+h_k\otimes e^k$ is an $\Omega$--{\it Hamiltonian function} if there exists a vector field $X_h$ on $N$ such that $\iota_{X_h}\omega_i=dh_i$ for $i=1,\ldots,k$. In this case, we call $h$ an $\Omega$--{\it Hamiltonian function} for $X_h$. We write $C^\infty(\Omega)$ for the space of $\Omega$--Hamiltonian functions.
\end{definition}

We already illustrated that a polysymplectic form $\Omega$ depends on the chosen bases $\{e^1, \ldots, e^k\}$ therefore, also the $\Omega$--Hamiltonian function $h$. Nevertheless, if $\Omega$ and $\tilde{\Omega}$ are two gauge equivalent polysymplectic forms then the sets $C^\infty(\Omega)$ and $C^\infty(\tilde{\Omega})$ are the same up to a change of variables on $\mathbb{R}^k$.

Observe that an $\Omega$--Hamiltonian function is a certain type of $\mathbb{R}^k$-valued Hamiltonian function. In \cite{merino}, the author called $k$--Hamiltonian system associated to the $\mathbb{R}^k$-valued Hamiltonian $h$  the vector field $X_h$ of the above definition. Moreover, Awane \cite{Aw-1992} called $h$ a Hamiltonian map of $X$ when $X$ is additionally an infinitesimal automorphism of a certain distribution on which it is assumed that the presymplectic forms of the $k$--symplectic distribution vanish.

\begin{example}\label{ex1} In view of the relations (\ref{eq1}) and (\ref{eq2}), the vector fields $X_1=4u^2\partial/\partial u+4uv\partial/\partial v+v^2\partial/\partial w$, $X_2=\partial/\partial u$ and $X_3=2u\partial/\partial u+v\partial/\partial v$ have $\Omega$--Hamiltonian functions
$$
\begin{array}{c}
f= \Big(4uw-8\displaystyle\frac{u^2w^2}{v^2}-\displaystyle\frac{v^2}{2}\Big)\otimes e^1 + \Big(4u-16\displaystyle\frac{u^2w}{v^2}\Big) \otimes e^2,\\\noalign{\medskip}
g=-2\displaystyle\frac{w^2}{v^2}\otimes e^1-4\frac{w}{v^2}\otimes e^2,\qquad h=\Big(w-4\frac{uw^2}{v^2}\Big)\otimes e^1- 8\frac{uw}{v^2}\otimes e^2,
\end{array}
$$
relative to the polysymplectic structure $\Omega=\omega_{RS-1}\otimes e^1+\omega_{RS-2}\otimes e^2$ obtained from the two--symplectic structure $(\omega_{RS-1},\omega_{RS-2})$ constructed from the presymplectic forms (\ref{RS-1Sym}).
\end{example}

\begin{proposition}\label{unique} Let $\Omega=\sum_{i=1}^k\omega_i\otimes e^i$ be a polysymplectic structure, every $\Omega$--Hamiltonian vector field is associated, at least, to an
$\Omega$--Hamiltonian function. Conversely, every $\Omega$--Hamiltonian function induces a unique $\Omega$--Hamiltonian vector field.

\end{proposition}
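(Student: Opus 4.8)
The statement has two parts, and the first is essentially a reformulation of the definitions. The plan is to prove each direction separately.

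For the first assertion, suppose $X$ is an $\Omega$--Hamiltonian vector field. By the definition of $\Omega$--Hamiltonian (equivalently $k$--Hamiltonian) vector field, $X$ is Hamiltonian with respect to every presymplectic form $\omega_i$, so for each $i\in\{1,\ldots,k\}$ there exists $h_i\in C^\infty(N)$ with $\iota_X\omega_i={\rm d}h_i$. Setting $h=h_1\otimes e^1+\ldots+h_k\otimes e^k$ yields, by definition, an $\Omega$--Hamiltonian function for $X$. This is immediate; the only thing to stress is that such $h_i$ exist by hypothesis and need not be unique, which is exactly why we say ``at least one''.

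The substantive direction is the converse: uniqueness of the $\Omega$--Hamiltonian vector field attached to a given $\Omega$--Hamiltonian function $h=\sum_{i=1}^k h_i\otimes e^i$. Existence of some $X_h$ is built into the definition of $C^\infty(\Omega)$, so the content is that $X_h$ is unique. Suppose $X$ and $X'$ both satisfy $\iota_X\omega_i={\rm d}h_i=\iota_{X'}\omega_i$ for all $i$. Then $\iota_{X-X'}\omega_i=0$ for every $i$, i.e. $X-X'\in\ker\omega_i(x)$ for all $x\in N$ and all $i$. Hence $(X-X')(x)\in\bigcap_{i=1}^k\ker\omega_i(x)$ for every $x$. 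By the defining condition of a $k$--symplectic (equivalently $k$--polysymplectic, via Definition \ref{polysymplectic}) structure, $\bigcap_{i=1}^k\ker\omega_i(x)=\{0\}$ for all $x\in N$, so $X-X'=0$, that is $X=X'$.

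There is no real obstacle here: the whole argument is a direct unwinding of the definitions, with the key input being the nondegeneracy condition \eqref{k-symp_cond} of the $k$--symplectic structure, which is precisely what forces uniqueness. If anything, the only point deserving a word of care is that the nondegeneracy is pointwise, so the conclusion $X-X'=0$ is obtained pointwise on $N$ and then globally; and that one should phrase the argument in terms of the $\omega_i$'s rather than $\Omega$ itself, since $\iota_{X-X'}\Omega=\sum_i(\iota_{X-X'}\omega_i)\otimes e^i=0$ is equivalent to $\iota_{X-X'}\omega_i=0$ for all $i$ because $\{e^1,\ldots,e^k\}$ is a basis of $\mathbb{R}^k$.
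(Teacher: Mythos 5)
Your proof is correct and follows essentially the same route as the paper: the direct part is an unwinding of the definition, and the converse reduces two candidate vector fields to an element of $\bigcap_{i=1}^k\ker\omega_i$, which vanishes by the defining condition (\ref{k-symp_cond}) of a $k$--symplectic structure. The paper's own proof is this exact argument, stated slightly more tersely.
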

\begin{proof}The direct part is trivial. Let us prove the converse. By definition, each $\Omega$--Hamiltonian function $h=h_1\otimes e^1+\ldots+h_k\otimes e^k$ is associated to, at least, one vector field $X_h$. Suppose that there exist  two $\Omega$--Hamiltonian vector fields $X_1$ and $X_2$ associated to $h$. Then, we have
$$
\iota_{X_1}\omega_i=\iota_{X_2}\omega_i={\rm d}h_i,\qquad i=1,\ldots,k
$$
and
$$
\iota_{X_	1-X_2}\omega_i=0,\qquad i=1,\ldots,k.
$$
Since $\ker \omega_1\cap\ldots\cap \ker \omega_k=\{0\}$, it turns out that $X_1=X_2$.
\end{proof}

\begin{proposition} The space $C^\infty(\Omega)$ is a linear space over $\mathbb{R}$ with the natural operations:
\begin{equation*}
h+g\equiv\sum_{i=1}^k(h_i+g_i)\otimes e^i,\qquad \lambda \cdot h\equiv \sum_{i=1}^k\lambda h_i\otimes e^i
\end{equation*}
where $h=\sum_{i=1}^k h_i\otimes e^i$, $g=\sum_{i=1}^kg_i\otimes e^i\in C^\infty(\Omega)$ and $\lambda\in\mathbb{R}$.
\end{proposition}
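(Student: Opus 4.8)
The plan is to verify directly that the two stated operations are well-defined on $C^\infty(\Omega)$ and that they satisfy the vector-space axioms, the only nontrivial point being closure under addition and scalar multiplication. So first I would take $h=\sum_{i=1}^k h_i\otimes e^i$ and $g=\sum_{i=1}^k g_i\otimes e^i$ in $C^\infty(\Omega)$, which by definition means there exist vector fields $X_h$ and $X_g$ on $N$ with $\iota_{X_h}\omega_i={\rm d}h_i$ and $\iota_{X_g}\omega_i={\rm d}g_i$ for every $i=1,\ldots,k$. By linearity of the interior product in its vector-field argument and of the exterior derivative, $\iota_{X_h+X_g}\omega_i=\iota_{X_h}\omega_i+\iota_{X_g}\omega_i={\rm d}h_i+{\rm d}g_i={\rm d}(h_i+g_i)$ for all $i$, so $X_{h+g}\equiv X_h+X_g$ is an $\Omega$--Hamiltonian vector field for $\sum_{i=1}^k(h_i+g_i)\otimes e^i$; hence $h+g\in C^\infty(\Omega)$. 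Likewise, for $\lambda\in\mathbb{R}$ one has $\iota_{\lambda X_h}\omega_i=\lambda\,\iota_{X_h}\omega_i=\lambda\,{\rm d}h_i={\rm d}(\lambda h_i)$, so $\lambda\cdot h\in C^\infty(\Omega)$ with associated vector field $\lambda X_h$ (and by Proposition \ref{unique} this associated vector field is in fact unique, which is a pleasant consistency check but not needed for the present claim).

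Next I would observe that $C^\infty(\Omega)$ sits inside the $\mathbb{R}$-vector space $C^\infty(N)\otimes\mathbb{R}^k\cong\bigoplus_{i=1}^k C^\infty(N)\,e^i$, and the two operations defined in the statement are exactly the restrictions of the ambient vector-space operations on $C^\infty(N)\otimes\mathbb{R}^k$ to the subset $C^\infty(\Omega)$. Since the ambient space is visibly a real vector space — the axioms (associativity and commutativity of addition, existence of the zero element $0=\sum_{i=1}^k 0\otimes e^i$, existence of additive inverses $-h=\sum_{i=1}^k(-h_i)\otimes e^i$, distributivity, and $1\cdot h=h$) all hold componentwise in each copy of $C^\infty(N)$ — and since we have just shown $C^\infty(\Omega)$ is closed under both operations and contains $0$ (the zero vector field is $\Omega$--Hamiltonian for the zero function) and additive inverses (apply the scalar case with $\lambda=-1$), it follows that $C^\infty(\Omega)$ is a linear subspace of $C^\infty(N)\otimes\mathbb{R}^k$, hence a real vector space in its own right with the stated operations.

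There is essentially no hard part here: the result is a routine closure-plus-subspace argument, and the only thing that genuinely uses the structure at hand is the elementary linearity of $\iota$ and ${\rm d}$ used to exhibit $X_h+X_g$ and $\lambda X_h$ as the required Hamiltonian vector fields. If anything deserves a word of care, it is simply noting that the operations are well-defined as operations on the decorated objects $h=\sum h_i\otimes e^i$ (i.e. they do not depend on anything beyond the tuple $(h_1,\ldots,h_k)$), which is immediate from the component-wise formulas. I would therefore keep the written proof to a couple of lines, invoking linearity of the interior product and of ${\rm d}$ for closure and then remarking that $C^\infty(\Omega)$ is thereby a subspace of $C^\infty(N)\otimes\mathbb{R}^k$.
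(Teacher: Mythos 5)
Your proof is correct and follows essentially the same route as the paper: the paper also establishes closure by noting that $\iota_{\lambda X_h+\mu X_g}\omega_i={\rm d}(\lambda h_i+\mu g_i)$ for all $i$, so that $\lambda h+\mu g$ is $\Omega$--Hamiltonian with vector field $\lambda X_h+\mu X_g$, and then declares the remaining vector-space axioms immediate. Your explicit identification of $C^\infty(\Omega)$ as a subspace of $C^\infty(N)\otimes\mathbb{R}^k$ is just a slightly more careful way of saying the same thing.
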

\begin{proof}
Let $X_h$ and $X_g$ be the $\Omega$--Hamiltonian vector fields associated to $h$ and $g$, respectively. The linear combination $\lambda h+\mu g$, with $\lambda,\mu\in\mathbb{R}$, is an $\Omega$--Hamilto\-nian function associated to the vector field $\lambda X_h+\mu X_g$. Indeed,
$$
\iota_{\lambda X_h+\mu X_g}\omega_i={\rm d}(\lambda h_i+\mu g_i),\qquad i=1,\ldots,k.
$$
Then, $C^\infty(\Omega)$ is closed with respect to the defined addition of elements and multiplication by scalars. It is immediate that these operations give rise to a vector space structure on $C^\infty(\Omega)$.
\end{proof}

\begin{proposition} The space $C^\infty(\Omega)$ becomes a Lie algebra when endowed with
 the bracket $\{\cdot,\cdot\}_\Omega:C^\infty(\Omega)\times C^\infty(\Omega)\rightarrow C^\infty(\Omega)$ of the form
\begin{equation}\label{LieB}
\{h_1\otimes e^1+\ldots+h_k\otimes e^k,h'_1\otimes e^1+\ldots+h'_k\otimes e^k\}_\Omega=\{h_1,h'_1\}_{\omega_1}\otimes e^1+\ldots+\{h_k,h'_k\}_{\omega_k}\otimes e^k,
\end{equation}
where $\{\cdot,\cdot\}_{\omega_i}$ is the Poisson bracket naturally induced by the presymplectic form $\omega_i$, with $i=1,\ldots,k$.
\end{proposition}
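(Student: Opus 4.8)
The plan is to verify the three axioms of a (real) Lie algebra for $\{\cdot,\cdot\}_\Omega$, exploiting that each component bracket $\{\cdot,\cdot\}_{\omega_i}$ is already known to be a Lie bracket on $\mathrm{Adm}(\omega_i)$ (as recalled in Section~\ref{LSLS} and established in \cite{IV}), and that the decomposition $h=\sum_i h_i\otimes e^i$ is unique once the basis $\{e^1,\ldots,e^k\}$ is fixed. First I would check that the bracket is well defined, i.e.\ that $\{h,h'\}_\Omega$ actually lies in $C^\infty(\Omega)$: if $X_h$ and $X_{h'}$ are the (unique, by Proposition~\ref{unique}) $\Omega$--Hamiltonian vector fields of $h$ and $h'$, then $[X_h,X_{h'}]$ is a candidate Hamiltonian vector field for $\{h,h'\}_\Omega$, since $\iota_{[X_h,X_{h'}]}\omega_i=\mathcal{L}_{X_h}\iota_{X_{h'}}\omega_i-\iota_{X_{h'}}\mathcal{L}_{X_h}\omega_i=\mathcal{L}_{X_h}\,{\rm d}h'_i=\mathrm{d}(X_h h'_i)=\mathrm{d}\{h_i,h'_i\}_{\omega_i}$, using $\mathcal{L}_{X_h}\omega_i=0$ (each $X_h$ is Hamiltonian, hence a symmetry of $\omega_i$). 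Thus $[X_h,X_{h'}]$ is $\Omega$--Hamiltonian with $\Omega$--Hamiltonian function $\sum_i\{h_i,h'_i\}_{\omega_i}\otimes e^i=\{h,h'\}_\Omega$, so the bracket takes values in $C^\infty(\Omega)$.

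Next I would check bilinearity over $\mathbb{R}$ and antisymmetry. Both are immediate: the component operation $(h_i,h'_i)\mapsto\{h_i,h'_i\}_{\omega_i}$ is $\mathbb{R}$-bilinear and antisymmetric, and the assignments $h\mapsto h_i$ are $\mathbb{R}$-linear because the operations on $C^\infty(\Omega)$ in the previous proposition are defined componentwise with respect to the fixed basis $\{e^1,\ldots,e^k\}$. Hence $\{\cdot,\cdot\}_\Omega$ inherits these properties coordinate by coordinate.

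The remaining point is the Jacobi identity, which again reduces to the scalar case: for $h,h',h''\in C^\infty(\Omega)$,
\begin{equation*}
\{\{h,h'\}_\Omega,h''\}_\Omega+\text{cyclic}=\sum_{i=1}^k\Big(\{\{h_i,h'_i\}_{\omega_i},h''_i\}_{\omega_i}+\text{cyclic}\Big)\otimes e^i=0,
\end{equation*}
since each $(\mathrm{Adm}(\omega_i),\{\cdot,\cdot\}_{\omega_i})$ is a Lie algebra. I expect no real obstacle here; the only care needed is the observation that the Poisson bracket $\{f,g\}_{\omega_i}=X_g f$ is independent of the choice of Hamiltonian vector field $X_g$ for $g$ (cited in Section~\ref{LSLS}), so the componentwise computation is legitimate even though $\ker\omega_i$ may be nontrivial. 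The mild subtlety worth flagging is that an individual component $h_i$ need only belong to $\mathrm{Adm}(\omega_i)$ — it is $\{h,h'\}_\Omega$ as a whole that must be $\Omega$--Hamiltonian — and this is exactly what the well-definedness step above secures, via the single vector field $[X_h,X_{h'}]$ serving simultaneously for all $k$ presymplectic forms.
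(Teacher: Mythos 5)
Your proposal is correct and follows essentially the same route as the paper: closure is obtained by exhibiting $[X_h,X_{h'}]$ as a single vector field that is Hamiltonian for all $k$ presymplectic forms simultaneously (you merely supply the Cartan-formula computation that the paper leaves implicit), and antisymmetry and Jacobi are reduced componentwise to the known Lie-algebra properties of each $\{\cdot,\cdot\}_{\omega_i}$. One harmless quibble: with the paper's convention $\{f,g\}_{\omega_i}=X_g f$ one has ${\rm d}(X_h h'_i)={\rm d}\{h'_i,h_i\}_{\omega_i}$, so $[X_h,X_{h'}]$ is an $\Omega$--Hamiltonian vector field for $\{h',h\}_\Omega=-\{h,h'\}_\Omega$ rather than for $\{h,h'\}_\Omega$ as you wrote; this sign does not affect the closure or any of the axioms.
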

\begin{proof}
Given two $\Omega$--Hamiltonian functions $h=\sum_{i=1}^k h_i\otimes e^i$, $g=\sum_{i=1}^kg_i\otimes e^i$ with $\Omega$--Hamiltonian vector fields $X_h$ and $X_g$, we have
$$
\iota_{[X_h,X_g]}\omega_i={\rm d}\,\{g_i,h_i\}_{\omega_i},\qquad i=1,\ldots,k.
$$
Hence, $\{g, h\}_\Omega$ is an $\Omega$--Hamiltonian function with Hamiltonian vector field $[X_h,X_g]$. So, $C^\infty(\Omega)$ is closed with respect to this bracket, which is trivially antisymmetric and holds the Jacobi  identity, which turns $(C^\infty(\Omega),\{\cdot,\cdot\}_\Omega)$ into a Lie algebra.

\end{proof}

We cannot ensure $C^\infty(\Omega)$ to be a Poisson algebra in a natural way. Observe that given $h=\sum_{i=1}^kh_i\otimes e^i,g=\sum_{i=1}^kg_i\otimes e^i \in C^\infty(\Omega)$, the function
\begin{equation}\label{Prod}
h\cdot g=(h_1g_1)\otimes e^1+\ldots+(h_kg_k)\otimes e^k
\end{equation}
is not in general a $C^\infty(\Omega)$--function. Indeed,
$$
{\rm d}\, (h_ig_i)= g_i{\rm d}\,h_i+h_i{\rm d}\,g_i=\iota_{g_iX_h}\omega_i+\iota_{h_iX_{g}}\omega_i=\iota_{(g_iX_h+h_iX_g)}\omega_i,\qquad i=1,\ldots,k.
$$
In general, $g_iX_h+h_iX_g$ is different for each $i$ and $h\cdot g$ is not an $\Omega$--Hamiltonian function. For instance, consider again Example \ref{ex1}. The
function
$$
h\cdot g=-2\frac{w^2}{v^2} \Big(w-4\frac{uw^2}{v^2}\Big)\otimes e^1+32\frac{uw^2}{v^4}\otimes e^2
$$
is not an $\Omega$--Hamiltonian function for $\Omega=\omega_{RS-1}\otimes e^1+\omega_{RS-2}\otimes e^2$. Indeed, $-2w^2/v^2\left(w-4\frac{uw^2}{v^2}\right)$ and $32uw^2/v^4$ are related to  the vector fields
$$
g_1X_h+h_1X_g\!=\!-\frac{2w^2}{v^2}\Big(2u\frac{\partial}{\partial u}+v\frac{\partial}{\partial v}\Big)+\Big(w-4\frac{uw^2}{v^2}\Big)\frac {\partial}{\partial u}\!,\;\; g_2X_h+h_2X_g\!= \!-\frac{4w}{v^2}\Big(2u\frac{\partial}{\partial u}+v\frac{\partial}{\partial v}\Big)- \frac{8uw}{v^2}\frac {\partial}{\partial u}\!,
$$
which are different.

Since we cannot ensure that $(C^\infty(\Omega),\cdot,\{\cdot,\cdot\}_\Omega)$ is a Poisson algebra, we cannot neither say that $\{h,\cdot\}_{\Omega}:g\in C^\infty(\Omega)\mapsto \{g,h\}_\Omega\in C^\infty(\Omega)$, with $h\in C^\infty(\Omega)$, is a derivation with respect to the product (\ref{Prod}) of $\Omega$--Hamiltonian functions. This shows that $k$--symplectic geometry becomes quite different from Poisson and presymplectic geometry, where an equivalent of this result holds. Nevertheless, we can still ensure that $\{h,g\}_\Omega=0$ for every locally constant function $g$ and, moreover, we can still prove other properties of this Lie algebra. For instace, let us consider the following result.

\begin{proposition} Consider a polysymplectic manifold $(N,\Omega)$. Every $\Omega$--Hamiltonian vector field $X$ acts as a derivation on the Lie algebra $(C^\infty(\Omega),$ $\{\cdot,\cdot\}_\Omega)$ in  the form
$$
Xf=\{f,h\}_\Omega,\qquad \forall f\in C^\infty(\Omega),
$$
with $h$ being an $\Omega$--Hamiltonian function for $X$.
\end{proposition}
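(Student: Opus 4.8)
The plan is to reduce the statement to the component-wise structure of $C^\infty(\Omega)$ and then invoke the Jacobi identity already established for $\{\cdot,\cdot\}_\Omega$. Write $f=\sum_{i=1}^k f_i\otimes e^i$ and fix an $\Omega$--Hamiltonian function $h=\sum_{i=1}^k h_i\otimes e^i$ of $X$; by $Xf$ I understand the natural component-wise action $Xf=\sum_{i=1}^k(Xf_i)\otimes e^i$. Since $f,h\in C^\infty(\Omega)$, each $f_i$ and each $h_i$ is an admissible function of $\omega_i$, so every bracket $\{\cdot,\cdot\}_{\omega_i}$ occurring below is well defined.

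The first step would be to identify $\{f,h\}_\Omega$ with $Xf$. Because $h$ is an $\Omega$--Hamiltonian function of $X$, we have $\iota_X\omega_i={\rm d}h_i$ for every $i$, i.e. $X$ is itself a Hamiltonian vector field of $h_i$ relative to $\omega_i$. Using that the presymplectic Poisson bracket $\{\cdot,\cdot\}_{\omega_i}$ does not depend on the chosen Hamiltonian vector field of its second argument (as recalled in Section \ref{LSLS}), this gives $\{f_i,h_i\}_{\omega_i}=Xf_i$ for every $i$, and hence, by \eqref{LieB}, $\{f,h\}_\Omega=\sum_{i=1}^k(Xf_i)\otimes e^i=Xf$. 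A bonus of this computation is that it simultaneously shows $Xf\in C^\infty(\Omega)$ (since $C^\infty(\Omega)$ is closed under $\{\cdot,\cdot\}_\Omega$), so $X$ genuinely defines a linear map of $C^\infty(\Omega)$ into itself; it is also consistent with $h$ being determined only up to component-wise locally constant functions, which the brackets $\{\cdot,\cdot\}_{\omega_i}$ annihilate.

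The second step would be the Leibniz rule. Linearity of $f\mapsto Xf$ is immediate from linearity of $\{\cdot,\cdot\}_\Omega$. For $f,g\in C^\infty(\Omega)$ I would apply the identity of the first step to $\{f,g\}_\Omega$ and then use antisymmetry and the Jacobi identity of $(C^\infty(\Omega),\{\cdot,\cdot\}_\Omega)$:
$$
X\{f,g\}_\Omega=\{\{f,g\}_\Omega,h\}_\Omega=\{\{f,h\}_\Omega,g\}_\Omega+\{f,\{g,h\}_\Omega\}_\Omega=\{Xf,g\}_\Omega+\{f,Xg\}_\Omega .
$$
This yields the derivation property.

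I do not expect a genuine obstacle here; the only point needing attention is the bookkeeping of the first step — checking that the component-wise action of $X$ stays inside $C^\infty(\Omega)$ and that $\{f_i,h_i\}_{\omega_i}=Xf_i$ — which both rest on the single observation that $X$ can be taken as the Hamiltonian vector field of each $h_i$ and on the independence of the presymplectic bracket from that choice.
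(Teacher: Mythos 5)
Your proof is correct and follows essentially the same route as the paper: the Leibniz property is obtained exactly as there, from the Jacobi identity of $\{\cdot,\cdot\}_\Omega$ applied to $X\{f,g\}_\Omega=\{\{f,g\}_\Omega,h\}_\Omega$. The only difference is in the bookkeeping of well-definedness: you identify $\{f,h\}_\Omega$ with the component-wise derivative $\sum_{i=1}^k(Xf_i)\otimes e^i$ (using that $X$ is a Hamiltonian vector field of each $h_i$ and that the presymplectic bracket is independent of that choice), whereas the paper notes directly that two $\Omega$--Hamiltonian functions of the same vector field differ by locally constant components, which the bracket annihilates; both observations are valid and lead to the same conclusion.
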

\begin{proof} Note that $\{f,h\}_\Omega$ does not depend on the chosen $\Omega$--Hamiltonian for $X$. Every two $\Omega$--Hamiltonian functions related to the same $\Omega$--Hamiltonian vector field differ on a constant (on each connected component on $N$). So, if $h_1$ and $h_2$ are $\Omega$--Hamiltonian functions for $X$, then $\{f,h_1\}_{\Omega}=\{f,h_2\}_{\Omega}$ and $Xf$ becomes well-defined.

Now,
$$
X\{f,g\}_\Omega=\{\{f,g\}_\Omega,h\}_\Omega=\{\{f,h\}_\Omega,g\}_\Omega+\{f,\{g,h\}_\Omega\}_\Omega=\{Xf,g\}_\Omega+\{f,Xg\}_\Omega.
$$
Since $X$ acts linearly on $C^\infty(\Omega)$, the results follows.
\end{proof}

\begin{theorem} Given a polysymplectic form $\Omega=\sum_{i=1}^k\omega_i\otimes e^i$ on a manifold $N$, we can define an exact sequence of Lie algebras:
\begin{equation}\label{ses}
0\hookrightarrow \stackrel{k}{\overbrace{{\rm H}_{\rm dH}^0(N)\oplus{\ldots}\oplus {\rm H}_{\rm dH}^0(N)}}\hookrightarrow C^\infty(\Omega)\stackrel{{B_\Omega}}{\longrightarrow} {\rm Ham}(\Omega)\rightarrow 0,
\end{equation}
where $B_\Omega(f)=-X_f$ is the $\Omega$--Hamiltonian vector field corresponding to $f$ and ${\rm H}_{\rm dH}^0(N)$ is the first De Rham cohomology group of $N$.
\end{theorem}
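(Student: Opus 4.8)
The plan is to build the sequence entirely from the structural facts already proved for $C^\infty(\Omega)$, so that the argument is essentially an assembly of Proposition \ref{unique} together with the propositions on the linear and Lie algebra structure of $C^\infty(\Omega)$.

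First I would check that $B_\Omega$ is a well-defined linear map. By Proposition \ref{unique} every $f\in C^\infty(\Omega)$ has a \emph{unique} associated $\Omega$--Hamiltonian vector field $X_f$, so $B_\Omega(f):=-X_f$ is unambiguous; linearity of $f\mapsto X_f$ is immediate from $\iota_{\lambda X_f+\mu X_g}\omega_i={\rm d}(\lambda f_i+\mu g_i)$. It is precisely this uniqueness that allows ${\rm Ham}(\Omega)$ to appear on the right, rather than a quotient by gauge vector fields as in the presymplectic exact sequence recalled in Section \ref{LSLS}.

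Next I would verify that $B_\Omega$ is a morphism of Lie algebras. Given $a=\sum a_i\otimes e^i$ and $b=\sum b_i\otimes e^i$ with associated vector fields $X_a,X_b$, the computation in the proof that $(C^\infty(\Omega),\{\cdot,\cdot\}_\Omega)$ is a Lie algebra gives $\iota_{[X_b,X_a]}\omega_i={\rm d}\{a_i,b_i\}_{\omega_i}$, so by uniqueness $X_{\{a,b\}_\Omega}=[X_b,X_a]$. Hence $B_\Omega(\{a,b\}_\Omega)=-X_{\{a,b\}_\Omega}=-[X_b,X_a]=[X_a,X_b]=[-X_a,-X_b]=[B_\Omega(a),B_\Omega(b)]$; the sign in the definition $B_\Omega(f)=-X_f$ is exactly what makes this identity hold, and keeping track of it is the only point that requires real attention.

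Finally I would identify the two ends and conclude exactness. Surjectivity onto ${\rm Ham}(\Omega)$ follows again from Proposition \ref{unique}: for $Z\in{\rm Ham}(\Omega)$ the vector field $-Z$ is also $k$--Hamiltonian (${\rm Ham}(\Omega)$ being a linear space), hence admits an $\Omega$--Hamiltonian function $f$, and then $B_\Omega(f)=-X_f=Z$. For the kernel, $f=\sum f_i\otimes e^i$ lies in $\ker B_\Omega$ iff its unique associated vector field vanishes, i.e.\ iff ${\rm d}f_i=\iota_0\omega_i=0$ for all $i$; conversely every tuple of locally constant functions defines, via the zero vector field, an element of $\ker B_\Omega$. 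Thus $\ker B_\Omega$ is the direct sum of $k$ copies of the space of locally constant functions on $N$, that is $\bigoplus^k{\rm H}^0_{\rm dR}(N)$, and this subalgebra is central in $C^\infty(\Omega)$ since $\{f_i,g_i\}_{\omega_i}=X_{g_i}f_i=0$ whenever $g_i$ is locally constant (take $X_{g_i}=0$); in particular the inclusion is a Lie algebra morphism and the sequence (\ref{ses}) is exact. The argument presents no essential obstacle beyond the sign bookkeeping above; I would, however, remark that the group at the left is the \emph{zeroth} de Rham cohomology group, consistently with the presymplectic sequence displayed earlier.
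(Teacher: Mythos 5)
Your proposal is correct and follows essentially the same route as the paper: uniqueness of the $\Omega$--Hamiltonian vector field (Proposition \ref{unique}) makes $B_\Omega$ well defined, the identity $\iota_{[X_h,X_g]}\omega_i={\rm d}\{g_i,h_i\}_{\omega_i}$ makes it a Lie algebra morphism, and the kernel is identified with the $k$-tuples of locally constant functions, i.e.\ $\bigoplus^k{\rm H}^0_{\rm dR}(N)$. You make explicit two points the paper leaves implicit (surjectivity of $B_\Omega$ and the fact that the kernel is central, hence a Lie subalgebra), and you are right that the cohomology group on the left is the \emph{zeroth} de Rham group.
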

\begin{proof}
First, we prove that ${\rm Ham}(\Omega)$ is a Lie algebra. In fact, given two $\Omega$--Hamiltonian vector fields $X$ and $Y$ there is two $\Omega$--Hamiltonian functions $h$ and $g$ such that $X=X_h$ and $Y=Y_g$ (see Proposition \ref{unique}). Moreover, we have that $\iota_{X}\omega_i={\rm d}h_i, \iota_Y\omega_i={\rm d}g_i$ for each $i=1,\ldots, k$. Therefore
$$
\begin{aligned}
& \iota_{\lambda X+\mu Y}\omega_i={\rm d}(\lambda h_i+\mu g_i),\\
& \iota_{[X,Y]}\omega_i={\rm d}\{g_i,h_i\}_{\omega_i},\\
\end{aligned} \qquad i=1,\ldots,k, \qquad \forall \lambda, \mu\in\mathbb{R}.
$$
Hence, the sum, the Lie bracket and the multiplication by scalars of $\Omega$--Hamiltonian vector fields are $\Omega$--Hamiltonian vector fields, that is, Ham$(\Omega)$ is a vector space.

Once we have proved that all the spaces in (\ref{ses}) are Lie algebras, we turn to showing that the sequence is exact. The inclusions of $0$ in ${\rm H}_{\rm dH}^0(N)\oplus\stackrel{k}{\ldots}\oplus {\rm H}_{\rm dH}^0(N)$ and of ${\rm H}_{\rm dH}^0(N)\oplus\stackrel{k}{\ldots}\oplus {\rm H}_{\rm dH}^0(N)$ in $C^\infty(\Omega)$ are obviously Lie algebra morphisms. Likewise, the projection of ${\rm Ham}(\Omega)$ onto $0$ is also. If we take into account that the Lie bracket $\{f,g\}_{\Omega}$, where $f$ and $g$ are $\Omega$--Hamiltonian functions with $\Omega$--Hamiltonian vector fields $X_f$ and $X_g$, admits an $\Omega$--Hamiltonian vector field $-[X_f,X_g]$, we obtain that $B_\Omega(\{f,g\}_\Omega)=[X_f,X_g]$. In other words, $B_\Omega$ is a Lie algebra morphism.

Finally, observe that the kernel of $B_\Omega$ is given by those $\Omega$--Hamiltonian functions $h$ related to a zero vector field. That means, that ${\rm d}h_i=0$ for $i=1,\ldots,k$. So, every $h_i$ is constant on each connected component $O_j$, with $j=1,\ldots,p$, of $N$ and its value is determined by a constant on each $O_j$. This gives the isomorphism
$$
\ker B_\Omega\ni h\longmapsto (h_1(O_1),\ldots,h_1(O_p),\ldots,h_k(O_1),\ldots,h_k(O_p))\in {\rm H}_{\rm dH}^0(N)\oplus\stackrel{k}{\ldots}\oplus {\rm H}_{\rm dH}^0(N).
$$
Using this, we clearly see that the given sequence is exact.
\end{proof}

\section{Derived Poisson algebras}\label{DPA}
Given a $k$--symplectic manifold $(N,\omega_1,\ldots,\omega_k)$, we can construct several Poisson algebras on certain subsets of $C^\infty(N)$, the hereafter called derived Poisson algebras. This will become very important in following sections, where such derived Poisson algebras are employed to study the geometric properties of $k$--symplectic Lie systems.

Given a polysymplectic form $\Omega=\sum_{i=1}^k\omega_i\otimes e^i$, where $e^1,\ldots,e^k$ form a basis for $\mathbb{R}^k$, induced by the $k$--symplectic structure  $(\omega_1,\ldots,\omega_k)$, and an element $\theta\in (\mathbb{R}^{k})^*$, it is immediate that the contraction
$$
\Omega_\theta\equiv \langle \Omega,\theta\rangle=\sum_{i=1}^k \theta(e^i)\omega_i
$$ is a presymplectic form on $N$. We consider ${\rm Adm}(\Omega_\theta)$, the set of admissible functions with respect to $(N,\Omega_\theta)$.  We hereafter denote by $X_f$, with $f$ being a function on $N$, a Hamiltonian vector field of $f$ relative to a presymplectic form. Recall that when $f$ is a $k$--Hamiltonian function, $X_f$ denotes the $k$--Hamiltonian vector field associated to $f$.

Note that a vector field $X$ is $k$--Hamiltonian if and only if it is Hamiltonian for all the presymplectic forms of the space $\langle \omega_1,\ldots, \omega_k\rangle$. In particular, $X$ is Hamiltonian for any presymplectic form $\Omega_\theta$ with $\theta\in (\mathbb{R}^k)^*$. This gives rise to the following proposition.
\begin{proposition}
   Let $\Omega = \sum_{i=1}^k\omega_i\otimes e^i$ be a polysymplectic structure and $\theta\in (\mathbb{R}^k)^*$. Every $\Omega$--Hamiltonian function gives rise to an admissible function with respect to $(N,\Omega_\theta)$.
\end{proposition}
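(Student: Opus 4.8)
The plan is to write down the admissible function explicitly. Let $h=\sum_{i=1}^k h_i\otimes e^i\in C^\infty(\Omega)$ be an $\Omega$--Hamiltonian function and let $X_h$ be its associated $\Omega$--Hamiltonian vector field, which is unique by Proposition \ref{unique} and satisfies $\iota_{X_h}\omega_i={\rm d}h_i$ for $i=1,\ldots,k$. Since $\Omega_\theta$ is obtained from $\Omega$ by contracting with $\theta\in(\mathbb{R}^k)^*$, the natural candidate for an admissible function is the analogous contraction of the $\mathbb{R}^k$--valued function $h$, namely $h_\theta\equiv\langle h,\theta\rangle=\sum_{i=1}^k\theta(e^i)h_i\in C^\infty(N)$.

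First I would record that $\Omega_\theta=\sum_{i=1}^k\theta(e^i)\omega_i$ is indeed a presymplectic form: it is a linear combination, with the constant coefficients $\theta(e^i)$, of the closed two--forms $\omega_1,\ldots,\omega_k$, hence closed. The core of the argument is then the one--line computation
\begin{equation*}
\iota_{X_h}\Omega_\theta=\sum_{i=1}^k\theta(e^i)\,\iota_{X_h}\omega_i=\sum_{i=1}^k\theta(e^i)\,{\rm d}h_i={\rm d}\Big(\sum_{i=1}^k\theta(e^i)h_i\Big)={\rm d}h_\theta,
\end{equation*}
where I use the definition of an $\Omega$--Hamiltonian function and the fact that ${\rm d}$ commutes with multiplication by the real constants $\theta(e^i)$. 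Thus $X_h$ is a Hamiltonian vector field of $h_\theta$ with respect to $\Omega_\theta$, and therefore $h_\theta\in{\rm Adm}(\Omega_\theta)$, which is the claim.

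There is no genuine obstacle here; the only subtlety worth a sentence is that $\Omega_\theta$ may be degenerate, so $h_\theta$ can admit Hamiltonian vector fields other than $X_h$, but admissibility requires only the existence of one. Alternatively, the statement follows from the observation made just before it, that a $k$--Hamiltonian vector field is Hamiltonian for every presymplectic form in $\langle\omega_1,\ldots,\omega_k\rangle$ and in particular for $\Omega_\theta$; I would nonetheless prefer the computation above because it exhibits the Hamiltonian function $h_\theta$ explicitly, which is convenient for the derived Poisson algebra constructions in the following sections.
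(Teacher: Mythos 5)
Your proof is correct and is essentially identical to the paper's: both contract $h$ with $\theta$ to form $h_\theta=\sum_{i=1}^k\theta(e^i)h_i$ and verify $\iota_{X_h}\Omega_\theta={\rm d}h_\theta$ by the same one-line computation. The extra remarks on closedness of $\Omega_\theta$ and on possible degeneracy are harmless additions.
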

\begin{proof}
    If $h=h_1\otimes e^1+ \ldots + h_k\otimes e^k$ is an $\Omega$--Hamiltonian function, then there exists an $\Omega$--Hamiltonian vector field $X_h$ such that
    \[
        \iota_{X_h}\omega_i={\rm d}h_i,\qquad i=1,\ldots, k\,.
    \]

    Thus, one has
    \[
        \iota_{X_h}\Omega_\theta=\displaystyle\sum_{i=1}^k\theta(e^i)\iota_{X_h}\omega_i= \displaystyle\sum_{i=1}^k\theta(e^i){\rm d}h_i ={\rm d} h_\theta,
    \]
    where $h_\theta=\langle h,\theta \rangle=\displaystyle\sum_{i=1}^k\theta(e^i)h_i$. Therefore $h_\theta\in {\rm Adm}(\Omega_\theta)$.
\end{proof}
\begin{proposition} Let $(\omega_1,\ldots,\omega_k)$ be a $k$--symplectic structure and $\{e^1,\ldots,e^k\}$ a basis of $\mathbb{R}^k$.  The $k$--symplectic structure induces a family of Poisson algebras $({\rm Adm}(\Omega_\theta),\cdot,\{\cdot,\cdot\}_\theta)$, where $\{\cdot,\cdot\}_\theta$ is the Poisson bracket induced by the presymplectic form $\Omega_\theta$ on its space of admissible functions.
\end{proposition}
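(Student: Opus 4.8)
The plan is to reduce the statement to the standard fact, recalled in Section~\ref{LSLS}, that the admissible functions of a presymplectic manifold form a Poisson algebra. First I would observe that, for each $\theta\in(\mathbb{R}^k)^*$, the two-form $\Omega_\theta=\sum_{i=1}^k\theta(e^i)\omega_i$ is a real linear combination of the closed two-forms $\omega_1,\ldots,\omega_k$, hence closed; therefore $(N,\Omega_\theta)$ is a presymplectic manifold (possibly with $\Omega_\theta\equiv 0$ for some $\theta$, which is harmless), and ${\rm Adm}(\Omega_\theta)$ is well defined.

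Second, I would invoke the general result for presymplectic manifolds recalled in Section~\ref{LSLS} (see \cite{IV}): ${\rm Adm}(\Omega_\theta)$ is a linear space, closed under pointwise multiplication, which becomes a Poisson algebra when endowed with the bracket $\{f,g\}_\theta=X_gf$, where $X_g$ is any Hamiltonian vector field of $g$ relative to $\Omega_\theta$; that is, $\{\cdot,\cdot\}_\theta$ is $\mathbb{R}$-bilinear, antisymmetric, satisfies the Jacobi identity, and is a derivation in each argument with respect to the pointwise product. Applying this separately for every $\theta\in(\mathbb{R}^k)^*$ yields the asserted family $({\rm Adm}(\Omega_\theta),\cdot,\{\cdot,\cdot\}_\theta)$ of Poisson algebras indexed by $(\mathbb{R}^k)^*$, which is exactly the content of the proposition.

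I do not expect a genuine obstacle here: the substance of the statement is entirely contained in the presymplectic result already cited, and the passage from a single presymplectic form to the $(\mathbb{R}^k)^*$-indexed family $\Omega_\theta$ is immediate. If one wished a self-contained argument, the one point deserving a line of care is the well-definedness of $\{f,g\}_\theta=X_gf$: since $\Omega_\theta$ may be degenerate, $g$ can admit several Hamiltonian vector fields, but any two of them differ by a vector field $Z$ with $\iota_Z\Omega_\theta=0$, and then $X_gf$ is unchanged because ${\rm d}f=\iota_{X_f}\Omega_\theta$ gives $Zf=({\rm d}f)(Z)=\Omega_\theta(X_f,Z)=0$. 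The remaining axioms (closure under product, Jacobi identity, Leibniz rule) are the usual presymplectic computations and would be quoted rather than reproduced.
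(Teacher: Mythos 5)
Your proposal is correct and follows essentially the same route as the paper: the paper's proof also just verifies the standard presymplectic Poisson-algebra axioms for each $\Omega_\theta$ (closure under sums, products with Hamiltonian vector field $gX_h+hX_g$, and brackets with Hamiltonian vector field $[X_g,X_h]$), relying on the fact, already noted before the proposition, that $\Omega_\theta$ is a presymplectic form. Your explicit remarks on the closedness of $\Omega_\theta$ and on the well-definedness of $\{f,g\}_\theta$ under degeneracy are exactly the points the paper delegates to its preliminaries and to \cite{IV}, so nothing is missing.
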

\begin{proof}
    It is immediate that the sum and multiplication by scalars of admissible functions are admissible functions. This turns ${\rm Adm}(\Omega_\theta)$ into a vector space.  The product of functions is bilinear and commutative. Moreover, if $h$ and $g$ are admissible functions with Hamiltonian vector fields $X_h$ and $X_g$, then $h\cdot g$ is an admissible function with Hamiltonian vector field $gX_h+hX_g$. So ${\rm Adm}(\Omega_\theta)$ is an $\mathbb{R}$-algebra.

    If $h$ and $g$ are admissible functions with respect to $(N,\Omega_\theta)$ with Hamiltonian vector fields $X_h$ and $X_g$, then $\{h,g\}_\theta$ is an admissible function with Hamiltonian vector field $[X_g,X_h]$. So, for each $\theta\in (\mathbb{R}^k)^*$, ${\rm Adm}(\Omega_\theta)$ is closed with respect to this bracket, which is antisymmetric and holds the Jacobi and the Leibniz identity. Then, $({\rm Adm}(\Omega_\theta),\cdot,\{\cdot,\cdot\}_\theta)$ is a Poisson algebra for each $\theta$.
\end{proof}

\begin{proposition}\label{poissonderviation} Given a polysymplectic form $\Omega=\sum_{i=1}^k\omega_i\otimes e^i$, every $\Omega$--Hamiltonian vector field $X_h$ is a derivation on all the Lie algebras $({\rm Adm}(\Omega_\theta),\{\cdot,\cdot\}_\theta)$ with $\theta\in (\mathbb{R}^k)^*$ in  the form
\begin{equation}\label{poissontheta}
X_h f=\{f,h_\theta\}_\theta,\qquad \forall f\in {\rm Adm}({\Omega_\theta}).
\end{equation}
\end{proposition}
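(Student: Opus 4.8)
The plan is to reduce the statement entirely to facts already recorded in the excerpt about the Poisson bracket on the admissible functions of a \emph{single} presymplectic form, applied to $\Omega_\theta$. First I would observe that, since $X_h$ is $\Omega$--Hamiltonian with $\Omega$--Hamiltonian function $h=\sum_{i=1}^kh_i\otimes e^i$, we have $\iota_{X_h}\omega_i={\rm d}h_i$ for $i=1,\ldots,k$. Contracting $X_h$ with $\Omega_\theta=\sum_{i=1}^k\theta(e^i)\omega_i$ then gives $\iota_{X_h}\Omega_\theta=\sum_{i=1}^k\theta(e^i){\rm d}h_i={\rm d}h_\theta$ with $h_\theta=\langle h,\theta\rangle$; this is exactly the content of the previous proposition, and it shows that $X_h$ is a (genuine, presymplectic) Hamiltonian vector field of the admissible function $h_\theta$ for $(N,\Omega_\theta)$.

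Next I would invoke the definition of the Poisson bracket induced by a presymplectic form, namely $\{f,g\}_\theta=X_gf$ for \emph{any} Hamiltonian vector field $X_g$ of $g$, this being independent of the chosen $X_g$. Taking $g=h_\theta$ and using that $X_h$ is one admissible choice of Hamiltonian vector field for $h_\theta$, we obtain the asserted identity $X_hf=\{f,h_\theta\}_\theta$ for every $f\in{\rm Adm}(\Omega_\theta)$. In particular, since ${\rm Adm}(\Omega_\theta)$ is closed under $\{\cdot,\cdot\}_\theta$ (established in the preceding proposition on derived Poisson algebras), $X_h$ restricts to an $\mathbb{R}$--linear endomorphism of ${\rm Adm}(\Omega_\theta)$.

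Finally, to check that $X_h$ acts as a derivation of the Lie algebra $({\rm Adm}(\Omega_\theta),\{\cdot,\cdot\}_\theta)$, I would simply use that $\{\cdot,\cdot\}_\theta$ is a Lie bracket. Writing $X_h(\cdot)=\{\cdot,h_\theta\}_\theta$ and combining the Jacobi identity with antisymmetry yields $X_h\{f,g\}_\theta=\{\{f,g\}_\theta,h_\theta\}_\theta=\{\{f,h_\theta\}_\theta,g\}_\theta+\{f,\{g,h_\theta\}_\theta\}_\theta=\{X_hf,g\}_\theta+\{f,X_hg\}_\theta$, which together with $\mathbb{R}$--linearity is precisely the derivation property.

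I do not expect any genuine obstacle here: the calculation is short once the first proposition is in hand. The only point deserving care is the well-definedness of the right-hand side of (\ref{poissontheta}), since a priori $\{f,h_\theta\}_\theta$ is computed through an arbitrarily chosen Hamiltonian vector field of $f$; but this is exactly the independence built into the definition of the presymplectic Poisson bracket, so it poses no difficulty. The subtler conceptual remark, already anticipated in the discussion preceding this proposition, is that this derivation property holds with respect to the \emph{Lie} structures of the derived Poisson algebras $({\rm Adm}(\Omega_\theta),\{\cdot,\cdot\}_\theta)$ but does not lift to a Leibniz rule with respect to the pointwise product on $C^\infty(\Omega)$, which is why the statement is necessarily phrased at the level of these derived algebras.
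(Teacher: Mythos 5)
Your proof is correct and follows essentially the same route as the paper: identify $X_h$ as a presymplectic Hamiltonian vector field of $h_\theta$ relative to $\Omega_\theta$, so that $X_hf=\{f,h_\theta\}_\theta$ by the definition of the bracket, and then obtain the derivation property from antisymmetry and the Jacobi identity. The only cosmetic difference is which well-definedness issue is emphasized --- the paper checks independence of the choice of $\Omega$--Hamiltonian function $h$ for the given vector field (any two differ by a locally constant function), while you stress the independence built into the definition of $\{\cdot,\cdot\}_\theta$; both points are valid and neither affects the argument.
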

\begin{proof} Note that $X_hf=\{f,h_\theta\}_\theta\in {\rm Adm}({\Omega_\theta})$  is well defined. Every two functions related to the same Hamiltonian vector field with respect to $\Omega_\theta$ differ in a locally constant function. So, if $X_{h^1}=X_{h^2}$, then $X_{h^1}f=\{f,h^1_{\theta}\}_{\theta}=\{f,h^2_{\theta}\}_{\theta}=X_{h^2}f$. Now,
$$
X_h\{f,g\}_{\theta}=\{\{f,g\}_{\theta},h_{\theta}\}_{\theta}=
\{\{f,h_{\theta}\}_{\theta},g\}_{\theta}+
\{f,\{g,h_{\theta}\}_{\theta}\}_{\theta}=
\{X_hf,g\}_{\theta}+\{f,X_hg\}_{\theta}.
$$
\end{proof}

\begin{proposition}\label{algebra} Given a polysymplectic form $\Omega=\sum_{i=1}^k\omega_i\otimes e^i$, then
$$
\begin{array}{rccc}
\phi_\theta&:(C^\infty(\Omega),\{\cdot,\cdot\}_\Omega)&\rightarrow &({\rm Adm}({\Omega_\theta}),\{\cdot,\cdot\}_{\theta})\\
&h&\mapsto &h_\theta=\langle h,\theta \rangle\\
\end{array}
$$
is a Lie algebra morphism. Hence, every finite-dimensional Lie algebra $(\mathcal{W}\subset C^\infty(\Omega),\{\cdot,\cdot\}_\Omega)$  is a Lie algebra extension of the Lie algebra $(\phi_\theta(\mathcal{W}),\{\cdot,\cdot\}_{\theta})$.
\end{proposition}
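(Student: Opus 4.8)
The plan is to check that $\phi_\theta$ is well defined with image inside ${\rm Adm}(\Omega_\theta)$, that it is $\mathbb{R}$--linear, and that it carries $\{\cdot,\cdot\}_\Omega$ to $\{\cdot,\cdot\}_\theta$; the extension statement will then follow by a standard argument. First I would recall, from the proposition asserting that every $\Omega$--Hamiltonian function gives rise to an admissible function with respect to $(N,\Omega_\theta)$, that if $h=\sum_{i=1}^k h_i\otimes e^i\in C^\infty(\Omega)$ has $\Omega$--Hamiltonian vector field $X_h$, then $\iota_{X_h}\Omega_\theta={\rm d}h_\theta$ with $h_\theta=\langle h,\theta\rangle=\sum_{i=1}^k\theta(e^i)h_i$. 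This simultaneously shows that $h_\theta\in{\rm Adm}(\Omega_\theta)$ --so $\phi_\theta$ does land in ${\rm Adm}(\Omega_\theta)$-- and that $X_h$ is a Hamiltonian vector field of $h_\theta$ relative to $\Omega_\theta$. Linearity of $\phi_\theta$ is immediate, since $\langle\lambda h+\mu g,\theta\rangle=\lambda\langle h,\theta\rangle+\mu\langle g,\theta\rangle$ and the vector space structures on both sides are the pointwise ones.

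The core step is the identity $\phi_\theta(\{h,g\}_\Omega)=\{\phi_\theta(h),\phi_\theta(g)\}_\theta$. On the left, by the definition (\ref{LieB}) of $\{\cdot,\cdot\}_\Omega$ we have $\{h,g\}_\Omega=\sum_{i=1}^k\{h_i,g_i\}_{\omega_i}\otimes e^i$, hence $\phi_\theta(\{h,g\}_\Omega)=\sum_{i=1}^k\theta(e^i)\{h_i,g_i\}_{\omega_i}$. On the right, I use that the Poisson bracket associated with a presymplectic form does not depend on the chosen Hamiltonian vector field, so that $\{h_\theta,g_\theta\}_\theta=X_g h_\theta$ can be evaluated using the $\Omega$--Hamiltonian vector field $X_g$, which by the first step is indeed a Hamiltonian vector field of $g_\theta$ for $\Omega_\theta$. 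Then $X_g h_\theta=\sum_{i=1}^k\theta(e^i)\,X_g h_i$, and since $\iota_{X_g}\omega_i={\rm d}g_i$ exhibits $X_g$ as a Hamiltonian vector field of $g_i$ relative to $\omega_i$, we get $X_g h_i=\{h_i,g_i\}_{\omega_i}$. Comparing the two expressions proves the identity, so $\phi_\theta$ is a Lie algebra morphism.

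For the extension claim, restricting $\phi_\theta$ to a finite-dimensional Lie subalgebra $\mathcal{W}\subset C^\infty(\Omega)$ yields a surjective Lie algebra morphism onto its image $\phi_\theta(\mathcal{W})$, which is therefore a Lie subalgebra of ${\rm Adm}(\Omega_\theta)$, and whose kernel $\mathfrak{k}=\mathcal{W}\cap\ker\phi_\theta$ is an ideal of $\mathcal{W}$. Thus $0\to\mathfrak{k}\to\mathcal{W}\stackrel{\phi_\theta}{\to}\phi_\theta(\mathcal{W})\to 0$ is a short exact sequence of Lie algebras, i.e. $\mathcal{W}$ is a Lie algebra extension of $\phi_\theta(\mathcal{W})$. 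I do not anticipate a genuine obstacle; the only point needing care is the legitimacy of computing $\{h_\theta,g_\theta\}_\theta$ with $X_g$ rather than with an a priori different Hamiltonian vector field of $g_\theta$, which is exactly what the well-definedness of the presymplectic Poisson bracket guarantees.
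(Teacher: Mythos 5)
Your proof is correct and follows essentially the same route as the paper: the same chain of equalities $\phi_\theta(\{h,g\}_\Omega)=\sum_i\theta(e^i)\{h_i,g_i\}_{\omega_i}=X_g h_\theta=\{h_\theta,g_\theta\}_\theta$, justified by the fact that the single $\Omega$--Hamiltonian vector field $X_g$ serves simultaneously as a Hamiltonian vector field for each $g_i$ relative to $\omega_i$ and for $g_\theta$ relative to $\Omega_\theta$, followed by the standard short exact sequence for the restriction to $\mathcal{W}$. Your explicit identification of the kernel as $\mathcal{W}\cap\ker\phi_\theta$ is in fact slightly more careful than the paper's, which labels it as $\bigl({\rm H}^0_{\rm dR}(N)\bigr)^k\cap\mathcal{W}$, but this does not change the argument.
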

\begin{proof}
    Let $g,h$ be two $\Omega$--Hamiltonian functions. From (\ref{LieB}) and (\ref{poissontheta}), we obtain
    \[
        \phi_\theta\big(\{h,g\}_\Omega\big)= \sum_{i=1}^k\theta(e^i)\{h_i,g_i\}=\sum_{i=1}^k\theta(e^i)X_g(h_i)=X_g(h_\theta)=\{h_\theta,g_\theta\}_\theta\,,
    \]
    and $\phi_\theta$ becomes a Lie algebra morphism. Moreover, we have the exact sequence of Lie algebras
    $$
    0\hookrightarrow \stackrel{k}{\overbrace{\left({\rm H}_{\rm dR}^0(N)\oplus{\ldots}\oplus {\rm H}_{\rm dR}^0(N)\right)}}\cap \mathcal{W} \hookrightarrow \mathcal{W}\stackrel{\phi_\theta|_\mathcal{W}}{\longrightarrow} \phi_\theta(\mathcal{W})\rightarrow 0.
    $$
    Therefore, $(\mathcal{W},\{\cdot,\cdot\}_\Omega)$ is a Lie algebra extension of $(\phi_\theta(\mathcal{W}),\{\cdot,\cdot\}_{\theta})$.
\end{proof}
\newpage
\begin{proposition} For every polysymplectic manifold $(N,\Omega)$ we have the following commutative exact diagram:

{\tiny
$$
\xymatrix@C=0.08cm@R=1.7cm{
0 \ar@{^(->}[rrrd] \ar@<1ex>@{^(->}[rr]\ar@<-1ex>@{_(->}[rd] &    & {\rm H_{dR}^0}(N)^k \ar@<1ex>@{^(->}[rrrd] \ar@{^{(}->}[dl]|!{[ll];[rd]}\hole&   &   &  &   &   & & &&\\
& {\rm H_{dR}^0}(N)^k   \ar@<-1ex>@{^(->}[rrrd] &   & \mathcal{W}_0 \ar@<-0.8ex>@{_{(}->}[lu] \ar@{^{(}->}[ll] \ar@<-0.5ex>@{^{(}->}[rrrd]&   & C^\infty(\Omega)\ar[dl]^-(0.2){\phi_\theta}|!{[ll];[rd]}\hole \ar@<1ex>[rrrd]^-{B_\Omega}&   &   & & &&\\
&    &   &   & {\rm Adm}(\Omega_\theta) \ar[rrrd]_-{\Lambda_\theta} &   & \mathcal{W} \ar@<-0.8ex>@{_{(}->}[lu] \ar@<1ex>@{^{(}->}[ll]^(0.3){\phi_\theta|_\mathcal{W}} \ar[rrrd]^(.25){B_\Omega|_\mathcal{W}}&   &{\rm Ham}(\Omega)\ar[dl]_(0.6){\pi_\theta|_{{\rm Ham}(\Omega)}}|!{[ll];[rd]}\hole \ar@<1ex>[rrrd]& &&\\
&    &   &   &   &   &   &  \frac{{\rm Ham}(\Omega_\theta)}{{\rm G}(\Omega_\theta)}\ar@/_{8mm}/[rrrr] & &B_\Omega(\mathcal{W}) \ar@<-0.8ex>@{_{(}->}[lu] \ar[ll]^{\pi_\theta|_{B_\Omega(\mathcal{W})}} \ar[rr]&& 0\\
}
$$}
\bigskip

\noindent where $\mathcal{W}_0={\rm H}^0(N)^k\cap \mathcal{W}$ and we recall that ${\rm G}(\Omega_\theta)$ is the space of gauge vector fields of $\Omega_\theta$, we call $\pi_\theta:X\in {\rm Ham}(\Omega_\theta)\mapsto [X]\in {\rm Ham}(\Omega_\theta)/G(\Omega_\theta)$  the quotient map onto ${\rm Ham}(\Omega_\theta)/G(\Omega_\theta)$, and $\Lambda_\theta:{\rm Adm}(\Omega_\theta)\rightarrow{\rm Ham}(\Omega_\theta)/G(\Omega_\theta)$ is the
Lie algebra morphism mapping each $f\in {\rm Adm}(\Omega_\theta)$ to the class $[-X_f]$.
\end{proposition}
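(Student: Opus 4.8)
The plan is to assemble the diagram out of exact sequences that are already available and then to verify that every face commutes by a short computation with the defining relations $\iota_{X_h}\omega_i=\mathrm{d}h_i$. First I would fix the finite-dimensional Lie subalgebra $\mathcal{W}\subset C^\infty(\Omega)$ as in Proposition \ref{algebra}, and collect the pieces: the exact sequence (\ref{ses}) involving $B_\Omega$; its evident restriction $0\hookrightarrow\mathcal{W}_0\hookrightarrow\mathcal{W}\to B_\Omega(\mathcal{W})\to 0$; the presymplectic exact sequence of Section \ref{LSLS} applied to the presymplectic form $\omega=\Omega_\theta$, which furnishes $\Lambda_\theta$, $\pi_\theta$ and the group $\mathrm{Ham}(\Omega_\theta)/G(\Omega_\theta)$; and the exact sequence of Proposition \ref{algebra} for $\phi_\theta|_\mathcal{W}$. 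The only arrows not yet named are the composite $\pi_\theta|_{\mathrm{Ham}(\Omega)}$ (the inclusion $\mathrm{Ham}(\Omega)\hookrightarrow\mathrm{Ham}(\Omega_\theta)$, valid because every $\Omega$--Hamiltonian vector field is $\Omega_\theta$--Hamiltonian, followed by the quotient projection $\pi_\theta$), its restriction $\pi_\theta|_{B_\Omega(\mathcal{W})}$, and the canonical inclusions among the copies of $\mathrm{H}_{\mathrm{dR}}^0(N)$ and of $\mathcal{W}_0$; each is a Lie algebra morphism, being a composition of inclusions and quotient projections of Lie algebras.

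Next I would establish commutativity of the faces. The crucial identity is that for $f=\sum_{i=1}^k f_i\otimes e^i\in C^\infty(\Omega)$ with $\Omega$--Hamiltonian vector field $X_f$ one has
$$\iota_{X_f}\Omega_\theta=\sum_{i=1}^k\theta(e^i)\,\iota_{X_f}\omega_i=\sum_{i=1}^k\theta(e^i)\,\mathrm{d}f_i=\mathrm{d}f_\theta,\qquad f_\theta=\langle f,\theta\rangle=\phi_\theta(f),$$
so that $X_f$ is a Hamiltonian vector field of $f_\theta$ relative to $\Omega_\theta$. Therefore $\Lambda_\theta(\phi_\theta(f))=[-X_f]=\pi_\theta(-X_f)=\pi_\theta(B_\Omega(f))$, which is exactly commutativity of the central square relating $\phi_\theta$, $B_\Omega$, $\Lambda_\theta$ and $\pi_\theta|_{\mathrm{Ham}(\Omega)}$; restricting $f$ to $\mathcal{W}$ yields the analogous square with $\phi_\theta|_\mathcal{W}$ and $B_\Omega|_\mathcal{W}$. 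All the remaining faces involve only the various inclusions of cohomology groups and the restrictions of $\phi_\theta$ and $B_\Omega$, so their commutativity is immediate once the above identification is in place.

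Finally I would check exactness of every row and column, which reduces to the exactness of the sequences listed above together with two elementary observations: that the kernel of $\pi_\theta|_{\mathrm{Ham}(\Omega)}$ consists of those $\Omega$--Hamiltonian vector fields lying in $G(\Omega_\theta)$, and that the map induced on kernels by the square $C^\infty(\Omega)\xrightarrow{\phi_\theta}\mathrm{Adm}(\Omega_\theta)$, $C^\infty(\Omega)\xrightarrow{B_\Omega}\mathrm{Ham}(\Omega)$ is simply contraction with $\theta$, sending $\mathrm{H}_{\mathrm{dR}}^0(N)^k$ to $\mathrm{H}_{\mathrm{dR}}^0(N)$. I expect the main difficulty to be purely organizational rather than conceptual: there is no deep step, but one must pin down sign conventions carefully (note $B_\Omega(f)=-X_f$ while $\Lambda_\theta(f)=[-X_f]$, so the signs must be tracked on each face) and make sure the sequences attached to $\mathcal{W}$ and $\mathcal{W}_0$ are genuinely compatible with the ambient ones, so that the whole picture commutes and not merely each square separately.
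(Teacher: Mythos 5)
Your proposal is correct and follows essentially the same route as the paper: the paper likewise reduces everything to the previously established exact sequences and identifies the only non-trivial point as the commutativity of the central square, which it verifies by exactly your computation $\iota_{X_h}\Omega_\theta=\langle\iota_{X_h}\Omega,\theta\rangle={\rm d}h_\theta$, whence $\pi_\theta\circ B_\Omega=\Lambda_\theta\circ\phi_\theta$. Your additional remarks on tracking signs and on the compatibility of the $\mathcal{W}$-sequences with the ambient ones are sound but do not change the argument.
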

\begin{proof}
The only non-trivial part which does not follow from previous results of this section is to prove that the diagram
$$
\xymatrix{
C^\infty(\Omega) \ar[r]^-{B_\Omega} \ar[d]^-{\phi_\theta}& {\rm Ham}(\Omega)\ar[d]^-{\pi_\theta|_{{\rm Ham}(\Omega)}}\\
{\rm Adm}(\Omega_\theta)\ar[r]^-{\Lambda_\theta} & \frac{{\rm Ham}(\Omega_\theta)}{G(\Omega_\theta)}\\
}
$$
is commutative. Using that ${\rm Ham}(\Omega)\subset {\rm Ham}(\Omega_\theta)$, we have that
$$
{\rm d}h_\theta ={\rm d}\phi_\theta(h)=\langle {\rm d}h,\theta\rangle= \langle \iota_{X_h}\Omega,\theta\rangle=\iota_{X_h}\Omega_\theta\Rightarrow [X_{h_\theta}]=[X_h],
$$
for an arbitrary $h\in C^\infty(\Omega)$. So,
$$
\pi_\theta\circ B_\Omega(h)=\pi_\theta(-X_h)=[-X_h]=[-X_{h_\theta}]=\Lambda_\theta(h_\theta)=\Lambda_\theta\circ \phi_\theta(h)
$$
and $\pi_\theta\circ B_\Omega=\Lambda_\theta\circ \phi_\theta$.
\end{proof}

\section{k-Hamiltonian Lie structures}\label{HLS}

Let us further investigate the properties of the $k$--symplectic Lie systems provided in the
previous sections. Consider again the Schwarzian equations in first-order form (\ref{firstKS3}). Remind that $Y_1, Y_2$ and $Y_3$ are Hamiltonian vector fields with respect to the presymplectic structures $\omega_1$ and $\omega_2$. In particular, from the relations (\ref{3KSHamFun}) and (\ref{3KSHamFunOmega2}), the vector fields $Y_1,\, Y_2$ and $Y_3$ have Hamiltonian functions
\begin{equation}\label{3KSHamFunloc}
h^1_{1}=\frac{2}{v},\quad h^2_{1}=\frac{a}{v^2},\quad h^3_{1}=\frac{a^2}{2v^3},
\end{equation}
and
\begin{equation}\label{3KSHamFunloc2}
h^1_{2}=-\frac{4x}{v},\quad h^2_{2}=2-\frac{2ax}{v^2},\quad h^3_{2}=\left(\frac{2a}{v}-\frac{a^2x}{v^3}\right),
\end{equation}
with respect to the presymplectic forms $\omega_1$ and $\omega_2$ given by (\ref{forms}), correspondingly. Moreover, we have
\begin{equation*}
 \begin{aligned}
\left\{h_{i}^1,h_{i}^2\right\}&=-h_{i}^1,\qquad \left\{h_{i}^1,h_{i}^3\right\}&=-2h_{i}^2,\qquad \left\{h_{i}^2,h_{i}^3\right\}&=-h_{i}^3,\qquad i=1,2.
\end{aligned}
\end{equation*}
Consequently, the functions $h^\alpha_{i}$, with $\alpha=1,2,3$ and a fixed $i$, span a finite-dimensional real Lie algebra of functions isomorphic to $\mathfrak{sl}(2,\mathbb{R})$. The same applies for $h^\alpha_{1}+h^\alpha_{2}$, with $\alpha=1,2,3$, and in general for any linear combination $\mu_1 h_{1}^\alpha+\mu_2h_{2}^\alpha$, with fixed $(\mu_1,\mu_2)\in \mathbb{R}^2\backslash \{(0,0)\}$. In this way, a $k$--symplectic structure is associated with many different Lie algebras of functions which can be employed to study the properties of the system.

Now, we consider the space $C^\infty(\Omega)$ of $\Omega$--Hamiltonian functions given by the two-symplectic structure $(\omega_1,\omega_2)$. From the relations (\ref{3KSHamFun}) and (\ref{3KSHamFunOmega2}), the functions
\[
h^\alpha=h^\alpha_{1}\otimes e^1 + h^\alpha_{2}\otimes e^2,
\]
with $\alpha=1,2,3$, span a finite-dimensional Lie algebra when endowed with the Lie bracket (\ref{LieB}).

Thus, every $X_t^{3KS}$ is an $\Omega$--Hamiltonian vector field with $\Omega$--Hamiltonian function
\[
    h_t^{3KS}= (h^3_{1}+b_1(t)h^1_{1})\otimes e^1 + (h^3_{2}+b_1(t)h^1_{2})\otimes e^2.
\]
Since we assume $b_1(t)$ to be non-constant, the space Lie$(\{h_t^{3KS}\}_{t\in \mathbb{R}},\{\cdot,\cdot\}_\Omega)$ becomes a real Lie algebra isomorphic to $\mathfrak{sl}(2,\mathbb{R})$.

If we now turn to the system of Riccati equations (\ref{3Ricc}), we see that we can obtain a similar result. More specifically, the relations (\ref{RHamFun}) and (\ref{RHamFunOmega2}) imply that $X_1,\, X_2$ and $X_3$ have Hamiltonian functions
\begin{equation}\label{3RHamFunloc}
h_1^{1}=\frac{1}{x_1-x_2}+\frac{1}{x_3-x_4},\quad h^2_{1}=\frac 12\left(\frac{x_1+x_2}{x_1-x_2}+\frac{x_3+x_4}{x_3-x_4}\right),\quad h^3_{1}=\frac{x_1 x_2}{x_1-x_2}+\frac{x_3 x_4}{x_3-x_4}
\end{equation}
and
\begin{equation}\label{3RHamFunloc2}
h^1_{2}= \sum_{\stackrel{i,j=1}{i\leq j}}^4\frac{1}{x_i-x_j},\quad h^2_{2} = \frac 12\left(\sum_{\stackrel{i,j=1}{i\leq j}}^4\frac{x_i+x_j}{x_i-x_j}\right),\quad h^3_{2}= \sum_{\stackrel{i,j=1}{i\leq j}}^4\frac{x_i x_j}{x_i-x_j}\,.
\end{equation}
Moreover, we have that
\begin{equation*}
 \begin{aligned}
\{h_{i}^1,h_{i}^2\}&=-h_{i}^1,\qquad \{h_{i}^1,h_{i}^3\}&=-2h_{i}^2,\qquad \{h_{i}^2,h_{i}^3\}&=-h_{i}^3,\qquad i=1,2.
\end{aligned}
\end{equation*}
Consequently, the functions $h^\alpha_{i}$, with $\alpha=1,2,3$ and a fixed $i$ span a finite-dimensional real Lie algebra of functions.

Now, we consider the space $C^\infty(\Omega)$ of $\Omega$--Hamiltonian functions given by the two-symplectic structure $(\omega_1,\omega_2)$. From the relations (\ref{RHamFun}) and (\ref{RHamFunOmega2}), the functions
\[
h^\alpha=h^\alpha_{1}\otimes e^1 + h^\alpha_{2}\otimes e^2,
\]
with $\alpha=1,2,3$ span a finite-dimensional Lie algebra when endowed with the Lie bracket (\ref{LieB}).

Thus, every $X_t^{R}$ is an $\Omega$--Hamiltonian vector field with $\Omega$--Hamiltonian function
\[
    h_t^{R}= a(t)h^1+b(t)h^2+c(t)h^3\,.
\]
Again, we can associate $X$ to a curve $t\to h_t^{R}$ in a finite-dimensional  real Lie algebra  Lie$(\{h_t^{R}\}_{t\in \mathbb{R}},\{\cdot,\cdot\}_\Omega)$.

Both examples suggest us to define the following notions.
\begin{definition}
A \textit{$k$--symplectic Lie--Hamiltonian structure} is a triple $(N,\Omega, h)$ where $(N,\Omega)$ is a polysymplectic manifold and $h$ represents a $t$-parametrized family of $\Omega$--Hamiltonian functions $h_t\colon N\to \mathbb{R}^k$ such that Lie$(\{h_t\}_{t\in \mathbb{R}}, \{\cdot, \cdot\}_\Omega)$ is a finite-dimensional real Lie algebra.
\end{definition}
\begin{definition}
    A $t$-dependent vector field $X$ is said to admit a $k$--symplectic Lie--Hamiltonian structure $(N,\Omega,h)$ if $B_\Omega(h_t)=-X_t$, for all $t\in \mathbb{R}$.
\end{definition}
\begin{theorem}\label{HamLieSys}$\!$ A system $X\!$  admits a $k$--symplectic Lie--Hamiltonian structure if and only if it is a $k$--symplectic Lie system.
\end{theorem}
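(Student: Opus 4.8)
The plan is to prove both implications by carefully unwinding the definitions and using the structural results of Sections \ref{OHam} and \ref{DPA}. The key point is that a $k$--symplectic Lie--Hamiltonian structure $(N,\Omega,h)$ packages exactly the data needed to witness that $X$ is a $k$--symplectic Lie system, and conversely the minimal Lie algebra $V^X$ of a $k$--symplectic Lie system can be lifted through the exact sequence (\ref{ses}) to produce the required finite-dimensional Lie algebra of $\Omega$--Hamiltonian functions.

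First I would prove the easy direction: if $X$ admits a $k$--symplectic Lie--Hamiltonian structure $(N,\Omega,h)$, then by definition $B_\Omega(h_t)=-X_t$ for all $t$, so each $X_t$ is an $\Omega$--Hamiltonian (equivalently $k$--Hamiltonian) vector field with respect to the $k$--symplectic structure $(\omega_1,\ldots,\omega_k)$ underlying $\Omega$. It remains to check that $V^X={\rm Lie}(\{X_t\}_{t\in\mathbb{R}})$ is finite-dimensional. Since ${\rm Lie}(\{h_t\}_{t\in\mathbb{R}},\{\cdot,\cdot\}_\Omega)$ is finite-dimensional by hypothesis, and $B_\Omega:(C^\infty(\Omega),\{\cdot,\cdot\}_\Omega)\to {\rm Ham}(\Omega)$ is a Lie algebra morphism by the theorem giving the exact sequence (\ref{ses}), the image $B_\Omega({\rm Lie}(\{h_t\}_{t\in\mathbb{R}}))$ is a finite-dimensional Lie algebra of vector fields containing all the $-X_t$, hence containing $V^X$. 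Thus $V^X$ is finite-dimensional and consists of $k$--Hamiltonian vector fields, so $X$ is a $k$--symplectic Lie system.

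For the converse, suppose $X$ is a $k$--symplectic Lie system with compatible $k$--symplectic structure $(\omega_1,\ldots,\omega_k)$; fix a basis $\{e^1,\ldots,e^k\}$ of $\mathbb{R}^k$ and let $\Omega=\sum_{i=1}^k\omega_i\otimes e^i$. Then $V^X$ is a finite-dimensional Lie algebra of $\Omega$--Hamiltonian vector fields. The goal is to find a finite-dimensional Lie subalgebra $\mathcal{W}\subset C^\infty(\Omega)$ with $B_\Omega(\mathcal{W})\supseteq V^X$, and then set $h_t$ to be any element of $\mathcal{W}$ with $B_\Omega(h_t)=-X_t$, depending smoothly on $t$ (which is possible by writing $X_t=\sum_\alpha b_\alpha(t)X_\alpha$ for a basis $X_1,\ldots,X_r$ of $V^X$ and choosing fixed $\Omega$--Hamiltonian functions $f_\alpha$ for each $X_\alpha$, so $h_t=-\sum_\alpha b_\alpha(t)f_\alpha$). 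The existence of such a finite-dimensional $\mathcal{W}$ is obtained by pulling back $V^X$ along the exact sequence (\ref{ses}): choose $\Omega$--Hamiltonian functions $f_1,\ldots,f_r$ for a basis $X_1,\ldots,X_r$ of $V^X$; then $\mathcal{W}_0:={\rm Lie}(\{f_1,\ldots,f_r\},\{\cdot,\cdot\}_\Omega)$ maps onto $V^X$ under $B_\Omega$, and $\mathcal{W}_0$ sits in an extension $0\to \ker B_\Omega|_{\mathcal{W}_0}\to \mathcal{W}_0\to V^X\to 0$ where $\ker B_\Omega$ is a space of ($\mathbb{R}^k$-tuples of) locally constant functions; on a manifold with finitely many connected components this kernel is finite-dimensional, hence so is $\mathcal{W}_0$. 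Taking $\mathcal{W}=\mathcal{W}_0$ and $h_t$ as above gives the required $k$--symplectic Lie--Hamiltonian structure $(N,\Omega,h)$.

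The main obstacle is the finite-dimensionality of $\ker B_\Omega|_{\mathcal{W}_0}$, i.e. controlling the central extension produced by the non-uniqueness of Hamiltonian functions. This requires that ${\rm H}^0_{\rm dR}(N)$ be finite-dimensional, which holds under the paper's standing assumption that manifolds have (at most) finitely many connected components — a point that should be stated explicitly. One must also verify that the $\mathbb{R}^k$-valued map $t\mapsto h_t$ built from a basis of $V^X$ genuinely lands in the finite-dimensional Lie algebra ${\rm Lie}(\{h_t\}_{t\in\mathbb{R}},\{\cdot,\cdot\}_\Omega)$; this is immediate once $\mathcal{W}_0$ is seen to be finite-dimensional and to contain every $h_t$. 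The remaining verifications — that $B_\Omega$ is a Lie algebra morphism, that $\Omega$--Hamiltonian and $k$--Hamiltonian vector fields coincide, and that $\iota_{[X_f,X_g]}\omega_i={\rm d}\{g_i,h_i\}_{\omega_i}$ — are all already recorded in the preceding sections and can simply be cited.
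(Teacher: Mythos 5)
Your argument is correct and follows essentially the same route as the paper: the forward implication pushes the finite-dimensional Lie algebra ${\rm Lie}(\{h_t\}_{t\in\mathbb{R}},\{\cdot,\cdot\}_\Omega)$ through the Lie algebra morphism $B_\Omega$, and the converse lifts a basis of $V^X$ to $\Omega$--Hamiltonian functions and controls finite-dimensionality via the exact sequence (\ref{ses}) (the paper works with $B_\Omega^{-1}(V^X)$ rather than your ${\rm Lie}(\{f_1,\ldots,f_r\})$, but both hinge on $\ker B_\Omega\simeq({\rm H}_{\rm dR}^0(N))^k$ being finite-dimensional, a hypothesis you rightly make explicit and the paper glosses over). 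The only slip is a sign: with the convention $B_\Omega(f)=-X_f$ and the requirement $B_\Omega(h_t)=-X_t$, the correct lift is $h_t=+\sum_\alpha b_\alpha(t)f_\alpha$, not $-\sum_\alpha b_\alpha(t)f_\alpha$.
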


\begin{proof}

Let $(N, \Omega, h)$ be a $k$--symplectic Lie--Hamiltonian structure for $X$, then ${\rm Lie}(\{h_t\}_{t\in \mathbb{R}}, \{\cdot, \cdot\}_\Omega)$ is a finite-dimensional real Lie algebra. Since $B_\Omega$ is  a Lie algebra morphism, then $V= B_\Omega ({\rm Lie}(\{h_t\}_{t\in \mathbb{R}}))$ is  a finite-dimensional real Lie algebra. As each vector field $-X_t$ is $\Omega$--Hamiltonian with an $\Omega$--Hamiltonian function within $\{ h_t\}_{t\in\mathbb{R}}$, then  $\{X_t\}_{t\in\mathbb{R}}\subset V$. Therefore, $V^X={\rm  Lie}(\{X_t\}_{t\in \mathbb{R}})\subset V$ and $X$ is a $k$--symplectic Lie system.

Conversely, if $X$ is a $k$--symplectic Lie system, the vector fields $\{X_t\}_{t\in\mathbb{R}}$ are contained in a finite-dimensional real Lie algebra
 of $\Omega$--Hamiltonian vector fields $V^X$. So, we can write  $X_t=\sum_{\alpha=1}^rb_\alpha(t)X_\alpha$ for a basis $\{X_1,\ldots,X_r\}$ for $V^X$ of $\Omega$--Hamiltonian vector fields and certain $t$-dependent functions $b_1,\ldots,b_r$. In view of the sequence (\ref{ses}), $B_\Omega^{-1}(V^X)$ is a finite-dimensional real Lie algebra of $\Omega$--Hamiltonian
 functions. If $h_1,\ldots,h_r$ is a set of elements of $C^\infty(\Omega)$ with associated $\Omega$--Hamiltonian vector fields $X_1,\ldots,X_r$, then $h_t=\sum_{\alpha=1}^rb_\alpha(t)h_\alpha$ is an $\Omega$--Hamiltonian function and $-B_\Omega(h_t)=X_t$ for every $t\in\mathbb{R}$. Hence, $h_1,\ldots,h_r$ are contained within the finite-dimensional Lie algebra $B_\Omega^{-1}(V^X)$ and $(N,\Omega,h)$ becomes a $k$--symplectic Lie--Hamiltonian structure for $X$.
\end{proof}

\section{On general properties of $k$--symplectic Lie systems}\label{KSLS}

We now turn to describing the analogue for $k$--symplectic Lie systems of the basic
properties of general Lie systems. Additionally, we show how the derived algebras enable us to investigate their $t$-independent constants of motion.

Recall that, as for every Lie system, the general solution $x(t)$ of a $k$--symplectic Lie system $X$ on $N$ can be brought into the form $x(t)=\varphi(g(t),x_0)$, where $x_0\in N$ and  $\varphi\colon  G\times N \to N$ is a Lie group action. The $\varphi$ plays another relevant r\^ole. It is known that if
$G$ is connected, every curve $\bar g(t)$ in $G$ induces a
 $t$-dependent change of variables mapping a Lie system $X$ taking values in a
Lie algebra $V^X$ into another Lie system $Y$,
 with general solution $ y(t)=\varphi(\bar g(t),x(t))$, taking values in the
same Lie algebra $V^X$ \cite{CGL09,CarRamGra,CLLEmd}.
 In the particular case of $X$ being a $k$--symplectic Lie system, we have that $V^X$ consists of $k$--Hamiltonian vector fields with respect to some $k$--symplectic structure. Since the vector fields
$\{Y_t\}_{t\in\mathbb{R}}$ belong to $V^X$ also, they are $k$--Hamiltonian vector fields and $Y$ is  again a $k$--symplectic Lie system.

Using again that $x(t)=\varphi(g(t),x_0)$, we see that the each particular solution of a
Lie system $X$ is contained within an orbit $S$ of $\varphi$. Indeed, it is easy to
see that the vector fields $\{X_t\}_{t\in\mathbb{R}}$ are tangent to such orbits and it makes sense to define the restriction $X|_S$ of $X$ to each orbit $S$.
Therefore, the integration of a Lie system $X$ reduces to integrating its
restrictions to each orbit of $\varphi$, which are Lie systems also. So, it is interesting to know
whether $X|_S$ is again a $k$--symplectic Lie system. More generally, we want to know whether the restriction $X|_S$ of a $k$--symplectic Lie system $X$ to a submanifold $S\subset N$, where it has sense to define $X|_S$, is again a $k$--symplectic Lie system. This requires studying the notion of $l$--symplectic submanifold  ($l\leq k$) of a $k$--symplectic manifold $(N,\omega_1,\ldots, \omega_k)$.

\begin{definition}
    Given a $k$--symplectic manifold $(N,\omega_1,\ldots,\omega_k)$, a submanifold $S\subset N$ is said to be an	 $l$--symplectic submanifold with respect to $(N,\omega_1,\ldots,\omega_k)$, ($l\leq k$) if $\dim S=n_l(l+1)$ for an integer $n_l$ and
    \begin{equation}\label{l_sympl}
        (T_pS)^{\perp,l}\cap T_pS=\{0\},\qquad \forall p\in S,
    \end{equation}
    where $(T_pS)^{\perp,l}$ is the $l$--th orthogonal complement of $T_pS$ with respect to the $k$--symplectic structure $(N,\omega_1,\ldots,\omega_k)$, i.e. $T_pS^{\perp, l}=\{v\in T_pN: \omega_1(v,w)=\ldots=\omega_l(v,w)=0, \forall w\in T_pS\}$ \cite{LV-2012}.
\end{definition}

Let us observe that the condition (\ref{l_sympl}) is equivalent to
    \begin{equation}\label{Con}
\bigcap_{i=1}^l(T_pS)^{\perp_i}\cap T_pS=\{0\},\qquad \forall p\in S,
\end{equation}
where $(T_pS)^{\perp_i}$ is the presymplectic annihilator of $T_pS$, i.e. $T_pS^{\perp_i}=\{v\in T_pN: \omega_i(v,w)=0, \forall w\in T_pS\}$.

It is easy to prove the following
\begin{lemma}
    If $(N,\omega_1,\ldots, \omega_k)$ and $(N,\omega'_1,\ldots, \omega'_k)$ are gauge equivalent and $S\subset N$ is a submanifold then
    $$
        (T_pS)^{\perp,k} = (T_pS)^{\perp',k},\qquad \forall p\in S,
    $$
    where $(T_pS)^{\perp,k}$ and $(T_pS)^{\perp',k}$ are the $k$--th orthogonal $k$--symplectic to $T_pS$ with respect to $(N,\omega_1,\ldots, \omega_k)$ and $(N,\omega'_1,\ldots, \omega'_k)$ respectively.
\end{lemma}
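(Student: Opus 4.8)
The plan is to reduce the statement to the elementary observation that the $k$-th orthogonal complement $(T_pS)^{\perp,k}$ depends on the $k$--symplectic structure only through the linear span $\langle\omega_1,\ldots,\omega_k\rangle$ of its presymplectic forms, and then to invoke Corollary \ref{EquMan}.

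First I would recall that, by the definition of $(T_pS)^{\perp,k}$ and the remark following it, $(T_pS)^{\perp,k}=\bigcap_{i=1}^k(T_pS)^{\perp_i}$, where $(T_pS)^{\perp_i}=\{v\in T_pN:\omega_i(v,w)=0\ \text{for all } w\in T_pS\}$. For an arbitrary two-form $\eta$ on $N$, write $(T_pS)^{\perp_\eta}$ for the analogous presymplectic annihilator of $T_pS$ relative to $\eta$. The key claim is
$$
\bigcap_{i=1}^k(T_pS)^{\perp_i}=\bigcap_{\eta\in\langle\omega_1,\ldots,\omega_k\rangle}(T_pS)^{\perp_\eta}.
$$
The inclusion $\supseteq$ is immediate because each $\omega_i$ belongs to $\langle\omega_1,\ldots,\omega_k\rangle$. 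For $\subseteq$, take $v$ in the left-hand side and an arbitrary element $\eta=\sum_{i=1}^k c_i\omega_i$ with $c_i\in\mathbb{R}$; then for every $w\in T_pS$ one has $\eta(v,w)=\sum_{i=1}^k c_i\,\omega_i(v,w)=0$, so $v\in(T_pS)^{\perp_\eta}$. This proves the claim and shows that $(T_pS)^{\perp,k}$ is completely determined by the linear subspace $\langle\omega_1,\ldots,\omega_k\rangle$ of two-forms on $N$.

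Finally I would assemble the pieces. By hypothesis $(N,\omega_1,\ldots,\omega_k)$ and $(N,\omega'_1,\ldots,\omega'_k)$ are gauge equivalent, so Corollary \ref{EquMan} gives $\langle\omega_1,\ldots,\omega_k\rangle=\langle\omega'_1,\ldots,\omega'_k\rangle$. Applying the claim to both structures at the same point $p\in S$ yields
$$
(T_pS)^{\perp,k}=\bigcap_{\eta\in\langle\omega_1,\ldots,\omega_k\rangle}(T_pS)^{\perp_\eta}=\bigcap_{\eta\in\langle\omega'_1,\ldots,\omega'_k\rangle}(T_pS)^{\perp_\eta}=(T_pS)^{\perp',k},
$$
which is the assertion. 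There is no genuine obstacle here; the only point that needs to be spelled out with care is the translation \emph{gauge equivalence $\Longleftrightarrow$ equality of the linear spans}, which is exactly the content of Corollary \ref{EquMan}, combined with the span-invariance of the orthogonal complement established above.
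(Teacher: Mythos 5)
Your proof is correct. The paper omits the argument entirely (it merely states that the lemma "is easy to prove"), and what you give — observing that $(T_pS)^{\perp,k}=\bigcap_{i=1}^k(T_pS)^{\perp_i}$ depends only on the linear span $\langle\omega_1,\ldots,\omega_k\rangle$ by bilinearity, and then invoking Corollary \ref{EquMan} to identify the spans of gauge-equivalent structures — is exactly the intended reasoning, and it is also consistent with the paper's subsequent counterexample showing the statement fails for $l<k$, where only a partial span is involved.
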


Notice that $
        (T_pS)^{\perp,l} \neq (T_pS)^{\perp',l}$ in general for $l<k$. For instance, consider the linear example given by $N=\mathbb{R}^3$ with the gauge equivalent two-symplectic linear structures
        $(\omega_1=e^1\wedge e^3, \omega_2=e^2\wedge e^3)$ and $(\omega'_1=e^2\wedge e^3, \omega'_2=e^1\wedge e^3)$, where $\{e^1,e^2,e^3\}$ its the dual of the canonical basis of $\mathbb{R}^3$. Then if $S={\rm span}\{e^1\}$, we obtain
        \[
            S^{\perp,1}={\rm span}\{e_1,e_2\}\qquad  \makebox{and}\qquad S^{\perp',1}=\mathbb{R}^3\,.
        \] Therefore, $S^{\perp,1}\neq S^{\perp',1}$.

\begin{lemma}
Given a $k$--symplectic manifold $(N,\omega_1,\ldots,\omega_k)$ and a submanifold $S\subset N$, with $\iota:S\hookrightarrow N$ a
natural embedding, $(\iota^*\omega_1,\ldots,\iota^*\omega_l)$ is an $l$--symplectic structure on $S$ if and only if $S$ is an $l$--symplectic submanifold of $(N,\omega_1,\ldots,\omega_k)$.
\end{lemma}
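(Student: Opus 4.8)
The plan is to notice that both sides of the equivalence include the same dimension hypothesis, $\dim S=n_l(l+1)$ for some integer $n_l$, so that the entire content of the lemma is the pointwise identification of the two nondegeneracy conditions. I would fix $p\in S$, identify $T_pS$ with $\iota_*(T_pS)\subset T_pN$, and use naturality of the pullback to get $(\iota^*\omega_i)_p(v,w)=(\omega_i)_p(v,w)$ for all $v,w\in T_pS$ and all $i=1,\ldots,l$. From this, $v\in\ker(\iota^*\omega_i)(p)$ if and only if $v\in T_pS$ and $(\omega_i)_p(v,w)=0$ for every $w\in T_pS$, i.e. if and only if $v\in(T_pS)^{\perp_i}\cap T_pS$, where $(T_pS)^{\perp_i}$ is the presymplectic annihilator of $T_pS$ relative to $\omega_i$.

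Next I would intersect over $i=1,\ldots,l$. Since $(T_pS)^{\perp,l}=\bigcap_{i=1}^l(T_pS)^{\perp_i}$ directly from the definitions, this gives
$$
\bigcap_{i=1}^l\ker(\iota^*\omega_i)(p)=\Big(\bigcap_{i=1}^l(T_pS)^{\perp_i}\Big)\cap T_pS=(T_pS)^{\perp,l}\cap T_pS ,
$$
so the left-hand side vanishes for every $p\in S$ exactly when $(T_pS)^{\perp,l}\cap T_pS=\{0\}$ for every $p\in S$; by the equivalence of (\ref{l_sympl}) and (\ref{Con}) this is precisely the defining condition for $S$ to be an $l$--symplectic submanifold. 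It remains only to observe that each $\iota^*\omega_i$ is automatically closed, because ${\rm d}\,\iota^*\omega_i=\iota^*{\rm d}\omega_i=0$, so that $(\iota^*\omega_1,\ldots,\iota^*\omega_l)$ really is an admissible candidate for an $l$--symplectic structure on $S$; combining this with the shared dimension hypothesis and the kernel condition yields the stated biconditional.

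I do not expect any genuine obstacle here. The single point that needs care is bookkeeping about ambient spaces: $\ker(\iota^*\omega_i)(p)$ is a subspace of $T_pS$, since $\iota^*\omega_i$ is a form on $S$, whereas $(T_pS)^{\perp_i}$ and $(T_pS)^{\perp,l}$ are subspaces of $T_pN$. The extra intersection with $T_pS$ built into (\ref{Con}) is exactly what aligns the two conventions, which is why the argument collapses to an essentially tautological identification once the pullback has been written out.
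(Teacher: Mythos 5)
Your proof is correct and follows essentially the same route as the paper, which reduces the lemma to the pointwise identity $\bigcap_{i=1}^l\ker(\iota^*\omega_i)(p)=(T_pS)^{\perp,l}\cap T_pS$. If anything, your intermediate step $\ker(\iota^*\omega_i)(p)=(T_pS)^{\perp_i}\cap T_pS$ is the more careful way to write the chain of equalities (the paper's displayed middle term $\bigcap_{i=1}^l\ker(\omega_i(p))\cap T_pS$ is, read literally, only an inclusion), and your remarks on closedness of the pullbacks and on where each kernel lives are exactly the right bookkeeping.
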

\begin{proof}
It is a direct consequence of the following relation
\[
    \bigcap_{i=1}^l\ker (\iota^*\omega_i(p))= \bigcap_{i=1}^l\ker (\omega_i(p))\cap T_pS=(T_pS)^{\perp, l}\cap T_pS,\qquad \forall p\in S\,.
\]
\end{proof}
Observe that if a submanifold $S\subset M$ is endowed with an $l$--symplectic structure $(\iota^*\omega_1,\ldots, \iota^*\omega_l)$ with $l< k$, then for all $l'$ such that $l\leq l'\leq k$ (it is necessary that there exists $n_{l'}$ such that ${\rm dim}\, S=n_{l'}(l'+1)$),  $(\iota^*\omega_1,\ldots, \iota^*\omega_{l'})$ is an $l'$--symplectic structure on $S$.

\begin{proposition} Let $(\omega_1,\ldots, \omega_k)$ be a $k$--symplectic structure on $N$ and $X$ be a $k$--symplectic Lie system. Given an $l$--symplectic submanifold $S$ such that $\mathcal{D}^X\subset TS$, the restriction of $X$ to $S$ is an $l$--symplectic Lie system.
\end{proposition}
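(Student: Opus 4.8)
The plan is to push the $k$--Hamiltonian structure of $V^X$ down to $S$ and verify that what remains there is an $l$--Hamiltonian structure. First I would note that the hypothesis $\mathcal{D}^X\subset TS$ means precisely that every $Y\in V^X$ is tangent to $S$: indeed $Y_p\in\mathcal{D}^X_p\subset T_pS$ for all $p\in S$. Hence each such $Y$ restricts to a vector field $Y|_S$ on $S$, and the restriction map $r\colon Y\in V^X\mapsto Y|_S\in\Gamma(\tau_S)$ is linear. Since the restriction of vector fields tangent to a submanifold commutes with Lie brackets, $r$ is a Lie algebra morphism, so $r(V^X)$ is a finite-dimensional real Lie algebra containing $\{X_t|_S\}_{t\in\mathbb{R}}$. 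Consequently $V^{X|_S}={\rm Lie}(\{X_t|_S\}_{t\in\mathbb{R}})\subseteq r(V^X)$ is finite-dimensional, so $X|_S$ is a Lie system.

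Next I would invoke the previous lemma: because $S$ is an $l$--symplectic submanifold of $(N,\omega_1,\ldots,\omega_k)$, the pullbacks $(\iota^*\omega_1,\ldots,\iota^*\omega_l)$ form an $l$--symplectic structure on $S$, where $\iota\colon S\hookrightarrow N$ denotes the natural embedding. It then remains only to show that every element of $r(V^X)$ --- and hence every element of $V^{X|_S}$ --- is $l$--Hamiltonian with respect to this structure.

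For this, take $Y\in V^X$. Since $X$ is a $k$--symplectic Lie system, $Y$ is Hamiltonian relative to each $\omega_i$, so $\iota_Y\omega_i={\rm d}h_i$ for some $h_i\in C^\infty(N)$. Using that $Y$ is $\iota$--related to $Y|_S$, pulling back along $\iota$ gives $\iota_{Y|_S}(\iota^*\omega_i)=\iota^*(\iota_Y\omega_i)={\rm d}(\iota^*h_i)$ for $i=1,\ldots,l$, so $Y|_S$ is Hamiltonian for each $\iota^*\omega_i$ with $i=1,\ldots,l$ and therefore $l$--Hamiltonian. Putting the pieces together, $V^{X|_S}$ is a finite-dimensional real Lie algebra of $l$--Hamiltonian vector fields with respect to $(\iota^*\omega_1,\ldots,\iota^*\omega_l)$, i.e. $X|_S$ is an $l$--symplectic Lie system.

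I do not anticipate any serious difficulty here: the genuinely nontrivial input is that the restricted two-forms still constitute an $l$--symplectic structure on $S$, which is exactly where the $l$--symplectic submanifold hypothesis enters (via the preceding lemma); everything else reduces to the standard facts that interior products and exterior derivatives commute with pullback along an embedding to which the relevant vector fields are tangent, and that such restriction is a Lie algebra morphism.
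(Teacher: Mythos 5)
Your proof is correct and follows essentially the same route as the paper's: use the hypothesis $\mathcal{D}^X\subset TS$ to restrict $V^X$ to $S$, invoke the preceding lemma to get the $l$--symplectic structure $(\iota^*\omega_1,\ldots,\iota^*\omega_l)$ on $S$, and check that restrictions of $k$--Hamiltonian fields remain Hamiltonian for the pulled-back forms. You merely spell out the naturality of the interior product and the Lie-algebra-morphism property of restriction, which the paper leaves implicit.
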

\begin{proof}
    Let $X$ be a $k$--symplectic Lie system on $N$ with respect to the $k$--symplectic structure $(\omega_1,\ldots, \omega_k)$. Then, $V^X$ is a finite-dimensional real Lie algebra of $k$--Hamiltonian vector fields, i.e. if $Y\in V^X$, then $Y$ is a Hamiltonian vector field with respect to $\omega_1,\ldots, \omega_k$. Since $\mathcal{D}^X\subset TS$, we have that there exists $X|_S$ and $V^{X\vert_S}={\rm Lie}(\{X_t\vert_S\})$ is a finite-dimensional real Lie algebra of Hamiltonian vector fields with respect to $\iota^*\omega_1,\ldots, \iota^*\omega_l$, with $\iota$ being the embedding $\iota:S\hookrightarrow N$. Therefore, the restriction of $X$ to $S$ is an $l$--symplectic Lie system.
\end{proof}

Let us now turn to describing several properties of constants of motion for Lie
systems.

\begin{proposition}\label{IntLie} Let $X$ be a $k$--symplectic Lie system on a manifold $N$ with $k$--symplectic Lie--Hamilto\-nian structure $(N,\Omega, h)$. For each $\theta\in (\mathbb{R}^k)^*$, the space $\mathcal{I}^X_\theta\!$ of $t$-independent
  constants of motion of $X$  admissible relative to $\Omega_\theta$ is a Poisson algebra with respect to each Poisson bracket $\{\cdot,\cdot\}_\theta$ induced by $\Omega_\theta$.
\end{proposition}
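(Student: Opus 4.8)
The plan is to prove that $\mathcal{I}^X_\theta$ is a Poisson subalgebra of the Poisson algebra $({\rm Adm}(\Omega_\theta),\cdot,\{\cdot,\cdot\}_\theta)$; since a subalgebra of a Poisson algebra is again a Poisson algebra, this suffices. First I would rewrite the defining condition of $\mathcal{I}^X_\theta$ in a usable form. By Proposition \ref{NuX}, a function $f$ is a $t$-independent constant of motion of $X$ if and only if ${\rm d}f\in\mathcal{V}^X$; and since $\mathcal{D}^X$ is spanned by $V^X$ and $\mathcal{V}^X=(\mathcal{D}^X)^\circ$, this is equivalent to $Yf=0$ for every $Y\in V^X$. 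Hence $f\in\mathcal{I}^X_\theta$ exactly when $f\in{\rm Adm}(\Omega_\theta)$ and $Yf=0$ for all $Y\in V^X$. From this characterisation, $\mathcal{I}^X_\theta$ is immediately seen to be an $\mathbb{R}$-linear subspace of ${\rm Adm}(\Omega_\theta)$.

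Next I would verify closure under the pointwise product. If $f,g\in\mathcal{I}^X_\theta$, then $f\cdot g\in{\rm Adm}(\Omega_\theta)$ because ${\rm Adm}(\Omega_\theta)$ is an $\mathbb{R}$-algebra (indeed $fg$ has the Hamiltonian vector field $gX_f+fX_g$ relative to $\Omega_\theta$), while the Leibniz rule for a vector field acting on functions gives $Y(fg)=(Yf)g+f(Yg)=0$ for every $Y\in V^X$. Thus $f\cdot g\in\mathcal{I}^X_\theta$.

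The decisive step is closure under $\{\cdot,\cdot\}_\theta$, and this is where the $k$--symplectic hypothesis is really used. Because $X$ is a $k$--symplectic Lie system admitting the $k$--symplectic Lie--Hamiltonian structure $(N,\Omega,h)$, the Lie algebra $V^X$ consists of $k$--Hamiltonian, equivalently $\Omega$--Hamiltonian, vector fields (see the definition of a $k$--symplectic Lie system and Theorem \ref{HamLieSys}). Hence, by Proposition \ref{poissonderviation}, every $Y\in V^X$ acts as a derivation of the Lie algebra $({\rm Adm}(\Omega_\theta),\{\cdot,\cdot\}_\theta)$. Now let $f,g\in\mathcal{I}^X_\theta$. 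On the one hand $\{f,g\}_\theta\in{\rm Adm}(\Omega_\theta)$ because $({\rm Adm}(\Omega_\theta),\cdot,\{\cdot,\cdot\}_\theta)$ is a Poisson algebra; on the other hand, for every $Y\in V^X$,
\[
Y\{f,g\}_\theta=\{Yf,g\}_\theta+\{f,Yg\}_\theta=\{0,g\}_\theta+\{f,0\}_\theta=0,
\]
so $\{f,g\}_\theta$ is again a $t$-independent constant of motion of $X$ by Proposition \ref{NuX}. Therefore $\{f,g\}_\theta\in\mathcal{I}^X_\theta$.

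Finally, the bilinearity and antisymmetry of $\{\cdot,\cdot\}_\theta$, the Jacobi identity, and the Leibniz compatibility between $\{\cdot,\cdot\}_\theta$ and $\cdot$ all hold on $\mathcal{I}^X_\theta$ simply because they hold on the ambient Poisson algebra ${\rm Adm}(\Omega_\theta)$, of which $\mathcal{I}^X_\theta$ is a subalgebra. This shows $(\mathcal{I}^X_\theta,\cdot,\{\cdot,\cdot\}_\theta)$ is a Poisson algebra. I expect the only genuine obstacle to be the fact that the vector fields of $V^X$ act by derivations on $\{\cdot,\cdot\}_\theta$: this is exactly where one must invoke that $V^X$ is made of $k$--Hamiltonian vector fields, so the statement would fail for an arbitrary Lie system that is not $k$--symplectic.
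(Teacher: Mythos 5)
Your proposal is correct and follows essentially the same route as the paper's own proof: both arguments reduce the problem to the derivation property of Proposition \ref{poissonderviation}, which requires the elements of $V^X$ to be Hamiltonian relative to $\Omega_\theta$, and then conclude closure of $\mathcal{I}^X_\theta$ under sums, products and the bracket $\{\cdot,\cdot\}_\theta$. Your framing of the result as exhibiting a Poisson subalgebra of ${\rm Adm}(\Omega_\theta)$, together with the explicit verification via Proposition \ref{NuX} that $Yf=0$ for all $Y\in V^X$, merely makes explicit steps the paper leaves implicit.
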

\begin{proof}
Let  $f_1,f_2:N\rightarrow\mathbb{R}$ be two $t$-independent  constants of motion
for $X$, i.e. $X_tf_i=0$, for $i=1,2$ and $t\in \mathbb{R}$. As $X$ is a
$k$--symplectic Lie system, all the elements of $V^X$ are Hamiltonian vector fields with respect to each $\Omega_\theta$ with $\theta\in (\mathbb{R}^k)^*$.
Hence, we can write $X_t\{f,g\}_\theta=\{X_tf,g\}_\theta+\{f,X_tg\}_\theta$ for every
$f,g\in {\rm Adm}(\Omega_\theta)$ and $t\in\mathbb{R}$ (see Proposition \ref{poissonderviation}). In particular,
$X_t\{f_1,f_2\}_\theta=\{X_tf_1,f_2\}_\theta+\{f_1,X_tf_2\}_\theta=0$ for every $t\in\mathbb{R}$, i.e. the
Poisson bracket of $t$-independent constants of motion admissible relative to $\Omega_\theta$ is a new one.
As $\lambda f_1+\mu f_2$ and $f_1\cdot f_2$ are also $t$-independent constants
of motion for every $\lambda,\mu\in\mathbb{R}$, it follows that
$\mathcal{I}^X_\theta$ is a Poisson algebra with the bracket $\{\cdot,\cdot\}_\theta$ induced by the presymplectic form $\Omega_\theta$.

\end{proof}

Let us prove some final interesting results about the $t$--independent constants
of
motion for $k$--symplectic Lie systems.
\begin{proposition} Let $X$ be a $k$--symplectic Lie system on a manifold $N$ with $k$--symplectic Lie Hamiltonian structure $(N,\Omega, h)$. For each $\theta\in (\mathbb{R}^k)^*$, the function $f:N\rightarrow
\mathbb{R}$ is a constant of motion for $X$ admissible relative to $\Omega_\theta$ if and only if $f$ Poisson commutes
with all elements of each $\phi_\theta({\rm
Lie}(\{h_t\}_{t\in\mathbb{R}},\{\cdot,\cdot\}_{\Omega}))$.
\end{proposition}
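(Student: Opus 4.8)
The plan is to first translate the condition ``$f$ is a constant of motion of $X$ admissible relative to $\Omega_\theta$'' into a purely algebraic commutation relation between $f$ and the functions $\phi_\theta(h_t)$, and then to promote this relation from the generators $\{h_t\}_{t\in\mathbb{R}}$ to the whole Lie algebra $\mathcal{W}:={\rm Lie}(\{h_t\}_{t\in\mathbb{R}},\{\cdot,\cdot\}_\Omega)$ by means of the Jacobi identity. Fix $\theta\in(\mathbb{R}^k)^*$. Since $B_\Omega(h_t)=-X_t$ and $B_\Omega$ maps each $\Omega$--Hamiltonian function to minus its (unique, by Proposition \ref{unique}) $\Omega$--Hamiltonian vector field, the vector field $X_t$ is exactly the $\Omega$--Hamiltonian vector field of $h_t$. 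Hence, for $f\in{\rm Adm}(\Omega_\theta)$, Proposition \ref{poissonderviation} gives $X_tf=\{f,\phi_\theta(h_t)\}_\theta$ for every $t\in\mathbb{R}$, where $\phi_\theta(h_t)=\langle h_t,\theta\rangle$ is admissible relative to $\Omega_\theta$ by the earlier proposition relating $\Omega$--Hamiltonian functions to admissible functions of $\Omega_\theta$. Therefore $f$ is a $t$--independent constant of motion of $X$ admissible relative to $\Omega_\theta$ if and only if $\{f,\phi_\theta(h_t)\}_\theta=0$ for all $t\in\mathbb{R}$.

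With this reformulation in hand, one implication is immediate: if $f$ Poisson commutes with every element of $\phi_\theta(\mathcal{W})$, then in particular $\{f,\phi_\theta(h_t)\}_\theta=0$ for each $t$, since $\phi_\theta(h_t)\in\phi_\theta(\mathcal{W})$, and the above characterisation shows that $f$ is a constant of motion. For the converse, I would introduce the centraliser $C_f:=\{g\in{\rm Adm}(\Omega_\theta):\{f,g\}_\theta=0\}$. Bilinearity of $\{\cdot,\cdot\}_\theta$ makes $C_f$ a linear subspace, and the Jacobi identity for the Poisson bracket on ${\rm Adm}(\Omega_\theta)$ makes it closed under $\{\cdot,\cdot\}_\theta$; thus $C_f$ is a Lie subalgebra of $({\rm Adm}(\Omega_\theta),\{\cdot,\cdot\}_\theta)$. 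Since $\phi_\theta$ is a Lie algebra morphism (Proposition \ref{algebra}), the preimage $\phi_\theta^{-1}(C_f)$ is a Lie subalgebra of $(C^\infty(\Omega),\{\cdot,\cdot\}_\Omega)$. If $f$ is a constant of motion admissible relative to $\Omega_\theta$, then $\{h_t\}_{t\in\mathbb{R}}\subset\phi_\theta^{-1}(C_f)$, whence $\mathcal{W}={\rm Lie}(\{h_t\}_{t\in\mathbb{R}})\subset\phi_\theta^{-1}(C_f)$ and consequently $\phi_\theta(\mathcal{W})\subset C_f$, i.e. $f$ Poisson commutes with all of $\phi_\theta(\mathcal{W})$.

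I do not expect a serious obstacle here; the argument is a routine combination of Proposition \ref{poissonderviation}, Proposition \ref{algebra} and the elementary fact that the centraliser of an element in a Lie algebra is a Lie subalgebra. The only points demanding a little care are bookkeeping ones: ensuring that $f$ genuinely lies in ${\rm Adm}(\Omega_\theta)$ so that Proposition \ref{poissonderviation} applies (this is part of the hypothesis), and correctly identifying $X_t$ as the $\Omega$--Hamiltonian vector field associated with $h_t$ rather than with $-h_t$ or with some other admissible representative, so that signs and the passage through $\phi_\theta$ are consistent.
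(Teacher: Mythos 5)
Your proposal is correct and follows essentially the same route as the paper: reduce the statement to $\{f,\langle h_t,\theta\rangle\}_\theta=0$ for all $t$ via the derivation property of $X_t=X_{h_t}$, and then propagate commutation from the generators to all of $\phi_\theta({\rm Lie}(\{h_t\}_{t\in\mathbb{R}}))$ using the Jacobi identity. The only difference is cosmetic: the paper runs the Jacobi-identity induction directly on iterated brackets of the $\langle h_t,\theta\rangle$, while you package the same induction as the statement that the centraliser of $f$ is a Lie subalgebra whose preimage under the morphism $\phi_\theta$ contains the generators.
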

\begin{proof}
The function $f$ is a $t$-independent constant of motion for $X$ if and only if
\begin{equation}\label{con2}
0=X_tf=\{f,\langle h_t,\theta\rangle\}_{\theta},\qquad \forall t\in\mathbb{R},\qquad \forall \theta\in (\mathbb{R}^k)^*.
\end{equation}
From here,
$$
\{f,\{\langle h_t,\theta\rangle ,\langle h_{t'},\theta\rangle\}_{\theta}\}_{\theta}=\{\{f,\langle h_t,\theta\rangle \}_{\theta},\langle h_{t'},\theta\rangle\}_{\theta
}+\{\langle h_t,\theta\rangle,\{f,\langle h_{t'},\theta\rangle \}_{\theta}\}_{\theta}=0,\qquad \forall t,t'\in\mathbb{R},
$$
and inductively follows that $f$ Poisson commutes with all  successive Poisson
brackets of elements of $\{\langle h_t,\theta\rangle\}_{t\in\mathbb{R}}$ and their linear combinations.
As these elements span $\phi_\theta({\rm
Lie}(\{h_t\}_{t\in\mathbb{R}}))$, we get that $f$ Poisson commutes with $\phi_\theta({\rm
Lie}(\{h_t\}_{t\in\mathbb{R}}))$.

Conversely, if $f$ Poisson commutes with $\phi_\theta({\rm Lie}(\{h_t\}_{t\in\mathbb{R}}))$,
it Poisson commutes with the elements $\langle \{h_t\}_{t\in\mathbb{R}},\theta\rangle$, and, in view
of (\ref{con2}), it becomes a constant of motion for $X$ admissible relative to $\Omega_\theta$.
\end{proof}

Observe that every autonomous Hamiltonian system is a $k$--symplectic Lie system with respect a symplectic form $\omega$. Thus, it possesses
a $k$--Hamiltonian structure $(N,\Omega,h)$ with $h$ being a $t$-independent Hamiltonian. In consequence, the above proposition
shows that the time-independent first-integrals for a Hamiltonian system are those functions that Poisson commute with its Hamiltonian, recovering as a particular case this well-known result.

\section{Prolongations of k--symplectic Lie systems}
Our concern in this section is to prove that given a $k$--symplectic Lie system, its prolongations are also $k$--symplectic Lie systems. This enables us to apply our techniques to obtain some of their $t$-independent constants of motion and, through them, the superposition rule for initial (non-prolongated) $k$--symplectic Lie system \cite{CGM07}. To do so, let us define the prolongation of a section of a vector bundle (see \cite{CGLS13} for details).

Let $\tau:E\to N$ be a vector bundle. Its {\it diagonal prolongation} to $N^m$ is the Cartesian product bundle $E^{[m]}=E\times\cdots\times E$ of $m$ copies of $E$, viewed as a vector bundle over $N^m$ in a natural way:
 \begin{equation*}E^{[m]}_{(x_{(1)},\dots,x_{(m)})}\simeq E_{x_{(1)}}\oplus\cdots\oplus E_{x_{(m)}}\,.
\end{equation*}
Every section $e:N\to E$ of $E$ has a natural {\it diagonal prolongation} to a section $e^{[m]}$ of $E^{[m]}$:
\begin{equation*}
e^{[m]}(x_{(1)},\dots,x_{(m)})=e(x_{(1)})+\cdots +e(x_{(m)})\,.
\end{equation*}
Given a function $f:N\rightarrow \mathbb{R}$, we call {\it diagonal prolongation} of $f$ to $N^m$ the function
$\widetilde{f}^{[m]}(x_{(1)},\ldots,x_{(m)})= f(x_{(1)})+\ldots+f(x_{(m)})$.

The above construction can be used to define the diagonal prolongation of a $t$-dependent vector field $X$ on $N$, let us say
$$
X_t=\sum_{l=1}^nX^l(t,x)\frac{\partial}{\partial x^l}.
$$
Its diagonal prolongation to $N^m$ is the unique $t$-dependent vector field $\widetilde X^{[m]}$ on $N^m$ such that $\widetilde X^{[m]}_t=X^{[m]}_t$ for each $t\in\mathbb{R}$, namely
$$
X_t^{[m]}=\sum_{a=1}^m\sum_{l=1}^nX^l(t,x_{(a)})\frac{\partial}{\partial x_{(a)}^l},
$$
where $\{x^l_{(a)}\mid a=1,\ldots,m, l=1,\ldots,n=\dim N\}$ is the coordinate system on $N^m$ given by
defining $x^l_{(a)}(x_{(1)},\ldots,x_{(m)})=x^l(x_{(a)})$ for points $x_{(1)},\ldots,x_{(m)}\in N$.

\begin{proposition}\label{Theo:prol} If $X$ is a $k$--symplectic Lie system relative to $(\omega_1,\ldots,\omega_k)$, then
$\widetilde{X}^{[m]}$ is a $k$--symplectic Lie system relative to $(\omega_1^{[m]},\ldots,\omega_k^{[m]})$.
\end{proposition}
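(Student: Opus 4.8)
The plan is to check directly the two requirements in the definition of a $k$--symplectic Lie system: that $(\omega_1^{[m]},\ldots,\omega_k^{[m]})$ is a $k$--symplectic structure on $N^m$, and that $V^{\widetilde X^{[m]}}$ is a finite-dimensional real Lie algebra of $k$--Hamiltonian vector fields relative to it. Throughout I would use that the diagonal prolongation of a two-form $\omega$ on $N$ is $\omega^{[m]}=\sum_{a=1}^m{\rm pr}_a^*\omega$, with ${\rm pr}_a\colon N^m\to N$ the projection onto the $a$-th factor (the prolongation of forms used in \cite{CGLS13}), together with the canonical splitting $T_{(x_{(1)},\ldots,x_{(m)})}N^m\cong\bigoplus_{a=1}^m T_{x_{(a)}}N$, under which $\omega^{[m]}$ acts diagonally: $\omega^{[m]}(v,w)=\sum_{a=1}^m\omega(v_a,w_a)$ for $v=(v_1,\ldots,v_m)$ and $w=(w_1,\ldots,w_m)$.

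For the first requirement, each $\omega_i^{[m]}$ is closed because ${\rm d}$ commutes with pull-backs, and $\dim N^m=mn(k+1)=(mn)(k+1)$ has the form required of a $k$--symplectic manifold. For the kernel condition I would compute, from the diagonal action and by testing a prospective kernel vector against vectors supported on a single factor, that at each point $\ker\omega_i^{[m]}=\bigoplus_{a=1}^m\ker\omega_i(x_{(a)})$. Hence
\[
\bigcap_{i=1}^k\ker\omega_i^{[m]}=\bigcap_{i=1}^k\bigoplus_{a=1}^m\ker\omega_i(x_{(a)})=\bigoplus_{a=1}^m\bigcap_{i=1}^k\ker\omega_i(x_{(a)})=\{0\},
\]
using $(\ref{k-symp_cond})$ for $(\omega_1,\ldots,\omega_k)$. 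So $(\omega_1^{[m]},\ldots,\omega_k^{[m]})$ is a $k$--symplectic structure on $N^m$.

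For the second requirement, I would first recall the standard fact (see \cite{CGLS13,Dissertations}) that the diagonal prolongation $Y\mapsto\widetilde Y^{[m]}$ is an injective Lie algebra morphism and that $\widetilde X^{[m]}_t=\widetilde{(X_t)}^{[m]}$, so that $V^{\widetilde X^{[m]}}={\rm Lie}(\{\widetilde X^{[m]}_t\}_{t\in\mathbb R})=\{\widetilde Y^{[m]}\mid Y\in V^X\}$ is finite-dimensional, being isomorphic to $V^X$. It then remains to see that each $\widetilde Y^{[m]}$ is $k$--Hamiltonian. Writing $\widetilde Y^{[m]}=\sum_{a=1}^m Y_a$, where $Y_a$ is ${\rm pr}_a$-related to $Y$ and ${\rm pr}_b$-related to $0$ for $b\neq a$, and using $\iota_{Y_a}{\rm pr}_b^*\omega_i=\delta_{ab}\,{\rm pr}_a^*(\iota_Y\omega_i)$, I obtain that if $\iota_Y\omega_i={\rm d}h_i$ then
\[
\iota_{\widetilde Y^{[m]}}\omega_i^{[m]}=\sum_{a=1}^m{\rm pr}_a^*({\rm d}h_i)={\rm d}\Big(\sum_{a=1}^m h_i\circ{\rm pr}_a\Big)={\rm d}\,\widetilde{h_i}^{[m]},\qquad i=1,\ldots,k.
\]
Thus $\widetilde Y^{[m]}$ is Hamiltonian with respect to every $\omega_i^{[m]}$, i.e. $k$--Hamiltonian, so $V^{\widetilde X^{[m]}}$ is a finite-dimensional real Lie algebra of $k$--Hamiltonian vector fields and $\widetilde X^{[m]}$ is a $k$--symplectic Lie system relative to $(\omega_1^{[m]},\ldots,\omega_k^{[m]})$.

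All the computations here are routine; the only point deserving a little care is the identity $V^{\widetilde X^{[m]}}=\{\widetilde Y^{[m]}\mid Y\in V^X\}$, that is, that the \emph{whole} minimal Lie algebra of $\widetilde X^{[m]}$ — not just the prolongation of a basis of $V^X$ — consists of prolongations, which is what lets us conclude that all of $V^{\widetilde X^{[m]}}$ is $k$--Hamiltonian; this is guaranteed precisely because prolongation commutes with Lie brackets. Alternatively, the whole argument can be routed through Theorem \ref{HamLieSys}: take a $k$--symplectic Lie--Hamiltonian structure $(N,\Omega,h)$ for $X$ with $\Omega=\sum_i\omega_i\otimes e^i$, set $\Omega^{[m]}=\sum_i\omega_i^{[m]}\otimes e^i$ and $h_t^{[m]}=\sum_i\widetilde{(h_t)_i}^{[m]}\otimes e^i$, and observe that prolongation of $\Omega$--Hamiltonian functions is a morphism into $(C^\infty(\Omega^{[m]}),\{\cdot,\cdot\}_{\Omega^{[m]}})$, so that ${\rm Lie}(\{h_t^{[m]}\}_{t\in\mathbb R},\{\cdot,\cdot\}_{\Omega^{[m]}})$ is finite-dimensional and $(N^m,\Omega^{[m]},h^{[m]})$ is a $k$--symplectic Lie--Hamiltonian structure for $\widetilde X^{[m]}$.
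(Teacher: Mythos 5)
Your proof is correct, and for the part the paper actually writes out it follows essentially the same route: closedness of the $\omega_i^{[m]}$ via pull-backs, and triviality of $\bigcap_i\ker\omega_i^{[m]}$ by projecting a putative kernel vector onto each factor and invoking $\bigcap_i\ker\omega_i=\{0\}$ there (the paper tests against coordinate vector fields of the $r$-th factor and concludes $(\pi_r)_*X=0$ for every $r$, which is your $\ker\omega_i^{[m]}=\bigoplus_a\ker\omega_i(x_{(a)})$ in slightly different clothing). Where you go beyond the paper is the second half: the published proof stops after verifying that $(\omega_1^{[m]},\ldots,\omega_k^{[m]})$ is a $k$--symplectic structure and never checks that $V^{\widetilde X^{[m]}}$ is a finite-dimensional Lie algebra of $k$--Hamiltonian vector fields relative to it, implicitly relying on the prolongation facts from the Dirac--Lie systems paper. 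Your computation $\iota_{\widetilde Y^{[m]}}\omega_i^{[m]}={\rm d}\,\widetilde{h_i}^{[m]}$, together with the observation that diagonal prolongation is an injective Lie algebra morphism so that $V^{\widetilde X^{[m]}}$ consists entirely of prolongations of elements of $V^X$, supplies exactly the missing step, and your flagging of why the \emph{whole} minimal Lie algebra (not just a spanning set) is $k$--Hamiltonian is the right point to be careful about. The alternative argument via Theorem \ref{HamLieSys} is also sound and matches the spirit of how the paper later uses the prolonged $\Omega$--Hamiltonian functions $h^{\alpha,[2]}$ in the Schwarzian example.
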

\begin{proof}
Let us consider the diagonal prolongations $\omega^{[m]}_1,\ldots,\omega^{[m]}_k$. The differential of the diagonal prolongation of a differential form is the prolongation of the differential of the differential form (cf. \cite{CGLS13}). Hence, $\omega^{[m]}_1,\ldots,\omega^{[m]}_k$ are closed. Let us show that
$$\mathcal{D}\equiv\bigcap_{i=1}^k\ker \omega^{[m]}_i=\{0\}.$$
We define $\pi_r:N^m\rightarrow N$ to be the projection of $N^m$ onto the $r$--th component of $N^m$. If $X$ takes values in $\mathcal{D}$, then
$$
0=\omega_i^{[m]}\left(X,\frac{\partial}{\partial x^i_{(r)}}\right)=(\pi^*_r\omega_i)\left(X,\frac{\partial}{\partial x^i_{(r)}}\right),\qquad i=1,\ldots,k.
$$
Hence, $(\pi_{r})_*X\in\cap_{i=1}^k\ker(\pi_{r}^*\omega_i)=0$. So, $(\pi_{r})_*X=0$. Repeating the same for each $r$, we obtain $X=0$. Therefore, $(\omega_1^{[m]},\ldots,\omega_k^{[m]})$ is a $k$--symplectic structure.
\end{proof}

\begin{definition} Given a polysymplectic form $\Omega=\sum_{i=1}^k\omega_i\otimes e^i$ on $N$, its diagonal prolongation to $N^m$ is the polysymplectic form $\Omega^{[m]}=
\sum_{i=1}^k\omega^{[m]}_i\otimes e^i$.
\end{definition}

Let us illustrate the above notion through a remarkable example. Consider again the Schwarzian equation (\ref{Schwarz}) as a first-order system. Several works have dealt with a superposition rule for such equations \cite{CGLS13,LS13}. To obtain such a superposition rule, these works obtained three functionally independent constants of motion for the diagonal prolongation of (\ref{Schwarz}) to $\mathcal{O}_2^{[2]}$. Let us derive such constants of motion through the methods of this work in order to show the advantages of our approach.

Schwarzian equations are related to a two-symplectic structure $(\omega_1,\omega_2)$ on $\mathcal{O}_2$ given by (\ref{forms}). In view of Proposition \ref{Theo:prol}, their prolongations to $\mathcal{O}_2^{[2]}$ give rise to a two-symplectic structure on $\mathcal{O}_2^{[2]}$. Indeed, we have that the prolongations of $\omega_1$ and $\omega_2$ to $\mathcal{O}_2^{[2]}$ read
$$
\omega^{[2]}_1=\sum_{i=1}^2\frac{{\rm d}v_{(i)}\wedge {\rm d}a_{(i)}	 }{v_{(i)}},\qquad \omega^{[2]}_2=-\sum_{i=1}^2\frac 2{v_{(i)}^3}(x_{(i)}{\rm d}v_{(i)}\wedge {\rm d}a_{(i)}+v_{(i)}{\rm d}a_{(i)}\wedge {\rm d}x_{(i)}+a_{(i)}{\rm d}x_{(i)}\wedge {\rm d}v_{(i)}).
$$
Their kernels are given by
$$
\ker \omega^{[2]}_1= \left\langle \frac{\partial}{\partial x_{(1)}},\frac{\partial}{\partial x_{(2)}}\right\rangle, \qquad \ker \omega_2^{[2]}=\bigoplus_{i=1}^2\left\langle x_{(i)}\frac{\partial}{\partial x_{(i)}}+v_{(i)}\frac{\partial}{\partial v_{(i)}}+a_{(i)}\frac{\partial}{\partial a_{(i)}}\right\rangle.
$$
As proved in Proposition \ref{Theo:prol}, both kernels have zero intersection. Using (\ref{3KSHamFun}) and (\ref{3KSHamFunOmega2}), we obtain that the $k$--Hamiltonian functions for the diagonal prolongations to the vector fields (\ref{VFKS1}) to $\mathcal{O}^{[2]}_2$ read
\begin{equation*}
h^{1,[2]}=\sum_{i=1}^2\left(\frac{2}{v_{(i)}}\otimes e^{1}-\frac{4x_{(i)}}{v_{(i)}}\otimes e^2\right),\qquad h^{2,[2]}=\sum_{i=1}^2\left[\frac{a_{(i)}}{v_{(i)}^2}\otimes e^1+\left(2-\frac{2a_{(i)}x_{(i)}}{v_{(i)}^2}\right)\otimes e^2\right]
\end{equation*}
and
\begin{equation*}
h^{3,[2]}=\sum_{i=1}^2\left[\frac{a_{(i)}^2}{2v_{(i)}^3}\otimes e^1+\left(\frac{2a_{(i)}}{v_{(i)}}-\frac{a_{(i)}^2x_{(i)}}{v_{(i)}^3}\right)\otimes e^2\right].
\end{equation*}
It follows that
$$
\left\{h^{1,[2]},h^{2,[2]}\right\}_{\Omega^{[2]}}\!=h^{1,[2]},\qquad\qquad \left\{h^{1,[2]},h^{3,[2]}\right\}_{\Omega^{[2]}}\!=2h^{2,[2]},\qquad\qquad \left\{h^{2,[2]},h^{3,[2]}\right\}_{\Omega^{[2]}}\!=h^{3,[2]}.
$$
So, these functions close a Lie algebra isomorphic to $\mathfrak{sl}(2,\mathbb{R})$. Next, we will use the derived algebras to obtain several $t$-independent constants of motion for these systems.

We can induce from $\Omega^{[2]}$ several presymplectic structures $\Omega^{[2]}_\xi$ contracting $\Omega^{[2]}$ with an element of $\xi\in(\mathbb{R}^2)^*$, i.e. $\Omega^{[2]}_\xi=\langle \Omega^{[2]},\xi\rangle$. For instance, let $\{\theta_1,\theta_2\}$ be the dual basis to $\{e^1,e^2\}$. We therefore have the presymplectic forms
$$
\Omega_{\xi_1}\equiv\langle\Omega^{[2]},\theta_1\rangle=\omega^{[2]}_1, \qquad \Omega_{\xi_2}\equiv\langle \Omega^{[2]},\theta_2\rangle =\omega^{[2]}_2.
$$
From Proposition \ref{algebra}, the Hamiltonian functions $(h^{1,[2]})_\xi,(h^{2,[2]})_\xi,(h^{3,[2]})_\xi$, for every $\xi\in(\mathbb{R}^2)^*$, span a real Lie algebra $\mathfrak{W}$ such that $\mathfrak{sl}(2,\mathbb{R})$ is a Lie algebra extension. Since $\mathfrak{sl}(2,\mathbb{R})$ is simple, $\mathfrak{W}$ is isomorphic to $\mathfrak{sl}(2,\mathbb{R})$ or zero.

If the Lie algebra is isomorphic to $\mathfrak{sl}(2,\mathbb{R})$, it was proved in \cite{CGLS13} that  $\{C_\xi,(h_ i)_\xi\}_\xi=0$, where $i=1,2,3$, $\{\cdot,\cdot\}_\xi$ is the Poisson bracket on the space of admissible functions of $\Omega^{[2]}_\xi$ and
$$
C_\xi=(h^{1,[2]})_\xi (h^{3,[2]})_\xi-(h^{2,[2]})_\xi^2.
$$
It is relevant that $C_\xi$ can be obtained from a Casimir element of a Lie algebra isomorphic to $\mathfrak{sl}(2,\mathbb{R})$ constructed induced by $h^{1,[2]},h^{2,[2]},h^{3,[2]}$.  Observe that $C_\xi$ is a $t$-independent constant of motion for the prolongated system $\widetilde{X}^{[2]}_{3KS}$. More generally, a similar procedure can be developed for other Lie algebras of functions associated to $k$--symplectic Lie systems.

If we write $\xi=\lambda_1\theta_1+\lambda_2\theta_2$, with $\lambda_1,\lambda_2\in\mathbb{R}$, we have that
$$
C_\xi=[\lambda_1(h^{1,[2]})_{\xi_1}+\lambda_2(h^{1,[2]})_{\xi_2}] [\lambda_1(h^{3,[2]})_{\xi_1}+\lambda_2(h^{3,[2]})_{\xi_2}]-[\lambda_1(h^{2,[2]})_{\xi_1}+\lambda_2(h^{2,[2]})_{\xi_2}]^2
$$
and
\begin{multline*}
C_\xi=\lambda_1^2[(h^{1,[2]})_{\xi_1}(h^{3,[2]})_{\xi_1}-(h^{2,[2]})_{\xi_1}^2]+\lambda_2^2[(h^{1,[2]})_{\xi_2}(h^{3,[2]})_{\xi_2}-(h^{2,[2]})_{\xi_2}^2]+\\\lambda_1\lambda_2[(h^{1,[2]})_{\xi_1}(h^{3,[2]})_{\xi_2}+(h^{3,[2]})_{\xi_1}(h^{1,[2]})_{\xi_2}-2(h^{2,[2]})_{\xi_2}(h^{2,[2]})_{\xi_1}].
\end{multline*}
So, we can write $C_\xi=\lambda_1^2C_{\xi_1}+\lambda_2^2C_{\xi_2}+\lambda_1\lambda_2F_{\xi_1\xi_2}$ where $C_{\xi_1}$, $C_{\xi_2}$ and $F_{\xi_1\xi_2}$ are three constants of motion given by
$$
C_{\xi_1}=(h^{1,[2]})_{\xi_1}(h^{3,[2]})_{\xi_1}-(h^{2,[2]})_{\xi_1}^2=\frac{(a_2 v_1 - a_1 v_2)^2}{v_1^3 v_2^3},
$$
$$
\qquad
C_{\xi_2}=(h^{1,[2]})_{\xi_2}(h^{3,[2]})_{\xi_2}-(h^{2,[2]})_{\xi_2}^2=-4\left(-x_1x_2+\frac{2v_1v_2(v_1x_2-v_2x_1)}{a_1v_2-v_1a_2}\right)\frac{(a_2 v_1 - a_1 v_2)^2}{v_1^3 v_2^3}-4^2,
$$
\begin{multline*}
F_{\xi_1\xi_2}=(h^{1,[2]})_{\xi_1}(h^{3,[2]})_{\xi_2}+(h^{3,[2]})_{\xi_1}(h^{1,[2]})_{\xi_2}-2(h^{2,[2]})_{\xi_2}(h^{2,[2]})_{\xi_1}\\=-\frac{2(a_2v_1-v_2a_1)^2}{v_1^3v_2^3}\left(x_1+x_2-\frac{2v_1v_2(v_1-v_2)}{a_1v_2-v_1a_2}\right).
\end{multline*}

Using that $C_{\xi_1}$ is a $t$-independent constant of motion, $C_{\xi_2},F_{\xi_1\xi_2}$ allow us to define three simpler $t$-independent constants of motion $F_1,F_3,F_4$:
$$\begin{gathered}
F_1=x_1x_2-\frac{2v_1v_2(v_1x_2-v_2x_1)}{a_1v_2-v_1a_2},\qquad F_3=x_1+x_2-\frac{2v_1v_2(v_1-v_2)}{a_1v_2-v_1a_2},\\ F_4=\sqrt{F_3^2-4F_1+\frac{16 }{C_{\xi_1}}}=x_1-x_2-\frac{2v_1v_2(v_1+v_2)}{a_1v_2-v_1a_2}.
\end{gathered}
$$
The $t$-independent constants of motion $C_{\xi_2},F_3$ and $F_4$ are the first-integrals employed in \cite{CGLS13,LS13} to obtain the superposition rule for Schwarzian equations in first-order form. In those works, $C_{\xi_2},F_3,F_4$ were obtained by means of several geometric methods. In \cite{LS13} they were derived by means of the method of characteristics, which is quite long and tedious. In \cite{CLS13}, the techniques for Dirac--Lie systems enabled us to obtain $F_1$ and $C_{\xi_2}$. Meanwhile, $F_4$ had to be obtained through a Lie symmetry. In this work, $C_{\xi_2},F_3,F_4$ appear simultaneously from the $k$--symplectic structure of Schwarzian equations. This is the key point of the usefulness of this approach to obtain superposition rules.  The $k$--symplectic structure provides a framework to exploit the geometric properties of $k$--symplectic Lie system better than Dirac--Lie systems.

\section{Conclusions and Outlook}\label{Outlook}
We have described the main properties of a new type of Lie systems, the $k$--symplectic Lie systems. This has led to describe  new Poisson structures related
to $k$--symplectic structures as well as the description of new methods to study Lie systems, e.g. their superposition rules, constants of motion, etc. On  the advantages of the $k$--symplectic structures is that only the $k$--symplectic structures provide a geometric framework containing all the geometric structure of a $k$--symplectic Lie system. Finally, this paper opens a new setting of applications of the $k$--symplectic structures, since we use this geometrical structures for studying systems of differential equations. At present, the $k$--symplectic geometry is applied to the study of first-order classical field theories.

In the future we aim to develop a theory of momentum maps for $k$--symplectic Lie systems as well as to study the structure of restrictions of $k$--symplectic Lie systems to $k$--symplectic isotropic/coisotropic and Lagrangian $k$--symplectic submanifolds. We also aim to investigate in depth the existence of $k$--symplectic Lie systems on $\mathbb{R}^3$.

Moreover, we plan to extend our methods to the realm of the so-called PDE Lie systems: the natural generalisation of Lie systems to partial differential equations. It seems that the $k$--symplectic theory can be employed in this topic to provide a new geometric framework for the description of such systems. We hope to apply our findings to new
interesting PDE Lie systems of relevance.

\section*{Acknowledgements}

Research of J. de Lucas is partially financed by research projects MTM2010-12116-E (Ministerio de Ciencia e Innovaci\'on) and Polish National Science Centre grant HARMONIA
under the contract number DEC-2012/04/M/ST1/00523. Research of S. Vilari\~{n}o is partially financed by research projects MTM2011-15725-E and MTM2011-2585 (Ministerio de Ciencia e Innovaci\'{o}n) and E24/1 (Gobierno de Arag\'on).


\begin{thebibliography}{15}
\bibitem{FM}
R. Abraham and  J.E. Marsden,
Foundations of Mechanics. Second Edition, (Redwood City: Addison--Wesley), 1987.


\bibitem{CRC94}
R.L. Anderson, V.A. Baikov, R.K. Gazizov, W. Hereman, N.H. Ibragimov,  F.M. Mahomed, S.V. Meleshko, M.C. Nucci, P.J. Olver, M.B. Sheftel', A.V. Turbiner
and E.M. Vorob'ev,  CRC handbook of Lie group analysis of
differential equations. Vol. { 1}. Symmetries, exact solutions and conservation laws, (Boca Raton: CRC Press), 1994.

\bibitem{AHW81}
R.L. Anderson, J. Harnad  and P. Winternitz,
Group theoretical approach to superposition rules for systems of Riccati
equations, \textit{Lett. Math. Phys.} {5} (1981) 143--148.

\bibitem{ADR12}
R.M. Angelo, E.I. Duzzioni  and A.D. Ribeiro,
Integrability in $t$-dependent systems with
one degree of freedom, \textit{J. Phys. A: Math. Theor.} {45} (2012) 055101.

\bibitem{A89}
V.I. Arnold,
Mathematical methods of classical mechanics. Second edition,
(New York: Springer-Verlag), 1989.

\bibitem{Aw-1992}
A. Awane, $k$--symplectic structures, \textit{J. Math. Phys.} {33} (1992) 4046--4052.

\bibitem{BBHLS13}
A. Ballesteros, A. Blasco, F. Herranz, J.de~Lucas and  C.~S. Sard\'on,
Lie--Hamilton systems on the plane: theory, generalisations and applications,
arXiv:1311.0792.


\bibitem{BCHLS13}
A. Ballesteros, J.~F. Cari\~nena, F. Herranz, J. de~Lucas and C.~S. Sard\'on,
From constants of motion to superposition rules for Lie--Hamilton
  systems.
\newblock {J. Phys. A: Math. Theor.}{  46 } (2013) 285203.


\bibitem{Be07}
L.M. Berkovich,
Method of factorization of ordinary differential operators and some of its
applications,
\textit{Appl. Anal. Discret. Math.} {1} (2007) 122--149.


\bibitem{Be84}
M.V. Berry,
Quantal phase factors accompanying adiabatic changes,
\textit{Proc. R. Soc. Lond. Ser.  A} { 392} (1984) 45--57.

\bibitem{CCR03}
J. F. Cari\~{n}ena,  J. Clemente-Gallardo and A. Ramos,
 Motion on Lie groups and its applications in control theory, Proceedings of the XXXIV Symposium on Mathematical Physics (Torun, 2002). \textit{Rep. Math. Phys.} {51} no. 2-3  (2003) 159--170.

\bibitem{CGL09}
J.F. Cari\~nena, J. Grabowski and J. de~Lucas,
Quasi-Lie schemes: theory and applications, \textit{J. Phys. A: Math. Theor.} {42} (2009) 335206.


\bibitem{CGL11}
J.F. Cari\~nena, J. Grabowski and J. de~Lucas,
Superposition rules for higher-order differential equations, and their
applications,
\textit{J. Phys. A: Math. Theor.} {45} (2012) 185202.

\bibitem{CGLS13}
J.F. Cari\~nena, J. Grabowski, J. de~Lucas and C. Sard\'{o}n,
Dirac--Lie systems and Schwarzian equations,
\textit{J. Differential equations} {257} (2014) 2303--2340.

\bibitem{CGM00}
J.F. Cari\~nena, J. Grabowski and G. Marmo,
Lie--Scheffers systems: a geometric approach. (Naples: Bibliopolis), 2000.



\bibitem{CGM07}
J.F. Cari\~nena, J. Grabowski and G. Marmo,
Superposition rules, Lie theorem and partial differential equations,
\textit{Rep. Math. Phys.} {60} (2007) 237--258.

\bibitem{CarRamGra}
J.F. Cari\~nena, J. Grabowski and A. Ramos,
Reduction of time-dependent systems admitting a superposition principle,
\textit{Acta Appl. Math.} {66} (2001) 67--87.

\bibitem{CGR09}
J.F. Cari\~nena, P. Guha and M.F. Ra\~nada,
A geometric approach to higher-order Riccati chain: Darboux polynomials and
constants of the motion,
\textit{J. Phys.: Conf. Ser.} {175} (2009) 012009.

\bibitem{CLLEmd}
J.F. Cari\~nena, P.G.L. Leach and J. de~Lucas,
Quasi-Lie schemes and Emden--Fowler equations,
\textit{J. Math. Phys.} {50} (2009) 103515.

\bibitem{Dissertations}
J.F. Cari\~nena and J. de~Lucas,
{\it Lie systems: theory, generalisations, and applications}.
Diss. Math. (Rozprawy Math.)  {479}. Warsaw: Institute of Mathematics of the Polish Academy of Sciences, 2011.

\bibitem{IntRic11}
J.F. Cari\~nena and J. de Lucas,
{\it Integrability of Lie systems through Riccati equations},
\textit{J. Nonlinear Math. Phys.} 18 (2011) 29--54.

\bibitem{CLR08}
J.F. Cari\~nena, J. de~Lucas and M.F. Ra\~nada,
Recent applications of the theory of Lie systems in Ermakov systems,
\textit{SIGMA Symmetry Integrability Geom. Methods Appl.} {4} (2008) 031.

\bibitem{CLS13}
J.F. Cari\~nena, J. de~Lucas and C. Sard\'{o}n,
Lie--Hamilton systems: theory and applications,
\textit{Int. J. Geom. Methods Mod. Phys.} { 10} (2013) 1350047.




\bibitem{CV13}
J.N. Clelland  and P.J. Vassiliou,
A solvable string on a Lorentzian surface,
\textit{Differential Geom. Appl.} { 33} (2014) 177--198.


\bibitem{Clem06}
J. Clemente-Gallardo,
On the relations between control systems and Lie systems, in: \textit{Groups,
geometry and physics. Monogr. Real Acad. Ci. Exact. F\'is.-Qu\'im. Nat.
Zaragoza} {29}. (Zaragoza: Acad. Cienc. Exact. F\'is. Qu\'im. Nat. Zaragoza), 2006, pp.  65--68.

\bibitem{Co90}
T.J. Courant,
Dirac manifolds,
{\it Trans. Amer. Math. Soc.} { 319} (1990) 631--661.

\bibitem{Darboux}
G. Darboux,
Sur la th\'{e}orie des coordinne\'{e}s cuvilignes et les syst\'{e}mes orthogonaux.
\textit{Ann. Ec. Norm. Sup\'{e}r.} {7} (1878) 101--150.


\bibitem{Ru08}
R. Flores-Espinoza,
Monodromy factorization for periodic Lie systems and reconstruction
phases, in: 
{\it Geometric methods in Physics}. AIP Conf. Proc. { 1079}. (New York: Amer. Inst. Phys.), 2008, pp. 189--195.

\bibitem{Ru10}
R. Flores-Espinoza,
Periodic first integrals for Hamiltonian systems of Lie type,
\textit{Int. J. Geom. Methods Mod. Phys.} {8} (2011) 1169--1177.


\bibitem{GKO92}
A. Gonz\'{a}lez-L\'{o}pez, N. Kamran and P.J. Olver,
 Lie algebras of vector fields in the real
plane, \textit{Proc. London Math. Soc.} {64} (1992) 339--368.



\bibitem{GL13}
J. Grabowski and J. de Lucas,
Mixed superposition rules and the Riccati hierarchy.
\textit{J. Differential Equations} {254} (2013) 179--198.

\bibitem{Gu-1987}
C. G\"{u}nther,
The polysymplectic Hamiltonian formalism in field theory and calculus of variations I: The local case,
\textit{J. Differential Geom.} {25} (1987)  23--53.

\bibitem{Ib00}
N. H. Ibragimov,
{Discussion of Lie's nonlinear superposition theory}, in: {\sl Proceedings of the International
Conference of Modern Group Analysis for the New Millennium (MOGRAN '00)}. (Russia: Ufa), 2000.

\bibitem{Ib09}
N. H. Ibragimov,
Utilization of canonical variables for integration of systems of first-order
differential equations,
\textit{ALGA} {6} (2009) 1--18.

\bibitem{JT12}
M. Jotz and T. S. Ratiu,
Dirac structures, nonholonomic systems and reduction
\textit{Rep. Math. Phys.} {69} (2012) 5--56.

\bibitem{KKW08}
V. Kravchenko, V. Kravchenko and B. Williams,
A quaternionic generalisation of the Riccati differential equation, in: {\sl Clifford Analysis and Its Application} (Dordrecht: Kluwer Acad. Publ.),  2001, pp. 143--154.

\bibitem{LLS11}
N. Lanfear, R.M. L\'opez  and  S.K. Suslov,
Exact wave functions for generalized harmonic oscillators,
\textit{J. Russ. Laser Research} { 32} (2011) 352--361.

\bibitem{LMS-1988}
M. de Le\'{o}n, I. M\'{e}ndez and M. Salgado,
Regular $p$-almost cotangent structures,
\textit{J. Korean Math. Soc.} {25} (1988) 273--287.

\bibitem{LMS-1993}
M. de Le\'{o}n, I. M\'{e}ndez and M. Salgado,
$p$-almost cotangent structures,
\textit{Boll. U.M.I}. {A7(7)} (1) (1993) 97--107.

\bibitem{LV-2012}
M. de Le\'{o}n and S. Vilari\~{n}o,
Lagrangian submanifolds in $k$--symplectic settings, \textit{Monatshefte f\"{u}r Mathematik} {170} (3-4) (2013) 381--404.



\bibitem{LM87}
P. Libermann  and C.-M. Marle,
Symplectic Geometry and Analytical Mechanics, (Dordrecht: D. Reidel Publishing Co.), 1987.

\bibitem{LS}
S. Lie  and G. Scheffers,
Vorlesungen \" uber continuierliche Gruppen mit geometrischen und
  anderen Anwendungen,
(Leipzig: Teubner), 1893.

\bibitem{LZ11}
Si-Qi Liu and  Y. Zhang,
Jacobi structures of evolutionary partial differential equations,
{\it Advances in Mathematics}
{ 227} (2011) 73--130.


\bibitem{LS13}
 J. de~Lucas   and C. Sard\'{o}n,
On Lie systems and Kummer--Schwarz equations,
\textit{J. Math. Phys.} { 54} (2013) 033505.

\bibitem{MRSV13}
J.C. Marrero,  N. Rom\'an-Roy, M. Salgado  and S. Vilari\~no,
Reduction of polysymplectic manifolds,
ArXiv:1306.0337v2.

\bibitem{merino}
E. Merino,
Geometr\'{\i}a $k$--simpl\'{e}ctica y $k$--cosimpl\'{e}ctica. Aplicaciones a las teor\'{\i}as cl\'{a}sicas de campos,
(Publicaciones del Dtp de Geometr\'{\i}a y Topolog\'{\i}a 87. Universidad de Santiago de Compostela), 1997.

\bibitem{MRS04}
F. Munteanu, A.M. Rey and M. Salgado,
    The G\"{u}nther's formalism in classical field theory: momentum map and reduction,
    \textit{J. Math. Phys.}, { 45} (2004) 1730--1751.

\bibitem{NM13}
J.C. Ndogmo   and F.M. Mahomed,
On certain properties of linear iterative equations.
\textit{Center European J. Math.} { 56}  (2013) 34--36.

\bibitem{Ni00}
S. Nikitin,
Control synthesis for \u{C}aplygin polynomial systems,
\textit{Acta Appl. Math.} { 60}  (2000) 199--212.

\bibitem{Ol91}
P.J. Olver,
Applications of Lie groups to differential equations, Graduate Texts in
Mathematics { 107}, (New York: Springer-Verlag), 1993.

\bibitem{JPOT}
J.P. Ortega  and T.S. Ratiu,
Momentum maps and Hamiltonian reduction, Prog. Math. { 222},
 (Boston: Birkh\"auser Boston, Inc.), 2004.

\bibitem{OT09}
V. Ovsienko and S. Tabachnikov,
What is the {S}chwarzian derivative,	
\textit{Notices of the AMS}\, { 56} (2009) 34--36.

\bibitem{Palais}
R.S. Palais,
Global formulation of the Lie theory of transformation groups,
{\it Mem. Amer. Math. Soc.} {22} (1957) 1--123.

\bibitem{Pi13}
G. Pietrzkowski,
 Explicit solutions of the $\mathfrak{a}_1$-type Lie--Scheffers system and a general Riccati equation,
\textit{J. Dyn. and Control Sys.} { 18}  (2012) 551--571.

\bibitem{Ra06}
A. Ramos,
New links and reductions between the brockett nonholonomic integrator and related systems,
\textit{Rend. Semin. Mat. Univ. Politec. Torino} { 64} (2006) 39--54.




\bibitem{WintSecond}
C. Rogers, W.K. Schief  and P. Winternitz,
Lie-theoretical generalisation and discretization of the Pinney Equation,
\textit{J. Math. Anal. Appl.} {216} (1997) 246--264.

\bibitem{RSV07}
N. Rom\'an-Roy,  M. Salgado and S. Vilari\~no,
Symmetries and conservation laws in the Gunther $k$--symplectic formalism of field theory,
\textit{ Rev. Math. Phys.}  { 19}(10) (2007) 1117--1147.

\bibitem{RA}
D.E. Rourke   and M.P. Augustine,
{Exact linearization of the radiation-damped spin system},
\textit{Phys. Rev. Lett. }{84} (2000) 1685--1688.


\bibitem{SSV11}
 E. Suazo, S.K. Suslov  and J. M. Vega-Guzm\'an,
 The Riccati equation and a diffusion-type equation,
\textit{ New York J. Math.} { 17} (2011) 225--244.

\bibitem{IV}
I. Vaisman,
Lectures on the geometry of Poisson Manifolds,
(Basel: Birkh\"auser Verlag), 1994.

\bibitem{Va13}
P.J. Vassilou,
Cauchy problem for a Darboux integrable wave map system and equations of Lie type,
\textit{SIGMA} { 9}  (2013) 024.

\bibitem{WTC83}
J. Weiss,  M. Tabor and G. Carnevale,
The Painlev\'e property for partial differential equations,
\textit{J. Math. Phys.} {24}  (1983) 522--526.

\bibitem{Wi09}
P. Wilczy\'nski,   Quaternionic-valued ordinary differential equations, The Riccati equation,
\textit{ J. Differential Equations} {247} (2009) 2163--2187.

\bibitem{PW}
P. Winternitz,
Lie groups and solutions of nonlinear differential equations, in:
{ Nonlinear phenomena. Lect, Notes Phys.} { 189},
(Oaxtepec: Springer-Verlag), 1983, pp. 263--331.

\bibitem{YM06}
H. Yoshimura and J. E. Marsden,
Dirac Structures in Lagrangian Mechanics. Part I: Implicit Lagrangian Systems,
{\it J. Geom. Phys.} { 57} (2006), 133--156.


\end{thebibliography}
\end{document}